\newcommand{\Comment}[1]{{\footnotesize\ttfamily\textcolor{blue}{/* #1 */}}}
\theoremstyle{plain}
\newtheorem{theorem}{Theorem}[section]
\newtheorem{lemma}[theorem]{Lemma}
\newtheorem{proposition}[theorem]{Proposition}
\theoremstyle{plain}
\newtheorem{definition}{Definition}[section] % definition numbers are dependent on theorem numbers
\newtheorem{example}[definition]{Example}
\theoremstyle{plain}
\newtheorem{assumption}{Assumption}
\newcommand{\val}{v}
\newcommand{\valSeq}{V}
\newcommand{\valDist}{H}
\newcommand{\allocPay}{G}
\newcommand{\allocPayConv}{G_{\mathrm{conv}}}
\newcommand{\allocPayConcaveEnvelope}{G_{\mathrm{concave}}}
\newcommand{\optAllocPay}{\bar G}
\newcommand{\xConv}{x_{\mathrm{conv}}}
\newcommand{\yConv}{y_{\mathrm{conv}}}
\newcommand{\CDF}{F}
\newcommand{\CDFConv}{F_{\mathrm{conv}}}
\newcommand{\bidConv}{\bid_{\mathrm{conv}}}
\newcommand{\empiricalCDF}{\widehat{\CDF}}
\newcommand{\optimisticCDF}{\bar{\CDF}}
\newcommand{\estimatedCDF}{\CDF^\dag}
\newcommand{\bid}{b}
\newcommand{\competingbid}{d}
\newcommand{\competingbidDist}{F}
\newcommand{\alloc}{x}
\newcommand{\reward}{r}
\newcommand{\payment}{p}
\newcommand{\TotalRegret}{\mathrm{Regret}}
\newcommand{\TotalReward}{\mathrm{Reward}}
\newcommand{\TotalPayment}{\mathrm{Payment}}
\newcommand{\TotalROIViolation}{\mathrm{\Delta}}
\newcommand{\Algorithm}{\cc{ALG}}
\newcommand{\algWithF}{\cc{ALG1}}
\newcommand{\ConsAlg}{\cc{ALG1}^{\cc{c}}}
\newcommand{\opt}{\cc{OPT}}
\newcommand{\N}{\mathbb{N}}
\newcommand{\eps}{\varepsilon}
\newcommand{\tildeO}{\widetilde{O}}
\newcommand{\prob}[2][]{\text{Pr}\ifthenelse{\not\equal{}{#1}}{_{#1}}{}\!\left[{\def\givenn{\middle|}#2}\right]}
\newcommand{\indicator}[2][]{\mathbf{1}\ifthenelse{\not\equal{}{#1}}{_{#1}}{}\!\left\{{\def\givenn{\middle|}#2}\right\}}
\newcommand{\expect}[2][]{\mathbb{E}\ifthenelse{\not\equal{}{#1}}{_{#1}}{}\!\left[{\def\givenn{\middle|}#2}\right]}
\newcommand{\cc}[1]{\ensuremath{\mathsf{#1}}}
\title{No-Regret Online Autobidding Algorithms in First-price Auctions}
\author[1]{Yuan Deng\thanks{\texttt{dengyuan@google.com}}}
\author[2]{Yilin Li\thanks{\texttt{ylli25@cse.cuhk.edu.hk}}}
\author[2]{Wei Tang\thanks{\texttt{weitang@cuhk.edu.hk}}}
\author[2]{Hanrui Zhang\thanks{\texttt{hanrui@cse.cuhk.edu.hk}}}
\affil[1]{Google Research}
\affil[2]{Chinese University of Hong Kong}
\date{}
\begin{document}

\maketitle

\begin{abstract}
  Automated bidding to optimize online advertising with various constraints, e.g. ROI constraints and budget constraints, is widely adopted by advertisers. A key challenge lies in designing algorithms for non-truthful mechanisms with ROI constraints. While prior work has addressed truthful auctions or non-truthful auctions with weaker benchmarks, this paper provides a significant improvement: We develop online bidding algorithms for repeated first-price auctions with ROI constraints, benchmarking against the optimal randomized strategy in hindsight. In the full feedback setting, where the maximum competing bid is observed, our algorithm achieves a near-optimal $\tildeO(\sqrt{T})$ regret bound, and in the bandit feedback setting (where the bidder only observes whether the bidder wins each auction), our algorithm attains $\tildeO(T^{3/4})$ regret bound.\footnote{A conference version of this work has been accepted in the proceeding of the 39th Conference on Neural Information Processing Systems (NeurIPS'25).} 
\end{abstract}

\section{Introduction}
{\em Autobidding} has become the predominant paradigm in online advertising.  The idea is that advertisers compete for ad opportunities through repeated auctions, where each advertiser employs an automated bidding algorithm (i.e., an autobidder) to bid on behalf of the advertiser.  Autobidders are essentially online constrained optimizers: As auctions happen over time, an autobidder maximizes on the fly a certain objective quantity, e.g., clicks or conversions, subject to constraints, e.g., the total spending cannot exceed a predetermined budget, or the ratio between the return and the investment (i.e., the return-on-investment, or ROI ratio) must be at least a given target ratio.  Normally, advertisers are free to customize these components of autobidders to better serve their own goals.

Over years, advertisers seem to have converged to the following common practice: An overwhelming majority of advertisers use autobidders that behave like {\em ROI-constrained value maximizers}.  This highlights the importance of designing good online bidding algorithms in this very setting.  The problem, roughly speaking, is the following: Initially, an auction mechanism (e.g., first-price or second-price) is chosen externally by the advertising platform, which is used in all subsequent auctions.  At each step, an auction happens, and the algorithm observes the ``value'' of winning in this auction.  Then, the algorithm must immediately submit a bid based on historical observations and the value, after which the algorithm observes the outcome of the auction, i.e., whether it has won and possibly also the highest bid by other advertisers.  In the long run, the algorithm must (approximately) maximize the total value obtained over all auctions, subject to the constraint that the overall ROI ratio is at least a predetermined target ratio.  The key challenge here is that at each step, the algorithm must decide the bid without knowing the future, or even the competing bid in the current auction.  Nonetheless, it must ensure that its decisions are almost optimal from hindsight (i.e., it has {\em no regret}), and the ROI constraint is approximately satisfied over the entire horizon.

Indeed, considerable effort has been invested into this task.  Here, we refrain from a prolonged discussion (see Section~\ref{sec:related_work} for further related work) and focus on the most related results.  \citet{feng2023online} study the problem when the auction mechanism is {\em truthful}, and give an almost optimal no-regret online bidding algorithm.  Despite its undoubted importance, one severe limitation of their result is that it does not apply to some of the most common auction mechanisms in reality, including first-price auctions and generalized second-price auctions --- for these mechanisms are not truthful.  To address this issue, \citet{aggarwal2025no} design alternative no-regret bidding algorithms that work for nontruthful auctions.  However, their algorithms only have no regret against a weaker benchmark, i.e., the optimal {\em Lipschitz-continuous} bidding strategy from hindsight, which in general can be much worse than the unconditionally optimal bidding strategy. %\wtcomment{which can be substantially suboptimal compared to the fully general optimal bidding strategy.}% 
This leaves the following question open: {\em Is it possible to design a no-regret online bidding algorithm against the unconditionally optimal strategy for nontruthful auction mechanisms?}

\paragraph{Our results.}
We provide an affirmative answer to the above question.  For concreteness, we derive our results for first-price auctions (which is the most representative nontruthful auction mechanism), but our results easily generalize to other auction mechanisms.  We consider two natural feedback models commonly studied in the literature:
\begin{itemize}[topsep=2pt,itemsep=0pt,leftmargin=2.5ex,parsep=1pt]
    \item {\em Full feedback}: the algorithm learns the auction outcome and the competing bid, after each auction;
    \item {\em Bandit (or one-bit) feedback}: the algorithm learns only the outcome of the auction and nothing else after each auction.
\end{itemize}
The latter setting is more challenging than the former, since the algorithm obtains less information.

In the full feedback setting, we present an online bidding algorithm that achieves the {\em optimal} (up to polylog factors) regret bound of $\widetilde{O}(\sqrt{T})$ against the {\em unconditional, possibly randomized, optimal strategy} in hindsight, where $T$ is the time horizon.  This generalizes the result by \citet{feng2023online} and strengthens the one by \citet{aggarwal2025no} in the same setting.  In the bandit feedback setting, we present an algorithm that achieves a regret bound of $\widetilde{O}(T^{3/4})$ against the unconditionally optimal strategy, which strengthens the result by \citet{aggarwal2025no} in the same setting.  These results paint a considerably clearer picture of online bidding algorithms for ROI-constrained value maximizers.

\paragraph{Technique overview.}
The key technical challenge in designing online autobidding algorithms, which has also been observed by \citet{aggarwal2025no}, is that unconditionally optimal bidding strategies appear to have rich structures.
This tends to baffle existing techniques, which normally require the class of strategies optimized over to be ``simple'' and / or ``structured'' in some way.
Below we first briefly review how prior work deals with this challenge.

\citet{feng2023online} restrict their attention to truthful auctions, in which optimal bidding strategies admit a clean characterization: Without loss of generality, they only need to consider ``uniform bid scaling'', i.e., bidding strategies defined by a single multiplier $\theta$, such that the strategy places a bid of $\theta \cdot v$ when the value of winning is $v$.
Restricted to such bidding strategies, one essentially only needs to optimize the multiplier $\theta$, which allows \citet{feng2023online} to essentially reduce their problem (in a nontrivial way) to one that can be handled within the powerful mirror descent framework.
\citet{aggarwal2025no} take an approach that is in a sense more general: They consider bidding strategies that are Lipschitz-continuous, which subsume uniform bid scaling as a subclass.
By doing so, \citet{aggarwal2025no} manage to further incorporate technical insights from the Lipschitz bandits literature, which help establish nontrivial regret bounds for nontruthful auctions.

We take an approach that is in spirit similar to \citep{feng2023online}.
We first make a more general structural observation, that even in nontruthful auctions, optimal (and possibly randomized) bidding strategies are {\em essentially parametrized by a single number}, which roughly corresponds to the marginal ROI of the bidding strategy.
We present an algorithmically compatible characterization of this class of strategies (see Section~\ref{sec:optimal_random}), which allows us to (1) generalize the mirror-descent-based algorithm used in \citep{feng2023online} to nontruthful auctions and (2) without loss of generality, focus on deterministic bidding strategies, provided one condition: The algorithm needs to {\em know the distribution of competing bids}.\footnote{
    Note that the problem remains nontrivial, because the algorithm does not know the distribution of its value.
}
This condition is crucial, since the structure of optimal bidding strategies necessarily depends on the distribution of competing bids, and without knowing the former, one simply cannot guarantee the conditions for the mirror-descent primitive to produce the desired regret bound.\footnote{
    This is not an issue in \citep{feng2023online} since the structure they need is independent of the distribution of competing bids.
}
In our setting, of course the algorithm cannot know this distribution beforehand.
Below we discuss how this requirement can be removed.

We first show that the requirement can be {\em relaxed} without hurting the regret too much.
In particular, we show that if we run the generalized mirror-descent-based algorithm with a slightly inaccurate distribution of competing bids given as input, then its per-round regret can only degrade to the extent that the input distribution is inaccurate.
With that as a primitive, we construct an algorithm that ``bootstraps'' itself without any prior knowledge about the environment in the full feedback setting (see Section~\ref{sec:full_feedback}).
The algorithm ``restarts'' itself periodically as it gathers more information about the distribution of competing bids.
Upon each restart, it runs a fresh instance of the mirror-descent-based primitive with the latest estimated distribution as its input, which allows it to gradually lower the per-round regret over time as the estimation becomes more accurate.
By properly pacing the restarts, we manage to beat the $O(T^{2/3})$ bound via simple exploration-exploitation and achieve an almost optimal bound of $\widetilde{O}(\sqrt{T})$.

In the bandit feedback setting, the above strategy does not work any more, and we have to fall back to simple exploration-exploitation.
In our setting, even this conceptually simple high-level strategy requires nontrivial technical ingredients to implement: In fact, here we need a slightly stronger property of the primitive, because we can no longer guarantee a {\em uniform} estimation error of the distribution of competing bids.
We show that the property holds for a slightly adapted version of the primitive used in the full feedback setting.
With that as a building block, by optimizing the granularity of discretization and the tradeoff between exploration and exploitation, we manage to establish a regret bound of $\widetilde{O}(T^{3/4})$ (see Section~\ref{sec:bandit_feedback}), which matches the guarantee established by \citet{aggarwal2025no} against the weaker Lipschitz-continuous benchmark.
Notably, the ideal $O(\sqrt{T})$ bound is impossible in this setting due to a lower bound of $\Omega(T^{2/3})$ by \citet{aggarwal2025no}.
We leave closing this regret gap as an important furture direction.

\subsection{Related Work}
\label{sec:related_work}

Most closely related to our work is the line of research on online bidding algorithms in repeated auctions, with ROI-constrained value-maximizing bidders.
In the paradigm with ROI constrained value maximizing bidders, \citet{feng2023online} propose an algorithm based on the primal-dual framework proposed by \citet{balseiro2020dual}, which achieves a nearly optimal $\widetilde{O}(\sqrt{T})$ regret bound for truthful auctions.
\citet{castiglioni2024online} propose an algorithm that achieves no regret in repeated, possibly nontruthful, auctions, under the assumption that the spaces of values and bids are both finite.
\citet{aggarwal2025no} design an algorithm which works with nontruthful auctions and continuous value / bid space, and achieves no regret against any Lipschitz-continuous bidding strategy.
Our work generalize / strengthen all these results by achieving the nearly optimal $\widetilde{O}(\sqrt{T})$ regret against the unconditionally optimal bidding strategy without further assumptions.\footnote{
    The $\widetilde{O}(\sqrt{T})$ bound applies to the full feedback setting.
    In the bandit feedback setting, we match the bound by \citet{aggarwal2025no} against a weaker benchmark, thereby also strengthening their result.
}
Also conceptually related is the work by \citet{vijayan2025online}, who consider a ``harder'' setting where the algorithm does not even observe the value of winning, and instead, must learn to estimate this value on the fly.
Their result is orthogonal to ours, in particular because they also assume a finite space of bids.

Beyond autobidding, there is an extremely rich body of research on online bidding algorithms in repeated auctions.
To name a few results: \citet{balseiro2020dual} propose a primal-dual framework and apply their framework to repeated second-price auctions with budget constraints and derive no-regret bidding algorithms.
\citet{wang2023learning} study the same problem in repeated first-price auctions.
\citet{cesa2024role} study online bidding algorithms in first-price auctions without budget or ROI constraints, and pin down the optimal regret in a number of settings.
\citet{kumar2024strategically} present no-regret bidding algorithms that are strategically robust.
\citet{susan2023multi} design no-regret bidding algorithms for multi-platform settings, where the bidding algorithm repeatedly participates in multiple parallel auctions.
More generally, there is a long line of research on no-regret algorithms and dynamics for allocation problems under constraints, e.g., \citep{borgs2007dynamics,balseiro2019learning,castiglioni2022online,fikioris2023approximately,fikioris2023liquid,immorlica2022adversarial,weed2016online,han2020learning,nedelec2022learning,lucier2024autobidders,gaitonde2023budget,balseiro2019contextual,kesselheim2020online}.
We refrain from an extensive discussion since these results concern fairly different problems from ours, both conceptually and technically.

Our results are also conceptually related to the growing literature on the design and analysis of auctions and marketplaces with autobidders.
Prior research in this direction concerns various topics, including the efficiency of traditional auction mechanisms \citep{aggarwal2019autobidding,deng2024efficiency,deng2024gsp,liaw2023efficiency,feng2024strategic}, the design of (approximately) optimal mechanisms \citep{balseiro2021landscape,balseiro2024optimal,lu2023auction,lv2023auction,bei2025optimal,deng2022posted}, etc.
See the recent survey by \citet{aggarwal2024auto} for a comprehensive exposition.

\section{Preliminaries}
\label{preliminaries}

\newcommand{\ALG}{\cc{ALG}}
\newcommand{\roi}{\rho}
\newcommand{\jointDist}{\mathcal{D}}
\newcommand{\win}{\cc{win}}

We consider an online bidding setting where a single bidder participates in single-item first-price auctions \footnote{We consider first-price auctions for simplicity. Our results generalize to other non-truthful auctions: The main technical difference would be in our \Cref{thm:optimal_randomized_strategy}, where the construction of $\CDFConv$ should be adapted in accordance with the auction mechanism.}, repeated in $T$ rounds.
% for a single learner participating in repeated first-price auctions. 
At each time step $t\in [T]$, the learner realizes the value $\val_t$ in the current round and then submits a bid $\bid_t$ based on the current value and the historical information up to now. The bidder wins the item if her submitted bid is not less than the (highest) competing bid $\competingbid_t$ ($\competingbid_t \in [0,1]$) in this round, 
and we denote by $\alloc_t(\bid_t,\competingbid_t) = \indicator{\bid_t \ge \competingbid_t}$ the ex post allocation outcome of the current round.
If the bidder wins the item, she pays a price $\bid_t$ which equals to her submitted bid, otherwise she pays $0$. Therefore, the payment function, denoted by $\payment_t(\bid_t, \competingbid_t)$, is $\payment_t(\bid_t, \competingbid_t)=\bid_t \cdot \indicator{\bid_t \ge \competingbid_t}$.
We focus on value-maximizing bidder where the bidder's utility at time $t$ is given by $\reward_t(\bid_t,\competingbid_t)=\val_t\cdot \alloc_t(\bid_t,\competingbid_t)$. 
% \ydcomment{Did we ever mention that the competing bids are drawn i.i.d. from $F$?}
% \wtcomment{update in Line 166}

\paragraph{Benchmark and regret definitions.}
We assume that the sequence of value $\val_t$ is drawn independently and identically (i.i.d.) from an unknown distribution, whose CDF is denoted as $\valDist$. 
The bidder's objective is to 
design an online bidding algorithm $\ALG$ that maximizes her
total realized utility subject to a Return-On-Investment (ROI) constraint.\footnote{
    % For simplicity we consider only ROI constraints in this paper.  
    For pedagogical reasons, we focus exclusively on ROI constraints.
    Our results and analysis can be readily generalized to the setting with additional budget constraints, e.g., through techniques introduced in \citep{balseiro2024field}.
}
Namely, her total utility must be at least $\roi$ times her total payment: $\roi \cdot \sum\nolimits_{t\in[T]} \payment_t(\bid_t, \competingbid_t) \leq \sum\nolimits_{t\in[T]} \val_t\cdot \alloc_t(\bid_t, \competingbid_t)$ where $\roi > 0$ is the target ROI ratio of the bidder.
% Due to the ROI constraint, the optimal algorithm may be randomized, and we compare the performance of online algorithm against the objective value of the following program
% generate $\bid_t$ with certain probability at each round $t$ to maximize the expectation of learner's total value over a time horizon $T$, subject to an ROI constraint
\begin{equation}
\begin{aligned}
    \label{eq:opt}
    \max \quad &  \expect[]{\sum\nolimits_{t\in[T]} \val_t\cdot \alloc_t(\bid_t,\competingbid_t)}
    \\
    \textrm{s.t.} \quad & \expect[]{\roi \cdot \sum\nolimits_{t\in[T]} \payment_t(\bid_t, \competingbid_t)} \leq \expect[]{\sum\nolimits_{t\in[T]} \val_t\cdot \alloc_t(\bid_t, \competingbid_t)}~,
\end{aligned} 
\tag{$\mathcal{P}$}
\end{equation}
where the expectation is over the possible randomness of the bid sequence $\{\bid_t\}$ and the randomness of the maximum competing bid $\{\competingbid_t\}$. 
W.l.o.g., we assume that $\roi = 1$.\footnote{For $\roi \neq 1$, we can rescale bidder's value to be $\val_t\cdot \roi$, and all our analysis/results can be carried over similarly.}

We assume that the (maximum) competing bid $\competingbid_t$ at each round is i.i.d.\ realized from an unknown competing bid distribution $\competingbidDist$.
In this context, the allocation and the bidder's payment in round $t$ are determined by her submitted bid $\bid_t$: 
the allocation function 
$\alloc_t(\bid_t, \competingbid_t)=\CDF(\bid_t)$ and the payment function 
$\payment_t(\bid_t, \competingbid_t) = \bid_t \cdot \CDF(\bid_t)$.

We denote the bidder's value sequence by $\valSeq^T = (\val_t)_{t\in[T]}$.
An online bidding algorithm takes input of the bidder's historical information and the realized value at the current round, and outputs a (possibly randomized) bid.
Given an online algorithm $\ALG$, let $\TotalReward(\ALG \mid (\competingbidDist, \valSeq^T))$ (resp.\ $\TotalPayment(\ALG \mid (\competingbidDist, \valSeq^T))$) denote the bidder's total expected utility (resp.\ total expected payment) given the value sequence $\valSeq^T$ and competing bid distribution $\competingbidDist$:
\begin{align*}
    \TotalReward(\ALG \mid (\competingbidDist, \valSeq^T))
    & = \expect[(\bid_t) \sim \ALG]{\sum\nolimits_{t\in[T]} \val_t\cdot \CDF(\bid_t)}~;\\
    \TotalPayment(\ALG \mid (\competingbidDist, \valSeq^T))
    & = \expect[(\bid_t) \sim \ALG]{\sum\nolimits_{t\in[T]} \bid_t\cdot \CDF(\bid_t)}~.
\end{align*}
Thus, the ROI violation is given by 
$\TotalROIViolation(\ALG \mid (\competingbidDist, \valSeq^T))
= \TotalPayment(\ALG \mid (\competingbidDist, \valSeq^T)) - \TotalReward(\ALG \mid (\competingbidDist, \valSeq^T))$.

Let 
$\opt$ be the corresponding optimal bidding algorithm in hindsight for the program \ref{eq:opt}, i.e., $\opt$ has knowledge about the competing bid distribution $F$ and the realized value sequence $\valSeq^T$. 
Thus, an algorithm's regret relative to $\opt$ under the input value sequence $\valSeq^T$ is
\[
    \TotalRegret(\ALG \mid (\competingbidDist, \valSeq^T)) = \TotalReward(\opt \mid (\competingbidDist, \valSeq^T)) - \TotalReward(\ALG \mid (\competingbidDist, \valSeq^T))~.
\]
We measure the performance of an algorithm $\ALG$ by its expected regret over the distributions $\CDF$ and $\valDist$, defined as follows:
\begin{equation} \label{equation:total_regret}
\begin{aligned}
    \TotalRegret(\ALG \mid (\competingbidDist, \valDist))
    = \expect[\valSeq^T \sim \valDist^T]{\TotalRegret(\ALG \mid (\competingbidDist, \valSeq^T))}~.
\end{aligned}
\end{equation}
and expected ROI violation over the distribution $\CDF$ and $\valDist$: 
\begin{align*}
    \TotalROIViolation(\ALG \mid (\competingbidDist, \valDist))
    = \expect[\valSeq^T \sim \valDist^T]{\TotalROIViolation(\ALG \mid (\competingbidDist, \valSeq^T))}~.
\end{align*}

\paragraph{The feedback models.} 
We now specify the feedback that the bidder receives at the end of each round. In this paper, we consider the following two feedback models.
\begin{itemize}[topsep=2pt,itemsep=0pt,leftmargin=2.5ex,parsep=1pt]
    \item \textbf{Full feedback:} At the end of round $t$, the bidder observes the competing bid $d_t$.
    \item \textbf{Bandit feedback:} At the end of round $t$, the bidder only receives a binary signal $\win_t\in\{0, 1\}$ indicating whether she won the item or not, i.e., 
    $\win_t = \indicator{\competingbid_t \le \bid_t}$.
\end{itemize}

\section{Optimal Randomized Autobidding Strategy}
\label{sec:optimal_random}

% \subsection{Randomized Bidding Strategy with known CDF $\CDF$}
In this section, we first present the bidder's optimal bidding strategy in hindsight, where the sequence of values $\valSeq^T$ is known. 
We then design an online bidding algorithm that has low regret and low ROI violation when competing bid distribution $\competingbidDist$ is known, while value distribution $\valDist$ remains unknown.

We first define the following allocation-payment curve, which will be helpful for our analysis.
\begin{definition}[Allocation-payment curve $\allocPay$]
\label{defn:alloc-pay curve}
Given a competing bid distribution $\competingbidDist\in\Delta([0, 1])$,
an allocation-payment curve $\allocPay:[0, 1] \rightarrow [0, 1]$ is defined as: $G(\bid \cdot \CDF(\bid)) = \CDF(\bid)$ for all $b \in [0, 1]$.
\end{definition}

The allocation-payment curve $\allocPay$ captures the relationship between the payment and the allocation (as well as the reward, which is proportional to the allocation when the value of winning is fixed) induced by any bid $b$.

\subsection{Characterizing Optimal Bidding Strategy}
\label{sec:optimal_randomized_strategy}

We start with the following example which shows the  suboptimality of deterministic bidding strategies.
\begin{example}[Suboptimality of deterministic strategy]
\label{ex:subopt deterministic}
Consider following competing bid distribution: $\prob{\competingbid=0} =\prob{\competingbid=1} = \sfrac{1}{2}$.
The bidder always has a value $\val=\sfrac{1}{2}$.
Let $T = 1$, and the target ROI ratio $\roi = 1$.
One optimal deterministic bidding strategy would be to always submit a bid $\bid = 0$, which yields an expected reward $\sfrac{1}{4}$ and an expected payment $0$ --- note that restricted to deterministic bidding, there is no way to utilize this ``slackness'' in the ROI constraint.\footnote{
    Here and in the rest of the paper, we assume ties are broken in favor of the bidding algorithm.
    This makes no difference when the distribution of competing bids is continuous.
}
% and the realized ROI ratio is $1.5$.
On the other hand, consider the following randomized bidding strategy:
$\prob{\bid=0} = \sfrac{2}{3}, 
\prob{\bid =1} = \sfrac{1}{3}$.
This strategy yields an expected reward $\sfrac{1}{3}$ and an expected payment $\sfrac{1}{3}$.
Both strategies satisfy the ROI constraint, while the randomized strategy gives a strictly higher reward.
\end{example}

\Cref{ex:subopt deterministic} shows that to maximize the bidder's expected reward, it may be necessary to solve program \ref{eq:opt} over all randomized bidding strategies. 
% Perhaps unsurprisingly, we show that, for any fixed input value sequence $\valSeq^T$ and any competing bid distribution $\competingbidDist$, there always exists another  distribution $\CDFConv$ such that the optimal deterministic bidding strategy under $\valSeq^T, \CDFConv$ achieves the same expected reward as the optimal randomized bidding strategy under $\valSeq^T, \competingbidDist$.
% }
This is (at least superficially) incompatible with the primal-dual framework that underlies all previous results on online autobidding, which presents new technical challenges.
To address this issue, we present a constructive reduction that brings us back to the deterministic world, where we can cast the primal-dual framework more naturally.
In particular, we show that given any input value sequence $\valSeq^T$ and any competing bid distribution $\competingbidDist$, it is always possible to construct another distribution $\CDFConv$ that captures essentially the same tradeoff between reward and payment as $\CDF$ induces, with one key difference: With $\CDF$ replaced by $\CDFConv$, we can focus our attention to deterministic bidding strategies without loss of generality.

% \hanrui{do we need 3.2 to hold for all multiplier $\lambda$? one idea: we say $\CDFConv$ exists such that (1) for any (randomized) strategy under $V^T$ and $F$, there is a deterministic strategy under $V^T$ and $\CDFConv$ inducing reward and payment that are both no worse, and (2) vice versa (i.e., deterministic under $\CDFConv$ => randomized under $F$ with no worse reward and payment)}
% \wtcomment{Agreed}

\begin{theorem}[Optimal randomized bidding strategy]
\label{thm:optimal_randomized_strategy} 
Fixing any input value sequence $\valSeq^T$. 
For any competing bid distribution $\competingbidDist$, there exists a distribution $\CDFConv\in\Delta([0, 1])$ such that: For any (randomized) strategy under $V^T$ and $F$, there is a deterministic strategy under $V^T$ and $\CDFConv$ inducing reward and payment that are both no worse, and vice versa.
% (i.e., deterministic under $\CDFConv$ => randomized under $F$ with no worse reward and payment)
\end{theorem}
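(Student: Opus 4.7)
The strategy is to realize the concave envelope of the allocation-payment curve $\allocPay$ as the allocation-payment curve of some competing bid distribution $\CDFConv$, and then to show that deterministic bidding against $\CDFConv$ is equivalent in power to randomized bidding against $\CDF$ on a per-round basis. Since the randomness in the auctions is independent across rounds and the ROI constraint only depends on total expected reward and total expected payment, it suffices to set up the correspondence per round.

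\textbf{Constructing $\CDFConv$.} Let $\allocPayConcaveEnvelope:[0, \max_{b} b\CDF(b)]\to[0,1]$ be the concave envelope of the graph $\{(b\CDF(b),\CDF(b)):b\in[0,1]\}$; it is non-decreasing and concave with $\allocPayConcaveEnvelope(0)=\CDF(0)$. Define $\psi(p)=p/\allocPayConcaveEnvelope(p)$ for $p>0$ (and $\psi(0)=0$). The key analytic step is to verify that $\psi$ is non-decreasing: since $\allocPayConcaveEnvelope$ is concave with $\allocPayConcaveEnvelope(0)\ge 0$, the ratio $\allocPayConcaveEnvelope(p)/p$ is non-increasing on $(0,\infty)$ by a standard chord argument, which is equivalent to $\psi$ being non-decreasing. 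I then define
\[
\CDFConv(b) \;=\; \allocPayConcaveEnvelope\bigl(\psi^{-1}(b)\bigr)
\]
on the range of $\psi$, extended by the right-continuous piecewise-constant rule on gaps (corresponding to affine segments of $\allocPayConcaveEnvelope$ where $\psi$ is strictly increasing but the inverse is single-valued), and extended to $[0,1]$ by taking $\CDFConv(b)=1$ for $b$ above the relevant range. Monotonicity of $\CDFConv$ is immediate from monotonicity of $\allocPayConcaveEnvelope\circ\psi^{-1}$, so $\CDFConv$ is a valid CDF on $[0,1]$. By construction, the curve $\{(b\CDFConv(b),\CDFConv(b)):b\in[0,1]\}$ traces out exactly $\allocPayConcaveEnvelope$, i.e.\ $\allocPayConcaveEnvelope$ is the allocation-payment curve of $\CDFConv$.

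\textbf{Matching randomized and deterministic strategies.} For the forward direction, take an arbitrary randomized strategy at value $\val_t$ under $\CDF$; its expected payment and expected allocation form a point $(p,x)$ lying in the convex hull of the graph of $\allocPay$, so $x\le \allocPayConcaveEnvelope(p)$. Choose the deterministic bid $b$ under $\CDFConv$ with $b\CDFConv(b)=p$; this bid yields payment $p$ and allocation $\allocPayConcaveEnvelope(p)\ge x$, hence reward $\val_t\cdot\allocPayConcaveEnvelope(p)\ge \val_t\cdot x$ and identical payment, so both are no worse. For the reverse direction, any deterministic bid $b$ under $\CDFConv$ produces a point $(b\CDFConv(b),\CDFConv(b))$ on $\allocPayConcaveEnvelope$, which by the definition of the concave envelope is a convex combination of at most two points on the graph of $\allocPay$; randomizing the bid accordingly under $\CDF$ reproduces the same expected payment and expected allocation, so again both quantities coincide. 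Summing these per-round correspondences over $t\in[T]$ gives the theorem.

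\textbf{Main obstacle.} The conceptual ideas above are clean, but the principal technical nuisance is verifying that $\CDFConv$ is a bona fide CDF on $[0,1]$, in particular handling atoms of $\CDF$ (jumps in $\allocPay$), possible mass of $\CDF$ at zero (which shifts the starting point $\allocPayConcaveEnvelope(0)=\CDF(0)>0$), and the flat / kink regions of the concave envelope where $\psi^{-1}$ is either multivalued or undefined. All of these are resolved by the right-continuity convention above together with the chord inequality showing $\psi$ is monotone, but they require a careful case analysis rather than a one-line argument.
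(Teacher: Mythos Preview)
Your proposal is correct and follows essentially the same route as the paper: the paper also builds $\CDFConv$ so that its allocation-payment curve is (essentially) the concave envelope of $\allocPay$, proves the needed monotonicity via the same chord/convexity argument (their $\gamma(y)=\beta(y)/y$ is your $\psi(p)=p/\allocPayConcaveEnvelope(p)$ after the change of variables $p=\beta(y)$), and obtains the reverse direction by writing each envelope point as a convex combination of two points on $\allocPay$. The one caveat---which both you and the paper flag---is that when $\CDFConv$ has an atom, its allocation-payment curve does not literally trace the entire concave envelope (so a bid with $b\CDFConv(b)=p$ need not exist for every $p$); the paper handles this by letting $\allocPayConv$ deviate from the true concave envelope on the initial segment $[0,b_0 F(b_0))$, which never matters for the optimization, and your ``main obstacle'' paragraph correctly anticipates exactly this case analysis.
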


In the proof of \Cref{thm:optimal_randomized_strategy}, we explicitly construct such distribution $\CDFConv$: 
the distribution $\CDFConv$ is constructed such that its allocation-payment curve, denoted by $\allocPayConv$, is {\em essentially} the concave envelope of the allocation-payment curve $\allocPay$ induced by the original competing bid distribution $\competingbidDist$.
% informally, constructed based on the concave envelope of the allocation-payment curve $\allocPay$ induced by the competing bid distribution $\competingbidDist$.
With \Cref{thm:optimal_randomized_strategy}, we can simplify our algorithm design as follows: 
instead of designing an online bidding algorithm that has low regret against the optimal randomized bidding strategy under $(\competingbidDist, \valSeq^T)$, 
we can turn to designing an algorithm that has low regret against the optimal deterministic strategy under the corresponding $(\CDFConv, \valSeq^T)$.

We conclude this subsection with a proof sketch of \Cref{thm:optimal_randomized_strategy}.
All the missing proofs for the technique lemmas are deferred to \Cref{apx:proof random}.
\begin{proof}[Proof of \Cref{thm:optimal_randomized_strategy}]
Fixing any input value sequence $\valSeq^T$.
For any competing bid distribution $\competingbidDist$,
the proof of \Cref{thm:optimal_randomized_strategy} consists of three main steps:
\begin{itemize}
[topsep=2pt,itemsep=0pt,leftmargin=2.5ex,parsep=1pt]
% enumitem
    \item 
    \textbf{Step 1 -- Prove that the ``concave envelope''
    $\allocPayConv$ \footnote{
        Technically, $\allocPayConv$ might differ from the actual concave envelope of $G$, but this can happen only on the part of $\allocPayConv$ that never matters for our purposes.
        See the proof of \Cref{lem:new_F_with_randomized_strategy} for details.
    }
    of allocation-payment curve $\allocPay$ attains the maximum expected reward (see \Cref{lem:concave_envelope_G})}:
    In our first step, we show that given any fixed value $\val$, any (possibly randomized) bidding strategy that leads to an expected payment of $x$ has an expected reward at most $\val\cdot \allocPayConv(x)$. 
    
    \item 
    \textbf{Step 2 -- Construct the distribution $\CDFConv$ (see \Cref{lem:new_F_with_randomized_strategy})}:
    In this step, we construct such distribution $\CDFConv$ that corresponds to an allocation-payment curve $\allocPayConv$, which is essentially the concave envelope of the allocation-payment curve $\allocPay$ induced by $\competingbidDist$.
    \item 
    \textbf{Step 3 -- Prove the reward equivalence (see \Cref{lem:random_combination})}: In this step, we show that any reasonable randomized bidding strategy under $(\competingbidDist, \valSeq^T)$ is equivalent to generating a bid on the allocation-payment curve $\allocPayConv$, which corresponds to a deterministic bidding strategy under $(\CDFConv, \valSeq^T)$, and vice versa. \hfill \qedhere
\end{itemize}
\end{proof}

% \subsection{Randomized Bidding Algorithm under Known CDF}
\subsection{No-Regret Autobidding Algorithm for Known Competing Bid Distribution}
\label{subsec:algo known F}
In this subsection, we describe a no-regret online bidding algorithm (see \Cref{alg:randomized_strategy_under_F}) when the competing bid distribution $\competingbidDist$ is known to the bidder, while the value distribution $\valDist$ still remains unknown. 
This algorithm will serve as a building block for our algorithm design when the both $\competingbidDist$ and $\valDist$ are unknown to the bidder.

We first note that after introducing the Lagrangian multiplier $\lambda \ge 0$,  we can reformulate the program \ref{eq:opt} as follows
\begin{equation}
\label{eq:optimization_with_lambda}
\begin{aligned} 
  \max ~ \expect{\sum\nolimits_{t\in[T]}  \val_t\cdot \competingbidDist(\bid_t) + \min\nolimits_{\lambda \ge 0} \lambda \cdot \sum\nolimits_{t\in[T]} (\val_t\cdot \competingbidDist(\bid_t) - \bid_t \cdot \competingbidDist(\bid_t))}~.
\end{aligned}   
\end{equation}
Following a similar dual-prime framework proposed by \cite{balseiro2020dual}, we proceed by choosing a bid $\bid_t$ at round $t$ to maximize the objective in Eqn.~\eqref{eq:optimization_with_lambda} using a fixed Lagrangian multiplier $\lambda_t$ and then update $\lambda_t$ to $\lambda_{t+1}$ use the observed feedback.
Notice that given a fixed $\lambda_t$ at round $t$, the problem in Eqn.~\eqref{eq:optimization_with_lambda} is equivalent to
\begin{equation} 
    \label{equation:bid_t}
    \begin{aligned}
      \max ~ \expect{\dfrac{1+\lambda_t}{\lambda_t}\cdot \val_t\cdot \competingbidDist(\bid)- \bid \cdot \competingbidDist(\bid)}~.
    \end{aligned} 
\end{equation}
As we show in \Cref{ex:subopt deterministic}, the optimal bidding strategy may be randomized. 
Thus, using \Cref{thm:optimal_randomized_strategy}, instead of directly solving problem Eqn.~\eqref{equation:bid_t}, we consider solving a bid $\widetilde \bid_t$ that maximizes $\frac{1+\lambda_t}{\lambda_t}\cdot \val_t\cdot \CDFConv(\bid)- \bid \cdot \CDFConv(\bid)$.
We then use \Cref{lem:random_combination} to construct the equivalent randomized bidding strategy to be implemented in practice. 
We then update the dual variable $\lambda_t$ using the a generalized negative entropy in the mirror descent framework to adjust the penalty for the future rounds.
We summarize the algorithm details in \Cref{alg:randomized_strategy_under_F}.

\begin{algorithm}
\caption{No-regret bidding algorithm for known $\competingbidDist$}
\label{alg:randomized_strategy_under_F}
\begin{algorithmic}[1]
\REQUIRE Time horizon $T$, competing bid distribution $\CDF$, the value sequence $V^T$ (revealed sequentially).
\ENSURE Initial dual variable $\lambda_1=1$ and the dual mirror descent step size $\alpha=\sfrac{1}{\sqrt{T}}$

\STATE Let $\CDFConv$ be the distribution defined as in 
\Cref{thm:optimal_randomized_strategy} (its construction is in \Cref{lem:new_F_with_randomized_strategy}).
\FOR{$t \gets 1$ to $T$}
    \STATE Observe the value $\val_t$.
    \STATE 
    \label{algo1:compute bid}
    Compute $\widetilde \bid_t \gets \textrm{argmax}_{\bid\in[0, 1]} \left[\frac{1+\lambda_t}{\lambda_t}\cdot \val_t\cdot \CDFConv(\bid)- \bid \cdot \CDFConv(\bid)\right]$
    \STATE 
    Given $\widetilde \bid_t$, construct the equivalent randomized bidding strategy mentioned in \Cref{thm:optimal_randomized_strategy} and realizes a bid $\bid_{t}$.
    \Comment{Such randomized bidding strategy can be constructed efficiently, see \Cref{lem:random_combination}.}
    \STATE Submit the bid $\bid_t$ (to the outer environment or algorithm).
    \STATE Update $g_t(\widetilde b_t)=\val_t\cdot \CDFConv(\widetilde b_t) - \widetilde b_t \cdot \CDFConv(\widetilde b_t)$. %\ydcomment{Write $g_t$ as a function of $\widetilde b_t$? same for the next line}
    % }
    \STATE Update $\lambda_{t+1}=\lambda_t \cdot \textrm{exp}[-\alpha \cdot g_t(\widetilde b_t)]$. \label{equation_of_lambda}
\ENDFOR
\RETURN The bid sequence $\{\bid_t\}_{t\in[T]}$.
\end{algorithmic}
\end{algorithm}

The performance of \Cref{alg:randomized_strategy_under_F} is detailed below, whose analysis follows similarly to \cite{balseiro2020dual,feng2023online}.
We defer the proof to \Cref{apx:proof random}.
\begin{proposition}
\label{prop:performance algo known F}
For any unknown value distribution $\valDist$, the expected regret of \Cref{alg:randomized_strategy_under_F} under any competing bid distribution $\competingbidDist$ is bounded by $\TotalRegret(\text{\Cref{alg:randomized_strategy_under_F}} \mid (\CDF, \valDist)) = O(\sqrt{T})$,
and the expected ROI violation is bounded by $\TotalROIViolation(\text{\Cref{alg:randomized_strategy_under_F}} \mid (\CDF, \valDist))  
\le 2\log T \sqrt{T}$.
\end{proposition}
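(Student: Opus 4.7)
The plan is to adapt the standard dual-mirror-descent argument of \citet{balseiro2020dual,feng2023online} to our transformed environment $(\CDFConv, \valSeq^T)$. First, \Cref{thm:optimal_randomized_strategy} tells us that the randomized bid $\bid_t$ realized from $\widetilde\bid_t$ produces the same expected reward and payment as the deterministic bid $\widetilde\bid_t$ does against $\CDFConv$, and that the value of the hindsight optimum is preserved. Thus it suffices to prove that the deterministic sequence $\{\widetilde\bid_t\}$ incurs regret $O(\sqrt T)$ and ROI violation $O(\sqrt T\log T)$ against the best deterministic sequence $\{\bid_t^*\}$ under $(\CDFConv, \valSeq^T)$, where I write $g_t(\bid) := \val_t\CDFConv(\bid) - \bid\CDFConv(\bid)$ for the per-round ROI slack.

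Next, by single-round Lagrangian optimality in line~\ref{algo1:compute bid}, $\widetilde\bid_t \in \arg\max_{\bid}\{\val_t\CDFConv(\bid) + \lambda_t g_t(\bid)\}$, so $\val_t\CDFConv(\bid_t^*) - \val_t\CDFConv(\widetilde\bid_t) \le \lambda_t\bigl(g_t(\widetilde\bid_t) - g_t(\bid_t^*)\bigr)$, and summing yields $\TotalRegret \le \sum_t \lambda_t g_t(\widetilde\bid_t) - \sum_t \lambda_t g_t(\bid_t^*)$. The exponentiated-gradient update in line~\ref{equation_of_lambda} satisfies the standard mirror-descent guarantee on the positive real line with generalized negative entropy: for every reference $\lambda^*\ge 0$, $\sum_t(\lambda_t - \lambda^*)g_t(\widetilde\bid_t) \le \frac{1}{\alpha}D(\lambda^*\Vert\lambda_1) + \frac{\alpha}{2}\sum_t g_t(\widetilde\bid_t)^2$, which is $O(\sqrt T)$ for $\alpha = 1/\sqrt T$ and $|g_t|\le 1$ (the boundedness of $g_t$ is inherited from $\competingbidDist$ via the construction of $\CDFConv$ in \Cref{lem:new_F_with_randomized_strategy}). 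Choosing $\lambda^*$ to be the optimal Lagrange multiplier of the hindsight LP — a bounded quantity — and using ROI feasibility $\sum_t g_t(\bid_t^*)\ge 0$ absorbs the second term and yields the $O(\sqrt T)$ regret.

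For the ROI-violation bound, the total violation equals $-\sum_t g_t(\widetilde\bid_t)$ by the equivalence of \Cref{thm:optimal_randomized_strategy}. From $\lambda_{T+1} = \lambda_1\exp(-\alpha\sum_t g_t(\widetilde\bid_t))$, any polynomial cap $\lambda_{T+1}\le\mathrm{poly}(T)$ translates immediately into $-\sum_t g_t(\widetilde\bid_t) \le \frac{1}{\alpha}\log(\lambda_{T+1}/\lambda_1) = O(\sqrt T\log T)$. The cap itself follows from the self-correcting property of the EG dynamics: once $\lambda_t$ is sufficiently large, the Lagrangian-optimal bid degenerates toward $0$, forcing $g_t(\widetilde\bid_t)\ge 0$ and shrinking $\lambda_{t+1}$ back down, so $\lambda_t = O(T)$ uniformly.

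The main obstacle I anticipate is handling $\sum_t \lambda_t g_t(\bid_t^*)$ in the regret decomposition: because the hindsight sequence is only aggregate-feasible and $\lambda_t$ varies round by round, one needs a careful reference-multiplier argument (essentially the Balseiro--Gur stopping-time decomposition) to convert the per-round $\lambda_t$'s into a single fixed $\lambda^*$. Verifying that this decomposition transfers cleanly into the transformed environment — so that the non-concavity of $\competingbidDist$ has been fully absorbed by $\CDFConv$ and all boundedness/measurability hypotheses of the primal-dual analysis continue to hold — is the routine but delicate part of the argument.
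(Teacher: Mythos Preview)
Your plan follows the right primal--dual--plus--mirror--descent outline, but the paper takes a cleaner route that sidesteps the very obstacle you flag. The paper does \emph{not} compare to the hindsight sequence $\{b_t^*\}$ round by round; instead it uses weak duality with the expected dual function $\bar D(\lambda) = \mathbb{E}_{v\sim H}[\max_b\{v\,\CDFConv(b) + \lambda g(b)\}]$ to bound $\TotalReward(\opt) \le T\min_{\lambda\ge 0} \bar D(\lambda)$, and observes that because $v_t$ is i.i.d., the algorithm's reward equals $\sum_t \bar D(\lambda_t) - \mathbb{E}[\sum_t \lambda_t g_t(\widetilde b_t)]$. Subtracting gives $\TotalRegret \le \mathbb{E}[\sum_t \lambda_t g_t(\widetilde b_t)]$ directly --- which is your mirror-descent term with $\lambda^* = 0$, so the problematic $-\sum_t \lambda_t g_t(b_t^*)$ term and the boundedness of the hindsight Lagrange multiplier never enter.

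The second gap is in the mirror-descent step itself. The bound $\sum_t(\lambda_t-\lambda^*)g_t \le \frac{1}{\alpha}D(\lambda^*\Vert\lambda_1) + \frac{\alpha}{2}\sum_t g_t^2$ that you cite does \emph{not} hold verbatim on $(0,\infty)$ with generalized negative entropy, because the mirror map is only $1/\lambda$-locally strongly convex and $\lambda_t$ is unbounded a~priori. The paper instead bounds $\sum_t \lambda_t g_t(\widetilde b_t)$ by a direct Bregman-telescoping argument with a case split on the sign of $g_t$, crucially exploiting the structural inequality $g_t(\widetilde b_t) \ge \max(-1,\,-1/\lambda_t)$ (which follows from $\widetilde b_t \le \frac{1+\lambda_t}{\lambda_t} v_t$, a consequence of line~\ref{algo1:compute bid}). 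That same inequality also drives the ROI argument: rather than asserting a uniform cap $\lambda_t = O(T)$, the paper takes $T'$ to be the last time the cumulative violation is $\le \sqrt T\log T$, notes that for all $t > T'+1$ one has $\lambda_t > T$ and hence $-g_t < 1/T$, and bounds the remaining violation by $2$. Your self-correcting intuition is right, but the uniform cap you invoke is not proved by the informal ``bid degenerates to $0$'' argument and would in effect require the paper's stopping-time analysis to establish.
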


% \section{Randomized Autobidding Algorithm with Known Competing Bid Distribution}
% % \subsection{Algorithm: First-Price Auction with Randomized Strategy}
% \label{sec:randomized_strategy_with_known_bid_distribution}
% \input{deprecated-4-randomized-and-known-compettingBidDist}

% \section{First Price Auction with known competing bid distribution}
% \section{Deterministic Autobidding Algorithm with Known Competing Bid Distribution}
% \label{algorithm}
% \input{deprecated-deterministic-and-known-compettingBidDist}

\section{No-Regret Autobidding Algorithm  with Full Feedback}
\label{sec:full_feedback}

In \Cref{sec:optimal_random}, we present an algorithm
(see \Cref{alg:randomized_strategy_under_F}) when the competing bid distribution is known that achieves low expected regret and ROI violation.
% for the case where the true distribution of the maximum competing bid $\CDF$ is known. 
% It achieves $O(\sqrt{T})$ Regret with $\log T \sqrt{T}$ ROI Violation relative to the optimal randomized strategy. 
In practice, both the competing bid distribution $\competingbidDist$ and the value distribution $\valDist$ may be unknown, and the bidder needs to use the observed feedback to learn these uncertainties by exploring different bids.
In this section, we consider a full feedback model where the bidder can observe the realized competing bid $\competingbid_t$ at the end of each round, and we provide an algorithm (see \Cref{alg:randomized_strategy}) with low expected regret and low ROI violation.

\paragraph{Algorithm descriptions.}
At a high-level, \Cref{alg:randomized_strategy} operates as a stage-based algorithm.  
In each stage $m\in[M]$ where $M$ is the number of stages determined shortly, we use the competing bids observed in all previous stages to maintain an empirical CDF $\empiricalCDF_m$ for the underlying competing bid distribution $\competingbidDist$.
Let $T_m\in \N^+$ be the number of rounds in stage $m$ and  $N_m=\sum\nolimits_{m'\in [m-1]} T_{m'}$ be the number of total rounds right before the stage $m$.
We maintain $\empiricalCDF_m$ as follows:
% \begin{align}
%     \label{eq:empiricalCDF}
%     \empiricalCDF_m(\bid) = \frac{1}{\sum\nolimits_{m'\in [m-1]} T_{m'}}\sum\nolimits_{i\in[\sum\nolimits_{m'\in [m-1]} T_{m'}]}
%     \indicator{\bid \ge \competingbid_{i}},
%     \quad \bid\in[0, 1]~,
% \end{align}
\begin{align}
    \label{eq:empiricalCDF}
    \empiricalCDF_m(\bid) = \frac{1}{N_m}\sum\nolimits_{i\in[N_m]}
    \indicator{\bid \ge \competingbid_{i}},
    \quad \bid\in[0, 1]~,
\end{align}
To account for the estimation error and maintain optimism in our bidding strategy, we further define the optimistic CDF $\optimisticCDF_m$ as
\begin{align}
    \label{eq:optimisticCDF}
    \optimisticCDF_m(\bid) 
    = \left(\empiricalCDF_m(\bid) + \eps_m \right ) \wedge 1~,\quad \bid\in[0, 1]~.
\end{align}
where $\eps_m$ is a parameter carefully chosen to balance the exploration and exploitation tradeoff.\footnote{
    As a shorthand, we write $x \wedge y := \min\{x, y\}$.}
We then implement \Cref{alg:randomized_strategy_under_F} by feeding the distribution $\optimisticCDF$ (while the actual competing bids are still generated according to the true unknown competing bid distribution $\competingbidDist$). 
We summarize the algorithm details in \Cref{alg:randomized_strategy}.

\begin{algorithm}
\caption{No-regret bidding algorithm with full feedback}
\label{alg:randomized_strategy}
\begin{algorithmic}[1]
\REQUIRE 
Time horizon $T$, the value sequence $V^T$ (revealed sequentially).
\REQUIRE Let $M=\lceil \log(T+1) \rceil$ be the number of stages.
\ENSURE Let $\optimisticCDF_1$ be an arbitrary distribution. $N_1 \gets 0$ 

\FOR{$m \gets 1$ to $M$}
    \STATE Current stage length: $T_m \gets 2^{m-1}$.
    
    \STATE Feed the \Cref{alg:randomized_strategy_under_F} with inputs: $T \gets T_m, \competingbidDist \gets \optimisticCDF_{m}$.
    \Comment{Note that we compute the submitted bid $\bid_t$ and update dual variable $\lambda_t$ from \Cref{alg:randomized_strategy_under_F} based on $\optimisticCDF_m$, while the actual competing bids are generated according to $\competingbidDist$.}
    \STATE $N_{m+1} \gets T_m + N_m$ 
    \STATE Optimism parameter $\eps_{m+1} \gets \frac{\log (N_{m+1})}{\sqrt{N_{m+1}}}$.
    \STATE 
    Update the empirical CDF $\empiricalCDF_{m}$ to be $\empiricalCDF_{m+1}$ according to Eqn.~\eqref{eq:empiricalCDF}.
    \STATE Obtain $\optimisticCDF_{m+1}$ defined as in Eqn.~\eqref{eq:optimisticCDF}
    
\ENDFOR
\RETURN The bid sequence $\{\bid_t\}_{t\in[T]}$.
\end{algorithmic}
\end{algorithm}

\paragraph{Algorithm performance.}
We summarize the performance of \Cref{alg:randomized_strategy} as follows.
\begin{theorem}
% {A Near-Optimal Randomized Algorithm with Full Feedback.} 
\label{thm:randomized_algorithm_full_feedback} 
For any unknown competing bid distribution $\competingbidDist$ and unknown value distribution $\valDist$, under the setting with full feedback, the expected regret of
\Cref{alg:randomized_strategy}  is bounded by $\TotalRegret(\mathrm{\Cref{alg:randomized_strategy}} \mid (\CDF, \valDist)) = \widetilde O(\sqrt{T})$,
and the expected ROI violation is bounded by $\TotalROIViolation(\mathrm{\Cref{alg:randomized_strategy}} \mid (\CDF, \valDist))  = \widetilde O(\sqrt{T})$.
\end{theorem}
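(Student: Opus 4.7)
The plan is to decompose \Cref{alg:randomized_strategy}'s behavior across its $M = \lceil \log(T+1) \rceil$ stages and, inside each stage, invoke the known-$\competingbidDist$ guarantee of \Cref{prop:performance algo known F} with the optimistic distribution $\optimisticCDF_m$ in place of the true $\competingbidDist$. The key enabling fact is that, by the Dvoretzky--Kiefer--Wolfowitz inequality and a union bound over the $M$ stages, with probability at least $1 - O(1/T)$ the empirical CDF satisfies $\|\empiricalCDF_m - \competingbidDist\|_\infty \le \eps_m/2$ for every $m$, and hence $\competingbidDist(\bid) \le \optimisticCDF_m(\bid) \le \competingbidDist(\bid) + 2\eps_m$ uniformly in $\bid \in [0,1]$. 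The failure event contributes only $O(1)$ to expected regret and ROI violation and is henceforth ignored.

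Conditioned on this good event, fix a stage $m$ with $T_m = 2^{m-1}$ rounds, and let $\ALG^m$ denote the instance of \Cref{alg:randomized_strategy_under_F} invoked during that stage. \Cref{prop:performance algo known F}, applied with input $\optimisticCDF_m$, yields a stage regret of $O(\sqrt{T_m})$ and a stage ROI violation of $O(\log T_m \cdot \sqrt{T_m})$, both measured under $\optimisticCDF_m$. Because $|\optimisticCDF_m - \competingbidDist| \le 2\eps_m$ and bids/values lie in $[0,1]$, for any fixed bid sequence the cumulative reward and cumulative payment under $\competingbidDist$ differ from those under $\optimisticCDF_m$ by at most $2\eps_m T_m$. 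This directly yields a stage ROI violation under $\competingbidDist$ of $O(\log T_m \sqrt{T_m}) + 4 \eps_m T_m = \widetilde O(\sqrt{T_m})$, using $\eps_m = \Theta(\log N_m / \sqrt{N_m})$ and $N_m = \Theta(T_m)$. To compare the per-stage hindsight optima under the two distributions I would evaluate $\opt$'s bids for $V^{T_m}$ under the surrogate: their ROI violation under $\optimisticCDF_m$ inflates by at most $2\eps_m T_m$, so a Lagrangian sensitivity bound of the form $\TotalReward(\opt \mid (\competingbidDist, V^{T_m})) \le \TotalReward(\opt \mid (\optimisticCDF_m, V^{T_m})) + 2\lagmultiplier^\star \eps_m T_m$ follows, where $\lagmultiplier^\star = O(1)$ is the surrogate's optimal dual variable (bounded because rewards and payments are in $[0,1]$). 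Combined with the $O(\eps_m T_m)$ reward-transfer term for $\ALG^m$, the per-stage regret under $\competingbidDist$ is $\widetilde O(\sqrt{T_m})$.

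Finally, summing across stages uses the geometric identity $\sum_{m=1}^{M} \sqrt{T_m} = O(\sqrt{T})$; a standard i.i.d.\ fluid-relaxation upper bound also gives $\expect[]{\TotalReward(\opt \mid (\competingbidDist, V^T))} \le \sum_m \expect[]{\TotalReward(\opt \mid (\competingbidDist, V^{T_m}))} + O(\sqrt{T})$, converting the per-stage benchmarks into a valid upper bound on the global benchmark up to an $O(\sqrt{T})$ additive term. Identical bookkeeping yields the $\widetilde O(\sqrt{T})$ bound on total ROI violation. The main obstacle I foresee is the Lagrangian sensitivity step: turning the ``near-feasibility'' of $\opt$ (under $\competingbidDist$) into a quantitative reward comparison against $\opt$ (under $\optimisticCDF_m$) requires an $O(1)$ bound on the surrogate's optimal dual with some care for corner cases where the surrogate ROI constraint is barely active. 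The cleanest route invokes weak duality together with the boundedness of the primal optimum, and reuses the dual-regret control of $\lambda_t$ already present in the proof of \Cref{prop:performance algo known F}.
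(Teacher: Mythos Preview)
Your stage-wise decomposition, the DKW/optimism step, and the $2\eps_m T_m$ reward/payment transfer all match the paper's argument. The divergence is exactly where you flagged it: the comparison of the benchmark under $\competingbidDist$ to the benchmark under the surrogate $\optimisticCDF_m$. You propose a Lagrangian sensitivity argument requiring $\lagmultiplier^\star = O(1)$ for the surrogate's optimal dual, justified only by ``rewards and payments are in $[0,1]$''. That justification does not go through: for ROI constraints the optimal dual multiplier can be arbitrarily large whenever the constraint is (nearly) tight (take $\val \equiv c$, $\competingbid \equiv c$ for any $c \in (0,1]$; the unique winning bid makes reward $=$ payment and every $\lambda \ge 0$ is optimal). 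Without a Slater-type assumption, which the paper does not make, your sensitivity step is a genuine gap.

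The paper avoids this entirely and this is precisely why the \emph{optimism} $\optimisticCDF_m \ge \competingbidDist$ is enforced, not just closeness. Its \Cref{lem:optimal_strategy_under_two_different_Fs} shows that pointwise domination $\optimisticCDF_m(b) \ge \competingbidDist(b)$ implies $\TotalReward(\opt \mid (\optimisticCDF_m, V)) \ge \TotalReward(\opt \mid (\competingbidDist, V))$ with \emph{no} additive loss and no dual bound needed. The argument is geometric: for every bid $b$, the point $(b\competingbidDist(b),\competingbidDist(b))$ on the allocation--payment curve of $\competingbidDist$ is a convex combination of the origin and $(b\optimisticCDF_m(b),\optimisticCDF_m(b))$, so the concave envelope under $\competingbidDist$ lies pointwise below that under $\optimisticCDF_m$; for the same payment one gets at least as much allocation. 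Plugging this into your per-stage chain replaces the fragile Lagrangian step by a one-line monotonicity, and the ensuing ``per-stage benchmark vs.\ global benchmark'' bookkeeping then runs through the fluid dual $T\min_\lambda \bar D(\lambda)$ directly (this is already inside the proof of \Cref{prop:performance algo known F}), so your separate fluid-relaxation inequality $\expect{\TotalReward(\opt\mid V^T)} \le \sum_m \expect{\TotalReward(\opt\mid V^{T_m})} + O(\sqrt{T})$ is also unnecessary.
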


We conclude this section by providing a proof sketch for \Cref{thm:randomized_algorithm_full_feedback}.
All missing proofs of the technical lemmas are deferred to \Cref{apx:proof fullfeedback}. 
\begin{proof}[Proof of \Cref{thm:randomized_algorithm_full_feedback}] We first discuss the regret and ROI violation incurred with any fixed value sequence $\valSeq^T$. 
The analysis of regret and ROI violation of \cref{alg:randomized_strategy} consists of three main steps:
\begin{itemize}
[topsep=2pt,itemsep=1pt,leftmargin=2.5ex,parsep=1pt]
    \item 
    \textbf{Step 1 -- Accuracy of the optimistic CDF $\optimisticCDF$ (see \Cref{lem:full_feedback-distance_between_optimistic_CDF_and_CDF})}: In this step, we proved that when $\eps_m = \Theta\left(\frac{\log N_m}{\sqrt{N_m}}\right)$, the CDF $\optimisticCDF_m$ (defined in Eqn.~\eqref{eq:optimisticCDF}) serves as an optimistic estimate to the true underlying competing bid distribution ($\optimisticCDF_m \ge \CDF$ with high probability), while still remains close, specifically we have $\sup_b |\optimisticCDF_m(b)-\CDF(b)|\le 2\eps_m$ with high probability. % \ydcomment{IIUC, "with high probability" should also apply to the condition that $\optimisticCDF_m \ge \CDF$?}
    % The certainty will increase as $N_m$ increases.
    \item
    \textbf{Step 2 -- 
    The performance approximation when implementing \Cref{alg:randomized_strategy_under_F} with estimated distributions (see \Cref{lem:estimated_distribution_aganst_optimal_strategy})}: 
    In this step, we establish the performance guarantee (evaluating in an environment with competing bid distribution $\competingbidDist$) of the bids generated by feeding \Cref{alg:randomized_strategy_under_F} with a distribution $\estimatedCDF$ that satisfies $\estimatedCDF \ge \competingbidDist$ and $\sup_b |\estimatedCDF_m(b)-\CDF(b)|\le 2\eps$,  
    % denoted by $\algWithF$, between $\competingbidDist$ and $\optimisticCDF$ while the underlying compeiting bid distribution is always $\competingbidDist$.
    \begin{align*}
        \TotalRegret(\algWithF(\estimatedCDF) \mid (\CDF, \valSeq^T)) 
        = O(\sqrt{T}) + 2\eps T, 
        ~ \TotalROIViolation(\algWithF(\estimatedCDF)\mid (\CDF, \valSeq^T)) 
        = \widetilde O (\sqrt{T})+ 2\eps T~.
    \end{align*}
    Here we denote by $\algWithF(\estimatedCDF)$ the bids generated by feeding \Cref{alg:randomized_strategy_under_F} with a distribution $\estimatedCDF$.
    In stage $m$, when we use the estimated optimistic $\optimisticCDF_m$ as input for \Cref{alg:randomized_strategy_under_F}, \Cref{lem:estimated_distribution_aganst_optimal_strategy} demonstrates that both the regret and ROI violation will be amplified by $T\cdot 2\eps_m$ compared to using the true CDF $\CDF$ as input.
    \item
    \textbf{Step 3 -- Aggregating the regret and ROI violation over $M$ stages (see \Cref{lem:full_feedback-Regret} and \Cref{lem:full_feedback-ROI})}: With $\eps_m$ selected in Step 1, \Cref{lem:full_feedback-Regret} proved that the regret of \Cref{alg:randomized_strategy} is bounded by $\widetilde O(\sqrt T)$. \Cref{lem:full_feedback-ROI} proved that the ROI violation is bounded by $\widetilde O(\sqrt T)$.
\end{itemize}
Thus, when a value sequence $\valSeq^T$ is i.i.d.\ drawn from $\valDist^T$,  the expected regret and ROI violation over the distribution $\CDF$ and $\valDist$ satisfies
\begin{align*}
    \TotalRegret(\mathrm{\Cref{alg:randomized_strategy}} \mid (\competingbidDist, \valDist))
    & = \expect[\valSeq^T \sim \valDist^T]{\TotalRegret(\mathrm{\Cref{alg:randomized_strategy}} \mid (\competingbidDist, \valSeq^T))} 
    = \widetilde O(\sqrt T)~.\\
    \TotalROIViolation(\mathrm{\Cref{alg:randomized_strategy}} \mid (\competingbidDist, \valDist))
    & = \expect[\valSeq^T \sim \valDist^T]{\TotalROIViolation(\mathrm{\Cref{alg:randomized_strategy}} \mid (\competingbidDist, \valSeq^T))}
    = \widetilde O(\sqrt T)~. \qedhere
\end{align*}
\end{proof}

\section{No-Regret Autobidding Algorithm with Bandit Feedback}
\label{sec:bandit_feedback}

\newcommand{\conservBid}{\bid^{\cc{c}}}
\newcommand{\conservCDF}{\optimisticCDF^{\cc{c}}}

In this section, we consider a more challenging learning setting in which the bidder receives only bandit feedback at the end of each round -- specifically, whether they won the item or not.
This bandit feedback model is common in practice and has been extensively studied in the literature (see, e.g., \citep{cesa2024role,aggarwal2025no}).
The main results in this section are an online autobidding algorithm (see \Cref{alg:randomized_strategy_with_bandit_feedback}) with the provable performance (see \Cref{thm:bandit_feedback-regret_ros}) under such bandit feedback.

\paragraph{Algorithm descriptions.}
At a high-level, our proposed \Cref{alg:randomized_strategy_with_bandit_feedback} operates as an explore-then-exploit type algorithm. 
The algorithm starts with an exploration stage that explores different bids to obtain sufficient knowledge about the underlying competing bid distribution. 
Then, the algorithm implements a ``conservative'' variant of \Cref{alg:randomized_strategy_under_F} with the estimated competing bid distribution over the remaining rounds.

In the exploration phase, to explore different bids, we discretize the bid range $[0,1]$ into $K$ equal intervals with grid points $\{c_k\}_{k\in[K]\cup\{0\}}$, where $c_k=\sfrac{k}{K}$ for $k\in[K]\cup\{0\}$. 
We then bid each grid point $c_k$ consecutively for $M$ rounds, and then use the observed bidding outcomes to compute an empirical estimate $\empiricalCDF$ for the underlying competing bid distribution $\CDF$: for $\ell \in [1:K-1]$
\begin{align}
    \empiricalCDF(c_\ell) 
    % = \max\nolimits_{k\in [\ell]\cup\{0\}} {\widetilde \CDF(c_k)}
    = 
    \max\nolimits_{k\in [\ell]\cup\{0\}} \frac{1}{M} \cdot \sum\nolimits_{t\in [k\cdot M+1: (k+1) \cdot M]} 
    \indicator{\bid_t \ge \competingbid_t}~,
    \label{eq:empirical CDF one-bit}
    % \mathbbm{1}\{win_t\}
\end{align}
and we let $\empiricalCDF(c_K) = 1$.
We also extend the definition of $\empiricalCDF$ to any bid $\bid \in [0,1)$ as a step function:
$\empiricalCDF(\bid) = \empiricalCDF(c_k)$ if $\bid\in[c_k, c_{k+1})$.
After exploring all the grid bids, we can have sufficient observations to maintain a good estimate $\empiricalCDF$ for the underlying the competing bid distribution. 
Similar to \Cref{alg:randomized_strategy}, we also define the optimistic CDF $\optimisticCDF$ as follows:
\begin{align}
    \label{eq:optimisticCDF one-bit}
    \optimisticCDF(\bid) =
    \left(\empiricalCDF(c_k) + \eps\right) \wedge 1, \quad \text{if } \bid\in[c_k, c_{k+1}) ~,
\end{align}
where $\eps$ again is a parameter carefully chosen to balance the exploration and exploitation tradeoff.
% In the exploitation phase, the bidder may directly use this 

In \Cref{alg:randomized_strategy} with full feedback, after obtaining the optimistic CDF $\optimisticCDF$, we directly 
implement \Cref{alg:randomized_strategy_under_F} assuming the underlying competing bid distribution is $\optimisticCDF$. 
However, this approach may fail under our bandit feedback setting, as we can no longer guarantee a uniform error bound on $\optimisticCDF$ (i.e., $\sup_b |\CDF(b) - \optimisticCDF(b)|$ might be large).
Instead, in the exploitation phase of \Cref{alg:randomized_strategy_with_bandit_feedback}, we implement \Cref{alg:randomized_strategy_under_F} for the remaining rounds in a ``conservative'' way.
In particular, we define a conservative version of the optimistic CDF $\conservCDF$ as follows:
\begin{align}
    \label{eq:conservCDF}
    \conservCDF(\bid) = \optimisticCDF\left((\bid+\sfrac{1}{K}) \wedge 1\right), \quad \bid \in[0, 1]~.
\end{align}
We then feed \Cref{alg:randomized_strategy_under_F} with the input $T - K\cdot M$ as the time horizon, and $\conservCDF$ as if the underlying competing bid distribution is $\conservCDF$ to generate a sequence of bids $\{\conservBid_t\}$ for the remaining rounds.
Instead of directly using these bids, we submit the bids also in a conservative way. 
In particular, we bid $\{\bid_t\}$ where each bid $\bid_t \gets (\conservBid_t + \sfrac{1}{K}) \wedge 1$.

\begin{algorithm}
\caption{No-regret bidding algorithm with bandit feedback}
\label{alg:randomized_strategy_with_bandit_feedback}
\begin{algorithmic}[1]
\REQUIRE Time horizon $T$, the value sequence $V^T$ (revealed sequentially).
\ENSURE Initialize dual variable $\lambda_1=1$,
$K, M$.

\Comment{Exploration phase}
\STATE Discretize $[0,1]$ to obtain grid points $\{c_0, c_1, c_2, \ldots, c_K\}$.
\FOR{$k \gets 0$ to $K-1$}
    \STATE Bid $c_k$ consecutively for $M$ time rounds. \STATE Observe the allocation outcomes 
    $\indicator{\bid_t \ge \competingbid_t}_{t\in [k\cdot M+1: (k+1) \cdot M]}$.
\ENDFOR
\STATE 
Obtain the empirical CDF $\empiricalCDF$, the optimistic CDF $\optimisticCDF$ and the conservative CDF $\conservCDF$ defined as in Eqn.~\eqref{eq:empirical CDF one-bit}, \eqref{eq:optimisticCDF one-bit} and ~\eqref{eq:conservCDF}, respectively.

\Comment{Exploitation phase}
\STATE
Feed the \Cref{alg:randomized_strategy_under_F} with inputs: $T\gets T - K\cdot M, \competingbidDist \gets \conservCDF$. 
\STATE Let $\{\conservBid_t\}$ be the bids generated from   \Cref{alg:randomized_strategy_under_F} with these input specifications.
\STATE
Submit the bids $\{\bid_t\}$ over remaining rounds where each bid $\bid_t \gets (\conservBid_t + \sfrac{1}{K}) \wedge 1$. 

\RETURN The bid sequence $\{\bid_t\}_{t\in[T]}$
\end{algorithmic}
\end{algorithm}

\paragraph{Algorithm performance.}
The performance of \Cref{alg:randomized_strategy_with_bandit_feedback} is summarized as follows:
\begin{theorem}
\label{thm:bandit_feedback-regret_ros} 
For any unknown competing bid distribution $\competingbidDist$ and unknown value distribution $\valDist$, under the bandit feedback, the expected regret of
\Cref{alg:randomized_strategy_with_bandit_feedback} with $M = \Theta(\sqrt{T}), K = \Theta(T^{\sfrac{1}{4}})$ is bounded by $\TotalRegret(\mathrm{\Cref{alg:randomized_strategy_with_bandit_feedback}} \mid (\CDF, \valDist)) = \widetilde O(T^{\sfrac{3}{4}})$,
and the expected ROI violation is bounded by $\TotalROIViolation(\mathrm{\Cref{alg:randomized_strategy_with_bandit_feedback}} \mid (\CDF, \valDist))  = \widetilde O(T^{\sfrac{3}{4}})$.

\end{theorem}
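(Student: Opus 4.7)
\textbf{Proof proposal for \Cref{thm:bandit_feedback-regret_ros}.}
The plan is to split the $T$-round horizon into an exploration phase of $KM$ rounds and an exploitation phase of the remaining rounds, analyze each separately, and then tune $K = \Theta(T^{1/4})$ and $M = \Theta(\sqrt T)$ to balance the two contributions. For the exploration phase, per-round reward and payment each lie in $[0,1]$, so the regret and the ROI violation accumulated during exploration are trivially $O(KM) = O(T^{3/4})$.

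For the exploitation phase, I would first establish a grid-point concentration bound. Since each $c_k$ is bid exactly $M$ times, Hoeffding's inequality combined with the monotonization in Eqn.~\eqref{eq:empirical CDF one-bit} and a union bound over the $K$ grid points yields $|\empiricalCDF(c_k) - \competingbidDist(c_k)| \le \eps$ for all $k$ simultaneously, with probability $1 - 1/\mathrm{poly}(T)$, provided $\eps = \Theta\bigl(\sqrt{\log(KT)/M}\,\bigr)$. Consequently $\optimisticCDF(c_k) \in [\competingbidDist(c_k), \competingbidDist(c_k) + 2\eps]$ at every grid point. Outside the grid the step function $\optimisticCDF$ can differ from $\competingbidDist$ by as much as the mass $\competingbidDist$ places in a single bin, which is precisely what motivates the conservative shift.

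The key structural observation is that when \Cref{alg:randomized_strategy_under_F} is fed the piecewise-constant input $\conservCDF$, its inner maximizer $\widetilde{\bid}_t = \argmax_{\bid \in [0,1]} \conservCDF(\bid)\cdot\bigl[\tfrac{1+\lagmultiplier_t}{\lagmultiplier_t}\val_t - \bid\bigr]$ always lies on the grid: on each flat piece $[c_k, c_{k+1})$ the objective is a nonnegative constant times a decreasing affine function of $\bid$, hence maximized at the left endpoint. Therefore the internal bid $\conservBid_t \in \{c_0, \ldots, c_{K-1}\}$, the submitted bid $\bid_t = (\conservBid_t + 1/K) \wedge 1 \in \{c_1, \ldots, c_K\}$, and the identity $\conservCDF(\conservBid_t) = \optimisticCDF(\bid_t)$ combined with the grid concentration bound yields $|\competingbidDist(\bid_t) - \conservCDF(\conservBid_t)| \le 2\eps$ w.h.p.\ at every exploitation round. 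Telescoping these per-round comparisons and using $\bid_t - \conservBid_t \le 1/K$, the actual reward and payment of \Cref{alg:randomized_strategy_with_bandit_feedback} are within $2\eps T$ and $2\eps T + T/K$, respectively, of the internal reward and payment that \Cref{alg:randomized_strategy_under_F} incurs on input $\conservCDF$.

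The remaining step is to invoke \Cref{prop:performance algo known F}, which bounds the internal regret and ROI violation of \Cref{alg:randomized_strategy_under_F} on input $\conservCDF$ against the optimal strategy for $\conservCDF$ by $\widetilde O(\sqrt T)$, and then to compare that optimum to $\opt$ under $\competingbidDist$. Using the pointwise domination $\conservCDF \ge \competingbidDist$, which follows from monotonicity of $\competingbidDist$ together with the $1/K$ shift in the definition of $\conservCDF$, and the concave-envelope characterization from \Cref{thm:optimal_randomized_strategy}, I would construct a grid-restricted perturbation of the hindsight-optimal strategy for $\competingbidDist$ that remains ROI-feasible under $\conservCDF$ and loses at most $\widetilde O(\eps T + T/K)$ reward. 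Substituting $K = \Theta(T^{1/4})$ and $M = \Theta(\sqrt T)$ makes the exploration cost $KM$, the internal regret $\sqrt T$, the grid-mismatch loss $\eps T$, and the shift cost $T/K$ all $\widetilde O(T^{3/4})$, and taking the outer expectation over $\valSeq^T \sim \valDist^T$ yields the theorem. The main technical obstacle is the last comparison between $\opt$ for $\conservCDF$ and $\opt$ for $\competingbidDist$: the pointwise domination $\conservCDF \ge \competingbidDist$ inflates both reward and payment and can therefore tighten the ROI constraint on a naively translated strategy, so the perturbation must carefully exchange a small amount of reward for ROI slack without breaking feasibility.
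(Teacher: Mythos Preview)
Your decomposition---trivial $O(KM)$ bound for exploration, grid-point concentration, the shift identity $\conservCDF(\conservBid_t)=\optimisticCDF(\bid_t)$, and the invocation of \Cref{prop:performance algo known F}---matches the paper's proof almost step for step. One minor inaccuracy: \Cref{alg:randomized_strategy_under_F} does not optimize the raw input $\conservCDF$ but its convexified version, and then outputs a \emph{randomized} bid over two support points; the conclusion that these support points lie on the grid is still correct (the staircase allocation--payment curve has its concave-envelope breakpoints at grid points), but your ``left-endpoint of a flat piece'' argument is not quite the right mechanism.

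The place where you diverge from the paper is the final comparison, and you have made it harder than it is. You worry that $\conservCDF \ge \competingbidDist$ inflates both reward and payment and may tighten the ROI constraint on a translated strategy, forcing a delicate perturbation. The paper (\Cref{lem:optimal_strategy_under_two_different_Fs}) shows there is \emph{no loss at all} in this step: for any bid $b$ under $\competingbidDist$, the allocation--payment point $(b\competingbidDist(b),\competingbidDist(b))$ is a convex combination of $(0,0)$ and $(b\conservCDF(b),\conservCDF(b))$, hence lies inside the convex hull of the curve for $\conservCDF$. So for any feasible strategy under $\competingbidDist$ there is a (randomized) strategy under $\conservCDF$ with the \emph{same} payment and \emph{at least as much} reward, which automatically preserves ROI feasibility. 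This gives $\TotalReward(\opt(\conservCDF)\mid(\conservCDF,\valSeq^T)) \ge \TotalReward(\opt(\competingbidDist)\mid(\competingbidDist,\valSeq^T))$ outright, with no $\eps T$ or $T/K$ term and no perturbation. What you flagged as ``the main technical obstacle'' is thus the cleanest link in the chain.
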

% \wtcomment{Add proof sketch, technique difficulty}
The proof of \Cref{thm:bandit_feedback-regret_ros} and all relevant lemmas (with proofs)  are deferred to \Cref{apx:proof banditfeedback}.

\newpage
\bibliographystyle{ACM-Reference-Format}
\bibliography{references}

\newpage
\appendix
\section{Missing Proofs in \texorpdfstring{\Cref{sec:optimal_random}}{sec 3}}
\label{apx:proof random}

\subsection{Analysis of Randomized Autobidding Strategy}
\begin{lemma}
\label{lem:monoton allocPay}
For any $\competingbidDist$, the allocation-payment curve $\allocPay$ is non-decreasing.
\end{lemma}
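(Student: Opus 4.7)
The plan is to unwind the defining equation $\allocPay(\bid \cdot \CDF(\bid)) = \CDF(\bid)$ and use the monotonicity of the CDF $\CDF$ to conclude.

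First I would verify that $\allocPay$ is well-defined on the image of the map $\bid \mapsto \bid \cdot \CDF(\bid)$. If $\bid_1 < \bid_2$ satisfy $\bid_1 \cdot \CDF(\bid_1) = \bid_2 \cdot \CDF(\bid_2)$, then since $\CDF$ is non-decreasing we have $\CDF(\bid_1) \le \CDF(\bid_2)$, but $\bid_1 < \bid_2$ together with the equality of payments forces $\CDF(\bid_1) \ge \CDF(\bid_2)$, hence $\CDF(\bid_1) = \CDF(\bid_2)$. So whenever two bids produce the same payment, they also produce the same allocation, and the relation $\allocPay(\bid \cdot \CDF(\bid)) = \CDF(\bid)$ unambiguously specifies a value on each such payment.

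Next I would establish monotonicity on this image. Take any two bids $\bid_1 \le \bid_2$. Since $\CDF$ is non-decreasing and both $\bid$ and $\CDF(\bid)$ are non-negative, the payment $\bid \cdot \CDF(\bid)$ is a product of two non-negative non-decreasing functions, and hence itself non-decreasing in $\bid$; so $\bid_1 \cdot \CDF(\bid_1) \le \bid_2 \cdot \CDF(\bid_2)$. On the allocation side, $\CDF(\bid_1) \le \CDF(\bid_2)$. Reading these two inequalities through the defining identity gives $\allocPay(\bid_1 \cdot \CDF(\bid_1)) \le \allocPay(\bid_2 \cdot \CDF(\bid_2))$, i.e.\ monotonicity of $\allocPay$ on points of the form $\bid \cdot \CDF(\bid)$.

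Finally, I would extend this to the full domain $[0,1]$. The natural convention (and the one used in the subsequent construction of $\allocPayConv$ as the concave envelope of $\allocPay$) is to extend $\allocPay$ to payment values not attained by any $\bid$ by either linear interpolation across the gaps caused by atoms of $\competingbidDist$, or by $\allocPay(p) := \sup\{\CDF(\bid) : \bid \cdot \CDF(\bid) \le p\}$. Either extension preserves monotonicity: the first because it interpolates between two non-decreasing endpoints, and the second because the defining supremum is taken over a set that only grows as $p$ grows. The main (minor) obstacle is simply to settle on this convention for the out-of-image payment values; once that is fixed, monotonicity propagates automatically from the image points.
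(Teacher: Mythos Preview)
Your approach is essentially the same as the paper's: both arguments rest on the observation that $\bid \mapsto \bid \cdot \CDF(\bid)$ and $\bid \mapsto \CDF(\bid)$ are each non-decreasing, so the parametrized curve is non-decreasing. You are more careful than the paper in explicitly checking well-definedness and in discussing how to extend $\allocPay$ off the image of $\bid \mapsto \bid \cdot \CDF(\bid)$; the paper's proof simply takes these for granted.
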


\begin{proof}[Proof of \Cref{lem:monoton allocPay}]
Since $\bid \geq 0$, we have that $\bid \cdot \CDF(\bid)$ remains non-decreasing in $\bid$. 
Concretely, consider $\allocPay(x_1) = \CDF(\bid_1) \le \CDF(\bid_2)=\allocPay(x_2)$. Due to the monotonicity of $\CDF(\bid)$, we have $\bid_1 < \bid_2$. Therefore, $x_1 = b_1 \cdot \CDF(\bid_1) < \bid_2 \cdot \CDF(\bid_2)=x_2$.
Hence, the curve $\allocPay$ is a non-decreasing function. \qedhere
\end{proof}

\begin{lemma}
\label{lem:concave_envelope_G}
Given any competing bid distribution $\competingbidDist$, 
let $\allocPay$ be its allocation-payment curve and let $\allocPayConcaveEnvelope(x)$ be its concave envelope. 
Fix any given value $\val\in[0, 1]$.
For any (possibly randomized) bidding strategy that leads to an expected payment of $x$, then its expected reward is at most $\val\cdot \allocPayConcaveEnvelope(x)$. 
\end{lemma}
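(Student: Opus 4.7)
The plan is to prove the claim via a short two-step argument: first rewrite the expected reward of a randomized strategy as $v$ times an expectation of $\allocPay$ evaluated at a random payment, then dominate $\allocPay$ pointwise by its concave envelope and apply Jensen's inequality.

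Concretely, I would fix any randomized bidding strategy, viewed as a distribution $\mu$ over bids $\bid\in[0,1]$. Since the competing bid is drawn from $\competingbidDist$, for a deterministic bid $\bid$ the per-round expected payment equals $\bid\cdot\competingbidDist(\bid)$ and the per-round expected reward equals $\val\cdot\competingbidDist(\bid)$. By linearity of expectation, under $\mu$ the expected payment is $\expect[\bid\sim\mu]{\bid\,\competingbidDist(\bid)}$ and the expected reward is $\val\cdot\expect[\bid\sim\mu]{\competingbidDist(\bid)}$. Introducing the (scalar) random variable $X\mathrel{:=}\bid\cdot\competingbidDist(\bid)$ with $\bid\sim\mu$, the defining identity $\allocPay(\bid\cdot\competingbidDist(\bid))=\competingbidDist(\bid)$ from \Cref{defn:alloc-pay curve} lets me rewrite the expected reward as $\val\cdot\expect{\allocPay(X)}$, while the expected payment is exactly $\expect{X}=x$.

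The rest is Jensen's inequality. By definition of the concave envelope, $\allocPay(y)\le \allocPayConcaveEnvelope(y)$ for every $y$ in the relevant domain, so $\expect{\allocPay(X)}\le \expect{\allocPayConcaveEnvelope(X)}$. Since $\allocPayConcaveEnvelope$ is concave on its domain, Jensen's inequality gives $\expect{\allocPayConcaveEnvelope(X)}\le \allocPayConcaveEnvelope(\expect{X})=\allocPayConcaveEnvelope(x)$. Chaining these bounds and multiplying by $\val\ge 0$ yields the desired inequality $\val\cdot\expect{\allocPay(X)}\le \val\cdot\allocPayConcaveEnvelope(x)$.

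The only mildly delicate point, rather than any real obstacle, is the domain on which $\allocPay$ is defined: the identity in \Cref{defn:alloc-pay curve} only pins $\allocPay$ down on the range of $\bid\mapsto \bid\,\competingbidDist(\bid)$, which is exactly where the random variable $X$ lives, so the substitution $\allocPay(X)=\competingbidDist(\bid)$ is legitimate almost surely under $\mu$; extending $\allocPay$ arbitrarily (or taking its upper semicontinuous hull) outside this range does not affect any expectation that arises, and it makes the concave envelope well-defined on all of $[0,1]$. Monotonicity of $\allocPay$ (\Cref{lem:monoton allocPay}) ensures that this extension, and hence $\allocPayConcaveEnvelope$, behaves as expected, which is all that is needed to apply Jensen's inequality above.
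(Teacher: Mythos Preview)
Your proposal is correct and follows essentially the same approach as the paper: both arguments rewrite the expected reward as $v$ times a convex combination of values $\allocPay(x_i)$ at payment points $x_i$, and then invoke the defining property of the concave envelope to bound this combination by $\allocPayConcaveEnvelope$ at the averaged payment. The paper phrases the last step as ``the point $(\sum p_i x_i,\sum p_i \allocPay(x_i))$ lies in the convex hull, hence below the envelope,'' while you split it into the pointwise bound $\allocPay\le\allocPayConcaveEnvelope$ followed by Jensen's inequality; your version also treats general distributions $\mu$ rather than finite mixtures, but the content is the same.
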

\begin{proof}[Proof of \Cref{lem:concave_envelope_G}]
% By the definition of the concave envelope, for any $x$ which has a definition in the convex hull $X$, there exist points $x_1, x_2 \in X$ and the non-negative coefficient $p_1, p_1$ with $p_1+p_2 =1$ such that
% \[\allocPayConv(x) = p_1 \cdot \allocPay(x_1) + p_2 \cdot \allocPay(x_2).\]
% Thus, when we get a point from $\allocPayConv(x)$, it is equal to the combination strategy of $\allocPay(x_1)$ and $ \allocPay(x_2)$ with probabilities $p_1$ and $p_2$.

% From $\bid \in [0,1]$ and $\CDF(\bid) \in [0,1]$, specifically $\CDF(1)=1$, we obtain $\bid \cdot \CDF(\bid) \in [0,1]$. Therefore, $X$ is a subset of $[0,1]$, including points $x=0$ and $x=1$. 

% The convex hull of $X$ is $[0,1]$, so $\allocPayConv(x)$ has definition at $x \in [0,1]$, further, $\allocPayConv(0) = \CDF(0)$  and $\allocPayConv(1)=1$.
Let us fix an arbitrary competing bid distribution $\competingbidDist$.
Fix a value $\val$, and consider any randomized combination of bids $\{\bid_i\}_{i=1}^M$ with the corresponding payments $\{x_i\}_{i=1}^M$ where $x_i = \bid_i\cdot \competingbidDist(\bid_i)$ and the probability $\{p_i\}_{i=1}^M$, where $\sum {p_i}= 1$. 
The expected payment from this randomized bidding strategy is:
    % \[\payment_t = \sum_{i=1}^{M} p_i \bid_i \CDF(\bid_i)= \sum_{i=1}^{M} p_i x_i~.\]
    \[\payment =  \sum\nolimits_{i\in[M]} p_i \bid_i \CDF(\bid_i)= \sum\nolimits_{i\in[M]} p_i x_i~.\]
The expected reward $r$ from this randomized bidding strategy is: 
    \[\reward = \val \cdot \sum\nolimits_{i\in[M]} p_i \CDF(\bid_i) = \val \cdot \sum\nolimits_{i\in[M]} p_i \allocPay(x_i)~.\]
By the definition of concave envelope, this point $x=\sum\nolimits_{i\in[M]} p_i \cdot x_i, y=\sum\nolimits_{i\in[M]} p_i \allocPay(x_i)$  will be inside the convex hull. 
Therefore, the expected reward from this randomized bidding strategy is at most $\val \cdot \allocPayConcaveEnvelope(x)$. \qedhere
\end{proof}
% \subsubsection{Deriving the CDF-like Function $\CDFConv$}

% \textcolor{red}{Yilin: \Cref{lem:new_F_with_randomized_strategy} is Updated}
\begin{lemma}
\label{lem:new_F_with_randomized_strategy}
Given any $\competingbidDist$, defining $b_0 = \inf \{b \mid F(b) > 0\}$, there exists a distribution $\CDFConv\in\Delta([0, 1])$ such that its allocation-payment curve $\allocPayConv$ has the following properties:
\begin{itemize}
[topsep=2pt,itemsep=1pt,leftmargin=2.5ex,parsep=1pt]
    \item $\allocPayConv(x) = 0$, for $x \in [0, b_0 \cdot F(b_0))~;$
    \item $\allocPayConv$ is the concave envelope of the allocation-payment curve $\allocPay$ induced by $\competingbidDist$, for $x \in [b_0 \cdot F(b_0), 1]~.$

\end{itemize}
Here, $x$ represents the input variable of the allocation-payment functions.
\end{lemma}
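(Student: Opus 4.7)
The plan is to explicitly construct $F_{\mathrm{conv}}$ and verify its induced allocation-payment curve matches the specified $G_{\mathrm{conv}}$. Writing $x_L := b_0 F(b_0)$, I first fix the target curve: set $G_{\mathrm{conv}} \equiv 0$ on $[0, x_L)$ and equal to the concave envelope of $G$ restricted to $[x_L, 1]$ on the rest. By \Cref{lem:monoton allocPay}, $G$ is non-decreasing, so its concave envelope on $[x_L, 1]$ is concave, non-decreasing, continuous on $(x_L, 1)$, with $G_{\mathrm{conv}}(x_L) = F(b_0)$ and $G_{\mathrm{conv}}(1) = 1$.

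The key geometric observation driving the construction is that for any CDF $\tilde F$, the point $(b \tilde F(b), \tilde F(b))$ on its allocation-payment curve lies on the ray $y = x/b$ through the origin. So to realize $G_{\mathrm{conv}}$ as an allocation-payment curve, for each $b$ I intersect $G_{\mathrm{conv}}$ with the ray $y = x/b$ and read off the $y$-coordinate. Formally, set $F_{\mathrm{conv}}(b) = 0$ for $b < b_0$, and for $b \in [b_0, 1]$ set $F_{\mathrm{conv}}(b)$ to the value $y \geq F(b_0)$ satisfying $G_{\mathrm{conv}}(by) = y$. This is well-defined: the ray lies strictly below $G_{\mathrm{conv}}$ at $x = x_L$ for $b > b_0$ (value $x_L/b < F(b_0)$) and on or above at $x = 1$ (value $1/b \geq 1$), so continuity of $G_{\mathrm{conv}}$ forces a crossing; for $b = b_0$ the crossing is at $(x_L, F(b_0))$ itself; and for $b = 1$ the ray $y = x$ meets $G_{\mathrm{conv}}$ at $(1,1)$, giving $F_{\mathrm{conv}}(1) = 1$. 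As a sanity check: on a linear piece $y = mx + c$ of $G_{\mathrm{conv}}$, the intersection is $y = c/(1-mb)$, a smooth monotone function of $b$.

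I would then verify that $F_{\mathrm{conv}}$ is a valid CDF and that its allocation-payment curve equals $G_{\mathrm{conv}}$. Monotonicity follows geometrically: as $b$ increases, the ray's slope $1/b$ decreases, and by concavity of $G_{\mathrm{conv}}$ the intersection moves up and to the right (equivalently, $x/y$ is monotone along a concave curve lying above the ray from the origin). Right-continuity is inherited from continuity of the concave envelope on $(x_L, 1)$; the range $F_{\mathrm{conv}} \in [0, 1]$ is immediate; $F_{\mathrm{conv}}$ has a point mass of size $F(b_0)$ at $b_0$. The allocation-payment curve of $F_{\mathrm{conv}}$ then equals $G_{\mathrm{conv}}$ by construction: for each $b \geq b_0$ the point $(bF_{\mathrm{conv}}(b), F_{\mathrm{conv}}(b))$ lies on $G_{\mathrm{conv}}$, and as $b$ varies over $[b_0,1]$, $bF_{\mathrm{conv}}(b)$ sweeps out $[x_L, 1]$; on $[0, x_L)$ only bids $b < b_0$ can map there, which yield curve value $0$, matching the extension.

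The main obstacle lies in two degenerate scenarios: flat segments of $G_{\mathrm{conv}}$ can make the equation $G_{\mathrm{conv}}(by) = y$ non-uniquely solvable, and atoms of $F$ (yielding jumps of $G$) must translate consistently into atoms of $F_{\mathrm{conv}}$. The first is resolved by consistently taking the rightmost intersection (preserving monotonicity); the second is routine bookkeeping at jump points (in particular the jump of size $F(b_0)$ at $b_0$), but requires careful case analysis to confirm $F_{\mathrm{conv}}$ satisfies all CDF axioms and that its induced curve does not spuriously deviate from $G_{\mathrm{conv}}$ at the jump locations.
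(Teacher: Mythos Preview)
Your proposal is correct and takes essentially the same approach as the paper. The paper constructs $F_{\mathrm{conv}}$ by inverting the map $\gamma(y) = \beta(y)/y$ where $\beta = G_{\mathrm{conv}}^{-1}$, which is exactly your ray-intersection construction viewed algebraically (the equation $G_{\mathrm{conv}}(by) = y$ is $\beta(y)/y = b$); your geometric monotonicity argument (as the ray's slope $1/b$ decreases, the intersection with the concave curve moves rightward and upward) is the dual of the paper's proof that $\beta(y)/y$ is strictly increasing via convexity of $\beta$.
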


\begin{comment}
\begin{lemma}
Given any $y \in [y_0, 1]$, let $x$ such that $\allocPayConv(x) = y, x \in[0,1]$. There exists a function $\CDFConv (\bid)$ defined on $b \in [x_0, 1]$ such that 
\[ b \cdot \CDFConv (\bid) = x,   \CDFConv (\bid) = y,\]
whenever $(x,y)$ lies on $\allocPayConv$.

Moreover, $\CDFConv$ is piecewise defined by:

\begin{align*}
\CDFConv(\bid)
= \begin{cases}
\gamma^{-1}(\bid), & b \in (x_0, x'), \\
1, & b \in [x', 1],
\end{cases}
\end{align*} 
where $\gamma(y) = \frac{\beta(y)}{y}$. 

\begin{align*}
x_0
= \begin{cases}
0, & \mathrm{if} \: y_0 \ne 0, \\
\gamma(y_1), & \mathrm{if} \: y_0 = 0. 
\end{cases}
\end{align*} 
$y_1$ is defined as the last point which satisfies $\beta(k\cdot y_1) = k \cdot \beta(y_1)$, for any $k \in [0,1]$ ($\beta'(y)$ exists and is a constant for any points in $[0, y_1]$).
\end{lemma}
\end{comment}

\begin{proof}[Proof of \Cref{lem:new_F_with_randomized_strategy}]
We prove this lemma statement by explicitly constructing such distribution $\CDFConv$.

Note that the concave envelope of $\allocPay$, denoted by $\allocPayConcaveEnvelope$, inherits the monotonic property of $\allocPay$. Suppose that there is a point $(x',y')$ on $\allocPayConcaveEnvelope$ where the slope of $\allocPayConcaveEnvelope$ becomes 0. Then for all $x \geq x'$, the slope remains 0 due to the convexity of $\allocPayConcaveEnvelope$. Consequently, $\allocPayConcaveEnvelope$ is constant for $x \geq x'$. Since we know $\allocPayConcaveEnvelope(1)=1$, it follows that $\allocPayConcaveEnvelope(x)=1$ for all $x \geq x'$.

Hence, considering the definition of $\allocPayConv$ and the property of the concave envelope,  we can split it into three parts: 
\begin{itemize}[topsep=2pt,itemsep=1pt,leftmargin=2.5ex,parsep=1pt]
\item For $x \in [0, \bid_0 \cdot \CDF(\bid_0))$, $\allocPayConv(x)=0~$;
\item For $x \in [\bid_0 \cdot \CDF(\bid_0), x')$, $\allocPayConv = 
\allocPayConcaveEnvelope$ is strictly increasing and thus invertible. We define the invertible function of $\allocPayConv(\cdot)$ as 
\begin{align*}
    \beta(y) \triangleq \allocPayConv^{-1}(y),~ 
    y_0 \triangleq \allocPayConv(\bid_0 \cdot \CDF(\bid_0)) = \CDF(\bid_0)~;
\end{align*} 

\item For $x \in [x', 1]$, $\allocPayConv(x)=1~$.
\end{itemize}
% Given any $y \in [y_0, 1]$, let $x \in[0,x']$ such that $\allocPayConv(x) = y$. 
% We can construct a function $\CDFConv (\bid)$ defined on $b \in [x_0, 1]$ such that 
% \[ b \cdot \CDFConv (\bid) = x,   \CDFConv (\bid) = y,\]
% whenever $(x,y)$ lies on $\allocPayConv$.

% Moreover, $\CDFConv$ is piecewise defined by:

% \begin{align*}
% \CDFConv(\bid)
% = \begin{cases}
% \gamma^{-1}(\bid), & b \in (x_0, x'), \\
% 1, & b \in [x', 1],
% \end{cases}
% \end{align*} 
% where $\gamma(y) = \frac{\beta(y)}{y}$. 

% \begin{align*}
% x_0
% = \begin{cases}
% 0, & \mathrm{if} \: y_0 \ne 0, \\
% \gamma(y_1), & \mathrm{if} \: y_0 = 0. 
% \end{cases}
% \end{align*} 
% $y_1$ is defined as the last point which satisfies $\beta(k\cdot y_1) = k \cdot \beta(y_1)$, for any $k \in [0,1]$ ($\beta'(y)$ exists and is a constant for any points in $[0, y_1]$).
The concrete construction process is split into 3 parts based on the value of $x$:
\paragraph{C1: When $x \in [0,  \bid_0 \cdot \CDF(\bid_0))$.}
By the definition of $\allocPayConv(x)$, we can easily derive that $\CDFConv(\bid) = 0$ and $\bid \in [0, \bid_0)$;
\paragraph{C2: When $x \in [\bid_0 \cdot \CDF(\bid_0), x')$.}
Assume that $\CDFConv$ exists, for any $\bid_0 \leq \bid_1 < \bid_2 < x'$, if $\CDF(\bid_1) = \CDF(\bid_2)$, then $\allocPayConv (\bid_1 \cdot \CDFConv (\bid_1)) = \allocPayConv (\bid_2 \cdot \CDFConv (\bid_2)) $ while $\bid_1 \cdot \CDFConv (\bid_1) \neq \bid_2 \CDFConv (\bid_2)$, which violates the strict monotonicity of $\allocPayConv$. So we can infer that for any $\bid_0 \leq b_1 < b_2 < x'$, $\CDF(\bid_1) \ne \CDFConv(\bid_2)$. This proves that if $\CDFConv$ exists, the invertible function of $\CDFConv$ exists. 

When $x \in [0, x')$, $\allocPay(x) \in [y_0, 1]$. Since $\beta(y) =\bid \cdot y$, we have 
    \[\bid = \frac{\beta(y)}{y}  = \CDFConv^{-1}(y)~,\]
therefore,
    \[ \CDFConv^{-1}(y) = \frac{\beta(y)}{y}~.\]
Define $\gamma(y) \triangleq \frac{\beta(y)}{y}$.
Since $\beta(y)$ is known, we can get $\CDFConv^{-1}(y)=\gamma(y)$.
Note that $\gamma(y)$ must be strictly monotonic in $y$ in order for $\CDFConv$ to exist. Indeed, if $\gamma$ fails to be consistently monotonic, then we cannot assign a unique $\mathrm{d}$ to each $y$, and thus cannot construct a well-defined $\CDFConv$.
When $\gamma$ is strictly monotonic, we can define its inverse $\gamma^{-1}$. Consequently, on any interval where $\beta(y)/y$ is strictly increasing, we obtain
\begin{align*}
    \CDFConv(\bid) = \gamma^{-1}(\bid)~.
\end{align*}
Our next step is to prove $\frac{\beta(y)}{y}$ is strictly increasing.
Set $y_0 < a_1 < a_2 \leq 1$, then $k \in (0,1)$ exists such that $a_1 = k\cdot a_2 + (1-k)\cdot y_0$. Due to the convexity of $\beta(y)$ and $\beta(y_0)=0$
\begin{align*}
  \frac{\beta(a_1)}{a_1}  =\frac{\beta(k\cdot a_2 + (1-k)\cdot y_0)}{k\cdot a_2 + (1-k)\cdot y_0} 
  &  \frac{k \cdot \beta(a_2) + (1-k) \cdot \beta(y_0)}{k\cdot a_2 + (1-k)\cdot y_0}
  \leq \frac{k \cdot \beta(a_2)}{k\cdot a_2 + (1-k)\cdot y_0}~.
\end{align*} 
Since $y_0 = \CDF(\bid_0) > 0$ from the definition of $\bid_0$,
\begin{align*}
  \frac{\beta(a_1)}{a_1}  < \frac{k \cdot \beta(\cdot a_2)}{k\cdot a_2} = \frac{\beta(a_2)}{a_2}~.
\end{align*} 
We find that $\frac{\beta(y)}{y}$ is strictly increasing, therefore, $\CDFConv(\bid)$ exists, for $x \in [\bid_0 \cdot \CDF(\bid_0), x')$

\paragraph{C3: When $x \in [x', 1]$}
Since $\allocPayConv=1$, we have
    \[\allocPayConv(x) = \CDFConv(\bid) =1~,\]
when $\bid=\bid\CDFConv(\bid)=x$.

Combining C1 and C2, we obtain a piecewise definition of $\CDFConv(\bid)$:
\begin{align*}
\CDFConv(\bid)
= \begin{cases}
0~, & \bid \in [0, \bid_0)~,\\
\gamma^{-1}(\bid)~, & \bid \in [\bid_0, x')~, \\
1~, & \bid \in [x', 1]~.
\end{cases}
\end{align*}
Here, $\bid_0$ is defined in \Cref{lem:new_F_with_randomized_strategy}; $x'$ is the variable of concave envelope $\allocPayConcaveEnvelope$ such that $(x', \allocPayConcaveEnvelope(x'))$ is the point where the slope of $\allocPayConcaveEnvelope$ becomes 0. 

% $y_1$ is defined as the last point $y_1$ which satisfies $\beta(k\cdot y_1) = k \cdot \beta(y_1)$, for any $k \in [0,1]$. 
% Since we could construct $\CDFConv$ given any $\CDF$, the lemma is claimed. 
The proof then finishes. \qedhere
\end{proof}

\begin{lemma}
\label{lem:random_combination}
Fix any value $\val\in[0, 1]$ and any competing bid distribution $\competingbidDist$.
Let $\CDFConv$ be defined as in \Cref{lem:new_F_with_randomized_strategy}.
Given any bid $\bid \in [0, 1]$, there exists a distribution over two bids $\bid_1, \bid_2$ with probability $p_1, p_2$ (respectively), such that
the bidder's expected reward (resp.\ expected payment) when submitting the deterministic bid $\bid$ under $\CDFConv$ matches the expected reward (resp.\ expected payment) when submitting the randomized bids $\bid_1$ w.p.\ $p_1$ and $\bid_2$ w.p.\ $p_2$ under $\competingbidDist$,
namely, we have $\val \CDFConv(\bid) = p_1 \cdot \val \competingbidDist(\bid_1) + p_2 \cdot \val \competingbidDist(\bid_2)$ (resp.\ $\bid \CDFConv(\bid) = p_1 \cdot \bid_1 \competingbidDist(\bid_1) + p_2 \cdot \bid_2 \competingbidDist(\bid_2)$).
\end{lemma}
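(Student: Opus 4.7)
The plan is to reformulate the claim geometrically and then invoke the defining property of the concave envelope. The two target identities together are equivalent to the single vector equation
\[
  (\bid \cdot \CDFConv(\bid),\, \CDFConv(\bid)) \;=\; p_1 \cdot (\bid_1 \competingbidDist(\bid_1),\, \competingbidDist(\bid_1)) \;+\; p_2 \cdot (\bid_2 \competingbidDist(\bid_2),\, \competingbidDist(\bid_2)),
\]
since the first coordinate is the expected payment and the second coordinate, multiplied by $\val$, is the expected reward. So it suffices to exhibit a two-point convex decomposition of the point $(\bid \cdot \CDFConv(\bid), \CDFConv(\bid))$, which lies on the allocation-payment curve $\allocPayConv$ of $\CDFConv$, in terms of two points lying on the original curve $\allocPay$ of $\competingbidDist$.

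First I would dispose of the degenerate range $\bid \cdot \CDFConv(\bid) < \bid_0 \competingbidDist(\bid_0)$: by \Cref{lem:new_F_with_randomized_strategy} we then have $\CDFConv(\bid) = 0$, so the left-hand side above is $(0, 0)$. Choosing $\bid_1 = \bid_2 = 0$ with any $p_1 + p_2 = 1$ works, because this range is nonempty only when $\bid_0 > 0$, in which case $\competingbidDist(0) = 0$ by the definition of $\bid_0$, making both sides zero.

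In the remaining range $\bid \cdot \CDFConv(\bid) \in [\bid_0 \competingbidDist(\bid_0),\, 1]$, \Cref{lem:new_F_with_randomized_strategy} tells us that $\allocPayConv$ coincides with the concave envelope $\allocPayConcaveEnvelope$ of $\allocPay$ on this region, so the point $(\bid \CDFConv(\bid), \CDFConv(\bid))$ lies on $\allocPayConcaveEnvelope$. Because $\allocPay$ maps $[0,1]$ into $[0,1]$, its hypograph is contained in the unit square, and the hypograph of $\allocPayConcaveEnvelope$ is by definition its upper convex hull. By Carath\'eodory's theorem applied to the one-dimensional upper boundary of a planar convex hull, every point on $\allocPayConcaveEnvelope$ admits a representation
\[
  (x,\, \allocPayConcaveEnvelope(x)) \;=\; p_1 \cdot (x_1, \allocPay(x_1)) + p_2 \cdot (x_2, \allocPay(x_2))
\]
with $p_1, p_2 \geq 0$, $p_1 + p_2 = 1$, and $x_1 \leq x \leq x_2$. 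Using the defining relation $\allocPay(\bid_i \competingbidDist(\bid_i)) = \competingbidDist(\bid_i)$, I pick $\bid_i \in [0,1]$ with $\bid_i \competingbidDist(\bid_i) = x_i$ and $\competingbidDist(\bid_i) = \allocPay(x_i)$. Plugging into the displayed equation reproduces the vector equation from the first paragraph, and reading off the two coordinates recovers the two identities of the lemma.

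The step I expect to require the most care is the recovery of a bid $\bid_i$ from $x_i$ when $\competingbidDist$ has atoms: the parametrization $\bid \mapsto (\bid \competingbidDist(\bid), \competingbidDist(\bid))$ then has jumps in its first coordinate, and the Carath\'eodory representation might nominally select an $x_i$ lying in a skipped interval. I would handle this by restricting the representation to use corner points of $\allocPay$, which are always attained as pre-images, and by invoking the paper's convention that ties are broken in favor of the bidder so that $\competingbidDist(\bid_i) = \allocPay(x_i)$ holds exactly at those corners.
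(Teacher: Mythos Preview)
Your proposal is correct and follows essentially the same approach as the paper: reduce to the geometric statement that the point $(b\,\CDFConv(b),\CDFConv(b))$ on $\allocPayConv$ is a two-point convex combination of points on the original curve $\allocPay$, then recover the bids $b_i$ from the abscissae $x_i$. The paper handles the degenerate range by simply noting neither strategy would bid there, and in the main range writes $b_i = x_i/\allocPay(x_i)$ directly rather than naming Carath\'eodory; your explicit flag about atoms and jumps in the parametrization is a point the paper glosses over but is resolved exactly as you suggest, by taking the supporting points to be actual corners of $\allocPay$.
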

\begin{proof}[Proof of \Cref{lem:random_combination}]
According to the definition of $\CDFConv$ in \Cref{lem:new_F_with_randomized_strategy}, when $\bid \in [0, \bid_0)$ ($\bid_0 = \inf \{\bid \mid \CDF(\bid) > 0\}$), both $\CDFConv(b)$ and $\CDF(\bid)$ equal zero, which cause zero payment and reward. Thus, either the (possibly randomized) optimal strategy in $\CDF$ or the deterministic strategy in $\CDFConv$ will not select them as bids. For the proof of this lemma, we can consider only the bid where $\bid \in [\bid_0, 1]$, that is, the phase in the concave envelope $\allocPayConcaveEnvelope$.

After we calculate the optimal $\bidConv$ with $\CDFConv$ and obtain the corresponding point $(\xConv, \yConv)$ on the concave envelope $\allocPayConcaveEnvelope$, we can express this point as a convex combination of points on the original curve $\allocPay$. Specifically, we can find two bids $x_1, x_2 \in [\bid_0 \cdot \CDF(\bid_0), 1]$  and the non-negative coefficient $p_1$ and $p_2$ satisfying $p_1+p_2=1$, such that
\[
(\xConv, \yConv)
=
\bigl(p_1 x_1 + p_2 x_2,
p_1 \allocPay(x_1) + p_2 \allocPay(x_2)\bigr)~.
\]
% \textbf{Obtaining the randomized strategy.} 
Since we know $y_1 =  \allocPay(x_1) = \allocPayConcaveEnvelope(x_1)$ and $y_2 = \allocPay(x_2) = \allocPayConcaveEnvelope(x_2)$, combining with $\xConv=p_1 x_1 + p_2 x_2$ and $p_1 + p_2 =1$, $x_1, x_2$ can be determined. 

% \yilin{another way to explain it:} To obtain $x_1$ and $x_2$, we can calculate all the intersection points of the curve $\allocPay$ and $\allocPayConcaveEnvelope$ and find the closet two points to $\xConv$, it will be the $x_1$ and $x_2$ based on the definition of concave envelope, we can further obtain $(x_1, \allocPay(x_1))$ and $(x_2, \allocPay(x_2))$.

After obtaining $(x_1, \allocPay(x_1))$ and $(x_2, \allocPay(x_2))$, we can further determine $\bid_1=\frac{x_1}{\allocPay(x_1)}$ with $\CDF(\bid_1) = \allocPay(x_1)$, and $\bid_2=\frac{x_2}{\allocPay(x_2)}$ with $\CDF(\bid_2) = \allocPay(x_2)$. 
Thus, bids $\bid_1$ and $\bid_2$ with probability $p_1$ and $p_2$ define the optimal randomized strategy on the original competing bid distribution $\CDF$, and it achieves the same expected reward and expected payment with the deterministically bidding $\bidConv$ under the $\CDFConv$: 
\[v \CDFConv(\bid) = v \cdot (p_1 y_1 + p_2 y_2) = p_1 \cdot \val \competingbidDist(\bid_1) + p_2 \cdot \val \competingbidDist(\bid_2)~,  \]
and
\[
    \bid \CDFConv(\bid) = p_1 x_1 + p_2 x_2 = p_1 \cdot \bid_1 \competingbidDist(\bid_1) + p_2 \cdot \bid_2 \competingbidDist(\bid_2)~. \qedhere
\] 
\end{proof}
\subsection{Proofs of \Cref{prop:performance algo known F} }
\begin{proof}[Proof of ROI violation in \Cref{prop:performance algo known F}]
We discuss the expected ROI violation by first fixing a value sequence $\valSeq^T$ and then calculating the expected value when considering any value sequence $\valSeq^T$ i.i.d. drawn from $\valDist^T$.

For a fixed value sequence $\valSeq^T$, we have 
$\bid_t = \textrm{argmax}_{\bid\in[0, 1]} \left[\frac{1+\lambda_t}{\lambda_t}\cdot \val_t\cdot \CDFConv(\bid)- \bid \cdot \CDFConv(\bid)\right]$ at each round $t$.
Since $\bid \in [0,1]$, when $\bid =0$, the function $\frac{1+\lambda_t}{\lambda_t}\cdot \val_t\cdot \CDFConv(\bid)- \bid \cdot \CDFConv(\bid) \ge 0$. Therefore, $\max~\left[\frac{1+\lambda_t}{\lambda_t}\cdot \val_t\cdot \CDFConv(\bid)- \bid \cdot \CDFConv(\bid)\right] \ge 0$.
Thus, 
    \[
        \frac{1+\lambda_t}{\lambda_t}\cdot \val_t\cdot \CDFConv(\bid_t)- \bid_t \cdot \CDFConv(\bid_t) \ge 0~.
    \]
We further derive
    \[b_t \leq \frac{1+\lambda_t}{\lambda_t}\cdot \val_t~.\]
To simplify notation, we denote by $g_t(\bid_t)$ the difference between the bidder's utility and payment:
\begin{equation} \label{equation:g_t(b_t)}
\begin{aligned} 
  g_t(\bid_t)\triangleq 
  \val_t\cdot \CDFConv(\bid_t) - \bid_t \cdot \CDFConv(\bid_t)~.
\end{aligned} 
\end{equation}
Therefore, $g_t(b_t)$ satisfies:
  \[
      g_t(\bid_t)=\val_t\cdot \CDFConv(\bid_t) - \bid_t \cdot \CDFConv(\bid_t) = (\val_t- \bid_t) \cdot \CDFConv(b_t) \ge -\frac{\val_t}{\lambda_t}\cdot \CDFConv(b_t) \ge -\frac{1}{\lambda_t}~.
    \]
Thus, 
  \[
     \max\left(-1, -\frac{1}{\lambda_t}\right) \leq g_t(b_t) \leq \val_t \cdot \CDFConv(b_t)~.
  \]
% From line \ref{equation_of_lambda} of the algorithm, $\{\lambda_t\}_{t=1}^{T}$,  $\lambda_{t+1}=\lambda_t\textrm{exp}[-\alpha g_t(b_t)] =\textrm{exp}[-\alpha \sum\nolimits_{t'\in[t]}g'_{t'}(b'_{t'})]$ with the initial value $\lambda_1=1$. 
Given $\alpha = \frac{1}{\sqrt{T}}$ and the total ROI violation $ \TotalROIViolation(\mathrm{\Cref{alg:randomized_strategy_under_F}} \mid (\CDF, \valSeq^T)) = -\sum\nolimits_{t\in[T]}g_t(b_t)$:
% \wtcomment{We didn't define $\mathrm{\Cref{alg:randomized_strategy_under_F}}$, right? also we should be consistent with $\Algorithm$ and $\ALG$.} \yilin{Changed to $\mathrm{\Cref{alg:randomized_strategy_under_F}}$ to make it consistent with \Cref{prop:performance algo known F} for the whole section, since we don't need to explicitly mention the input in this section. I also update the definition of $\Algorithm$ to make it consistent with $\ALG$.}
\begin{itemize}[topsep=2pt,itemsep=1pt,leftmargin=2.5ex,parsep=1pt]
    \item If $-\sum_{t\in[T]} g_t(b_t) \le \log T \sqrt{T}$, the total ROI violation is is bounded by:
    \[
       \TotalROIViolation(\mathrm{\Cref{alg:randomized_strategy_under_F}} \mid (\CDF, \valSeq^T)) = - \sum\nolimits_{t\in[T]}g_t(b_t) \le \log T \sqrt{T}~.
    \]
    \item If  $-\sum_{t\in[T]} g_t(b_t) > \log T \sqrt{T}$, we can find $T'$ which is the last time that $\sum_{t\in[T]} g_t(b_t) \le \log T \sqrt{T}$, so we know for any $t>T'+1$, the dual variable $\lambda_t$ must be larger than $T$ since
        \[ \lambda_{t}=\textrm{exp}\left[-\alpha \sum_{t'=1}^{t-1}g_t'(b_t')\right]  > \textrm{exp}[\alpha\sqrt{T}] = T~. \]
    Thus,
        \[ -g_t(b_t) \leq \frac{1}{\lambda_t} \le \frac{1}{T}~.\]
    The ROI violation is bounded by 
    \begin{align*}
        \TotalROIViolation(\mathrm{\Cref{alg:randomized_strategy_under_F}} \mid (\CDF, \valSeq^T)) 
          &   = - \sum\nolimits_{t\in[T]}g_t(b_t)\\
          &  =  - \sum\nolimits_{t\in[T']}g_t(b_t) +  - \sum\nolimits_{t > T'}g_t(b_t) \\
          &  \le \log T \sqrt{T} + 2 \\
          &  \le 2\log T \sqrt{T}~.
    \end{align*}
\end{itemize}
Thus, for $\valSeq^T$ i.i.d.\ drawn from $\valDist^T$, the expected ROI violation of the algorithm under the distribution $\valDist$ satisfies
\begin{align*}
    \TotalROIViolation(\mathrm{\Cref{alg:randomized_strategy_under_F}} \mid (\CDF, \valDist)) 
    =  \expect[\valSeq^T \sim \valDist^T]{\TotalROIViolation(\mathrm{\Cref{alg:randomized_strategy_under_F}} \mid (\CDF, \valSeq^T))} 
    \le 2\log T \sqrt{T}~. 
% \qedhere
\end{align*}
\end{proof}
% \subsubsection{Regret Bound for \Cref{alg:randomized_strategy_under_F}}
% \begin{lemma} \label{lem:randomized_strategy_regret_uder_F}
% Consider an unknown distribution of the values $\valDist$ from the environment.  Given the true distribution $\CDF$ of the maximum competing bid, and time horizon $T$, the expected Regert of \Cref{alg:randomized_strategy_under_F} is bounded by:
% \begin{align*} 
%   \TotalRegret(\mathrm{\Cref{alg:randomized_strategy_under_F}} \mid (\CDF, \valDist)) = O(\sqrt{T})~.
% \end{align*} 
% \end{lemma}
\begin{proof}[Proof of expected regret in \Cref{prop:performance algo known F}]
We split the proof into two main steps:
\paragraph{Step 1: To prove the regret is bounded by $ \expect[\valSeq^T \sim \valDist^T]{\sum_{t\in[T]} \lambda_t \cdot g_t(b_t)}$.}
% \begin{align*} 
%   \TotalRegret （ \text{\Cref{alg:randomized_strategy_under_F}} \mid (\CDF, \valDist)) \le}.
% \end{align*} 

The reward function $\reward_t(\bid)=\val_t\cdot \CDF(b)$. For $\lambda \ge 0$, define a function of $\lambda$ as:
\[
  f_t^*(\lambda) \triangleq \max_{\bid \in [0,1]} [\reward_t(b) + \lambda \cdot g_t(\bid)]~,
\]
where $g_t(\bid)$ is defined in Eqn. \ref{equation:g_t(b_t)}.
Then, we define
\[
    \bar{D}(\lambda \mid (\CDF, \valDist)) \triangleq \expect[\valSeq^T \sim \valDist^T]{f_t^*(\lambda)}~.
\]
We first get the upper bound of the optimal randomized strategy $\opt$. Due to the weak duality, we know that the primal question will be less or equal to the dual problem. Since $\val_t$ is i.i.d generated from the distribution $\valDist$, the optimal strategy of each round is independent. We can prove that the total reward of $T$ rounds for the optimal strategy will be bounded by:
\begin{align*} 
  \TotalReward(\opt \mid (\CDF, \valDist)) \le T \cdot \min_{\lambda \ge0} \bar{D}(\lambda \mid (\CDF, \valDist))~,
\end{align*} 
where $\opt$ is the corresponding optimal bidding algorithm in hindsight. 
Second, we prove the lower bound of \Cref{alg:randomized_strategy_under_F}. when we get the dual variable $\lambda_t$ at round $t$, the lower bound of \Cref{alg:randomized_strategy_under_F}'s reward is 
\begin{align*} 
  \TotalReward(\mathrm{\Cref{alg:randomized_strategy_under_F}} \mid (\CDF, \valDist)) = T \cdot \bar{D}(\bar \lambda_T\mid (\CDF, \valDist)) - \expect[\valSeq^T \sim \valDist^T]{\sum\nolimits_{t\in[T]} \lambda_t \cdot g_t(b_t)}~, 
\end{align*} 
where $\bar\lambda_T=\frac{1}{T}\sum_{t\in[T]} \lambda_t$.
% \wtcomment{it should be $\bar \lambda_T$?} \yilin{updated.}

To prove this, at round $t$, from the optimization algorithm, we can get that
\[
    \bid_t = \textrm{argmax} [\val_t\cdot \CDF(\bid) + \lambda_t \cdot g_t(\bid)]~.
\] 
Therefore, the reward in round $t$ satisfies
\begin{align*} 
\reward_t(b_t)   & = \val_t\cdot \CDF(\bid_t) \\
  & = \max{[\val_t\cdot \CDF(\bid) + \lambda_t \cdot g_t(\bid)]} - \lambda_t \cdot g_t(\bid_t) \\
  & = f_t^*(\lambda_t) - \lambda_t \cdot g_t(b_t)~.
\end{align*} 
Taking expectation on both sides given the output until iteration $t-1$, denote the condition as $\sigma_{t-1}:$
\[
    \expect{\reward_t(b_t) \mid \sigma_{t-1}} =  \expect{f_{t}^*(\lambda_t) \mid \sigma_{t-1}} - \expect{\lambda_t \cdot g_t(b_t) \mid \sigma_{t-1}}~.
\]
Because $\sigma_{t-1}$ decides $\lambda_t$, $\expect{f_t^*(\lambda_t) \mid \sigma_{t-1}}=\expect[\valSeq^{t-1} \sim \valDist^{t-1}]{f_t^*(\lambda_t)}$. We have
\begin{align*} 
  \expect{\reward_t(b_t) \mid \sigma_{t-1}} =  \bar{D}(\lambda_t \mid (\CDF, \valDist)) - \expect{\lambda_t \cdot g_t(b_t) \mid \sigma_{t-1}}~.
\end{align*} 
The sum of the rewards over $T$ rounds,
\begin{align*}
    \TotalReward(\mathrm{\Cref{alg:randomized_strategy_under_F}} \mid (\CDF, \valDist)) 
      & = \expect[\valSeq^T \sim \valDist^T]{ \sum\nolimits_{t\in[T]}\reward_t(\bid_t)} \\
      & =  \sum\nolimits_{t\in[T]}\bar{D}(\lambda_t \mid (\CDF, \valDist)) - \expect[\valSeq^T \sim \valDist^T]{ \sum\nolimits_{t\in[T]}\lambda_t \cdot g_t(b_t)}~.
\end{align*}
By the Definition of regret in Eqn.~\eqref{equation:total_regret}, we have:
\begin{align*}
    \TotalRegret(\mathrm{\Cref{alg:randomized_strategy_under_F}}\mid (\competingbidDist, \valDist))
      & = \expect[\valSeq^T \sim \valDist^T]{\TotalReward(\opt \mid (\competingbidDist, \valSeq^T)) - \TotalReward(\mathrm{\Cref{alg:randomized_strategy_under_F}}\mid (\competingbidDist, \valSeq^T)}\\
      & = \expect[\valSeq^T \sim \valDist^T]{\TotalReward(\opt \mid (\CDF, \valSeq^T)} - \expect[\valSeq^T \sim \valDist^T]{\TotalReward(\mathrm{\Cref{alg:randomized_strategy_under_F}} \mid (\CDF, \valSeq^T)} \\
      & \le T \cdot \min_{\lambda \ge0} \bar{D}(\lambda \mid (\CDF, \valDist)) -  \sum\nolimits_{t\in[T]}\bar{D}(\lambda_t \mid (\CDF, \valDist)) + \expect[\valSeq^T \sim \valDist^T]{ \sum\nolimits_{t\in[T]}\lambda_t \cdot g_t(b_t)} \\
      & \le \expect[\valSeq^T \sim \valDist^T]{ \sum\nolimits_{t\in[T]}\lambda_t \cdot g_t(b_t)}~.
\end{align*}
\paragraph{Step 2: To prove, for any sequence of input value $\valSeq^T$, the sequences of $\{\bid_t\}_{t=1}^{T}$ and $\{\lambda_t\}_{t=1}^{T}$ generated by \Cref{alg:randomized_strategy_under_F} satisfies}
\begin{equation} \label{equation:labmda_g_t}
    \begin{aligned}
        &  \sum\nolimits_{t\in[T]}\lambda_t \cdot g_t(\bid_t) = O(\sqrt{T})~.
    \end{aligned} 
\end{equation}
Since $\lambda_{t+1}=\lambda_t \cdot \textrm{exp}[-\alpha \cdot g_t(b_t)$], $\alpha g_t$ can be represented as:
\[\alpha g_t(\bid_t)= \log(\lambda_t/\lambda_{t+1})~.\]
In the mirror descent framework, we have $V_h(y,x)=h(y)-h(x) - h'(x) \cdot(y-x)$ (Bregman divergence) and $h(u) = u\log u - u$ (generalized negative entropy); thus, $V_h(y,x)$ can be rewritten as:
\[
    V_h(y,x)=y\log(y/x)-y+x ~.
\]
Therefore, we can rewrite $\alpha g_t(b_t)\cdot\lambda_t$ as: 
\begin{align*}
  \alpha g_t(b_t)\cdot\lambda_t   &  =\alpha g_t(b_t)\cdot(\lambda_t - \lambda_{t+1}) + \alpha g_t(b_t) \cdot \lambda_{t+1} \\
    & =\alpha g_t(b_t)\cdot(\lambda_t-\lambda_{t+1}) + \log(\lambda_t/\lambda_{t+1}) \cdot \lambda_{t+1} \\
    & = (\alpha g_t(b_t) + 1)\cdot(\lambda_t-\lambda_{t+1}) - V_h(\lambda_{t+1}, \lambda_t)~. 
\end{align*} 
Due to the local strong convexity of $V_h$, for any $x, y > 0$,
\begin{align*}
    & V_h(y,x)=y\log(y/x)-y+x \geq \frac{1}{2\max(x,y)}\cdot(y-x)^2~,
\end{align*} 
so $\alpha g_t(b_t)\cdot\lambda_t$ satisfies: 
\[
    \alpha g_t(b_t)\cdot\lambda_t \le (\alpha g_t(b_t) + 1)\cdot(\lambda_t-\lambda_{t+1}) - \frac{(\lambda_{t+1}-\lambda_t)^2}{2\max(\lambda_t,\lambda_{t+1})}~.
\]
Based on Cauchy-Schwarz inequality,
% \wtcomment{using $(\frac{3}{4})$, you may want to consider using $\left(\frac{3}{4}\right)$ as the bracket size scales with inside terms.} \yilin{Updated.}
\[
    \left(\frac{\lambda_t - \lambda_{t+1}}{\sqrt{2}\max(\lambda_t, \lambda_{t+1})}\right)^2 + \left( \frac{\alpha g_t(b_t)}{\sqrt{2}} \right)^2 \ge \alpha g_t(b_t) \cdot \frac{\lambda_t - \lambda_{t+1}}{\max(\lambda_t, \lambda_{t+1})}~. 
\]
Therefore,
\[
    \frac{(\lambda_t - \lambda_{t+1})^2}{2\max(\lambda_t, \lambda_{t+1})} + \frac{1}{2}\alpha^2 g_t^2(b_t)^2 \cdot \max(\lambda_t, \lambda_{t+1}) \ge \alpha g_t(b_t) \cdot (\lambda_t - \lambda_{t+1})~.
\]
$\alpha g_t(b_t)\cdot\lambda_t$ satisfies: 
\begin{equation} \label{equation:alpha_g_lambda}
  \begin{aligned}
  \alpha g_t(b_t)\cdot\lambda_t \le (\lambda_t - \lambda_{t+1})+\frac{1}{2}\alpha^2g_t(b_t)^2\cdot \max(\lambda_t, \lambda_{t+1})~. 
\end{aligned}  
\end{equation}
We know that $g_t(b_t)$ is bounded by $\max(-1, -\sfrac{1}{\lambda_t}) \leq g_t(b_t) \leq \val_t \cdot x_t(b_t)$. We now bound the term $\frac{1}{2}\alpha^2g_t(b_t)^2\cdot \max(\lambda_t, \lambda_{t+1})$ in a case-wise manner.

\textbf{C1: When we assume $g_t(\bid_t) \ge 0 $.} $g_t(\bid_t) \leq 1 $ with $\alpha = \frac{1}{\sqrt{T}}$ imply $\alpha g_t(b_t) \leq 1$, so $\textrm{exp}(-\alpha g_t(b_t)) \leq 1- \frac{\alpha g_t(\bid_t)}{2} $
\begin{align*}
    & \lambda_{t+1}=\lambda_{t} \textrm{exp}[-\alpha g_t(b_t)] \leq \lambda_{t} \cdot (1 - \alpha g_t(b_t)/2) = \lambda_t - \frac{1}{2}\alpha \lambda_t\cdot g_t(b_t)~.
\end{align*} 
We can infer that $\lambda_t \geq \lambda_{t+1}$. Combine with the bound of $g_t(b_t)$: $0 \leq g_t(b_t) \leq 1 $ to obtain:
\begin{align*}
    & \frac{1}{2}\alpha^2g_t(b_t)^2 \cdot \max(\lambda_t, \lambda_{t+1}) = \frac{1}{2}\alpha^2g_t(b_t)^2 \cdot \lambda_t \leq \alpha \frac{\alpha g_t(b_t)\lambda}{2} \leq \alpha(\lambda_t - \lambda_{t+1})~.
\end{align*} 
\textbf{C2: When we assume $g_t(\bid_t) < 0 $.} $g_t(b_t) \geq \max(-1, -\frac{1}{\lambda_t})$ and $\lambda_t \leq \lambda_{t+1}$
\begin{align*}
    & \frac{1}{2}\alpha^2g_t(b_t)^2 \cdot \max(\lambda_t, \lambda_{t+1}) = \frac{1}{2}\alpha^2g_t(b_t)^2 \cdot \lambda_{t+1} \leq \alpha^2 \cdot \frac{1}{\lambda_t}\cdot \lambda_{t+1}~.
\end{align*} 
Since $g_t(b_t) \geq -1 $ with $\alpha = \frac{1}{\sqrt{T}}$, it implies $\alpha g_t(b_t) \geq -1$. Therefore,
\begin{align*}
    & \lambda_{t+1}=\lambda_{t} \textrm{exp}[-\alpha g_t(b_t)] \leq  e\lambda_t~.
\end{align*} 
We have
\begin{align*}
  \frac{1}{2}\alpha^2g_t(\bid_t)^2 \cdot \max(\lambda_t, \lambda_{t+1}) \le 2\alpha^2~.
\end{align*} 
When $\alpha g_t(\bid_t) \in [-1, 0]$, $\textrm{exp}[-\alpha g_t(b_t)] \le (1-2\alpha g_t(\bid_t))$. Therefore, the relationship between $\lambda_{t+1}$ and $\lambda_t$ satisfies \[\lambda_{t+1}=\lambda_{t} \textrm{exp}[-\alpha g_t(b_t)] \le \lambda_{t} (1-2\alpha g_t(\bid_t))~.\]
Applying $g_t(\bid_t) \in[ -\frac{1}{\lambda_t}, 0)$, 
\begin{align*}
    \lambda_{t+1} - \lambda_{t} \le - 2\lambda_{t} \cdot \alpha g_t(\bid_t) \le 2\alpha~.
\end{align*}
Thus, Summarizing different cases, $\alpha^2g_t(b_t)^2 \cdot \max(\lambda_t, \lambda_{t+1})$ satisfies:
\[\frac{1}{2}\alpha^2g_t(\bid_t)^2 \cdot \max(\lambda_t, \lambda_{t+1}) \le \alpha (\lambda_t - \lambda_{t+1}) + 2\alpha^2.\]
Since $\alpha = \frac{1}{\sqrt{T}} \ge 1$ when $T\ge 1$, combining with Eqn.~\eqref{equation:alpha_g_lambda}, we have
\[
    \alpha g_t(\bid_t) \cdot \lambda_t \le 2(\lambda_t - \lambda_{t+1}) + 2\alpha^2~.
\]
Sum over $T$ rounds:
\begin{align*}
     \sum\nolimits_{t\in[T]} g_t(\bid_t) \cdot \lambda_t    & \le \frac{1}{\alpha} \sum\nolimits_{t\in[T]} \left(2(\lambda_t - \lambda_{t+1}) + 2\alpha^2 \right) \\
      & \le \frac{1}{\alpha} (4-2\lambda_{T+1}) \\
      & = O(\sqrt{T})~.
\end{align*}
The Eqn.~\eqref{equation:labmda_g_t} has been claimed. \qedhere
\end{proof}

\section{Missing Proofs in \texorpdfstring{\Cref{sec:full_feedback}}{sec 4}}
\label{apx:proof fullfeedback}

\begin{lemma} \label{lem:full_feedback-distance_between_optimistic_CDF_and_CDF}
   For an unknown true CDF $\CDF$, we construct an estimated CDF $\empiricalCDF_m$ as in Eqn.~\eqref{eq:empiricalCDF} and define an optimistic CDF $\optimisticCDF_m$ as in Eqn.~\eqref{eq:optimisticCDF}. Consequently, the distance between $\optimisticCDF_m$ and $\CDF$ can be bounded by $2\eps_m$ and $\optimisticCDF_m(\bid) \ge \CDF(\bid)~$ for $\bid \in [0,1]$ with probability $1-O(\sfrac{1}{N_m})$, when $\eps_m$ satisfies $\eps_m  = \Theta\left(\frac{\log N_m}{\sqrt{N_m}}\right)$ ($N_m$ is the number of samples we used for estimation).
\end{lemma}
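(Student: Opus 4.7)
\medskip

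The plan is to recognize this as a direct application of the Dvoretzky--Kiefer--Wolfowitz (DKW) inequality. Observe that in the full feedback setting the realized competing bid $\competingbid_t$ is revealed at the end of every round regardless of what bid the algorithm submitted, so the samples aggregated over the first $m-1$ stages form a collection of $N_m$ genuinely i.i.d.\ draws from $\competingbidDist$. Consequently, the empirical CDF $\empiricalCDF_m$ from Eqn.~\eqref{eq:empiricalCDF} is exactly the empirical CDF of these samples, and DKW immediately gives
\[
    \Pr\!\left[\sup_{\bid \in [0,1]} \bigl|\empiricalCDF_m(\bid) - \CDF(\bid)\bigr| > \eps\right] \le 2\exp(-2 N_m \eps^2)
\]
for every $\eps > 0$.

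Next I would condition on the good event $\mathcal{E}_m := \{\sup_{\bid} |\empiricalCDF_m(\bid) - \CDF(\bid)| \le \eps_m\}$ and derive both claimed properties on $\mathcal{E}_m$. For optimism, for every $\bid$ we have $\empiricalCDF_m(\bid) + \eps_m \ge \CDF(\bid)$, and since $\CDF(\bid) \le 1$ this yields $\optimisticCDF_m(\bid) = (\empiricalCDF_m(\bid)+\eps_m) \wedge 1 \ge \CDF(\bid)$. For the closeness bound, again on $\mathcal{E}_m$, $\optimisticCDF_m(\bid) \le \empiricalCDF_m(\bid) + \eps_m \le \CDF(\bid) + 2\eps_m$, which combined with the optimism inequality gives $\sup_{\bid}|\optimisticCDF_m(\bid) - \CDF(\bid)| \le 2\eps_m$.

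Finally, I would plug $\eps_m = \Theta(\log N_m / \sqrt{N_m})$ into the DKW tail, obtaining failure probability bounded by $2\exp(-\Theta(\log^2 N_m))$, which is $O(1/N_m)$ (indeed $o(1/\mathrm{poly}(N_m))$ for any fixed polynomial), as claimed. There is no serious obstacle here: the only thing that requires a bit of care is confirming that the adaptive nature of the algorithm does not spoil the i.i.d.\ structure of the samples. In the full feedback model this is immediate because the recorded observations $\competingbid_t$ do not depend on the submitted bids, so no sequential concentration machinery (e.g., martingale uniform bounds) is needed, and the off-the-shelf DKW inequality suffices.
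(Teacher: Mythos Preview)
Your proposal is correct and follows essentially the same approach as the paper: both invoke the DKW inequality on the $N_m$ i.i.d.\ samples, condition on the good event $\{\sup_b |\empiricalCDF_m(b)-\CDF(b)|\le \eps_m\}$, and then read off optimism and the $2\eps_m$ closeness bound directly from the definition of $\optimisticCDF_m$. Your additional remark that full feedback makes the observed $\competingbid_t$'s genuinely i.i.d.\ (independent of the adaptive bids) is a point the paper leaves implicit, so your write-up is if anything slightly more careful.
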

\begin{proof} [Proof of \Cref{lem:full_feedback-distance_between_optimistic_CDF_and_CDF}]
    Based on the definition of the estimated CDF $\empiricalCDF$, Dvoretzky–Kiefer–Wolfowitz inequality bounds the probability that the CDF $\empiricalCDF(\bid)$ differs from $\CDF(\bid)$ by more than a given $\eps_m > 0$:
    \[
      \prob{\sup_{\bid \in [0,1]} |\empiricalCDF_N(\bid)- \CDF(\bid)| > \eps_m} \le \exp(-2N_m\eps_m^2) ~.\]

    To ensure $\prob{\sup_{b \in [0,1]} |\empiricalCDF_{m}(\bid)- \CDF(\bid)| > \eps_{m}}$ is sufficiently small (i.e. approach zero as $N_m$ increases), $2N_m\eps_{m}^2$ must tend to infinity. Therefore, when $\eps_m$ satisfies \[
        \eps_m = \Theta\left(\frac{\log N_m}{\sqrt{N_m}}\right)~,
    \]
    the certainty of the closeness increases as $N_m$ increases. 
    Because $\CDF$ is within $\eps_m$ of $\empiricalCDF_m$ and $\optimisticCDF_m$ is defined as $\left(\empiricalCDF_m(\bid) + \eps_m \right ) \wedge 1$, it follows that
    \[\sup_\bid |\CDF(\bid) - \optimisticCDF_m(\bid)| \le 2\eps ~,\]
and
    \[\CDF(\bid) \le \empiricalCDF_m(\bid) + \eps = \optimisticCDF_m(\bid)~,\] 
for all $\bid$.
\end{proof}

\subsection{Analysis of \Cref{alg:randomized_strategy_under_F} with Estimated CDF} \label{sec:random_algorithm_with_F_analysis}
This section focuses on the analysis of \Cref{alg:randomized_strategy_under_F} with optimistic empirical CDF $\optimisticCDF$ relative to a true CDF $\CDF$. As a result, \Cref{lem:estimated_distribution_aganst_optimal_strategy} proved that near-optimal performance (Regret and ROI violation) is still guaranteed with the support of \Cref{lem:optimal_strategy_under_two_different_Fs} and \Cref{lem:algorithm_with_true_distribution_vs_estimated_distribution}.
\begin{lemma} \label{lem:estimated_distribution_aganst_optimal_strategy}
    Fix any value sequence $\valSeq^T$. Given the two CDFs, $\CDF$ and $\estimatedCDF$, such that $\estimatedCDF(b) \ge \CDF(b)$ for any $\bid \in [0,1]$ and $\sup_b |\CDF(\bid) - \estimatedCDF(\bid)| \le 2\eps$. 
    Run \Cref{alg:randomized_strategy_under_F} with the input CDF $\estimatedCDF$ (denoted by $\algWithF(\estimatedCDF)$) with $\CDF$ as environment for $T$ rounds. 
    The Regret satisfies:
    \[
        \TotalRegret(\algWithF(\estimatedCDF) \mid (\CDF, \valSeq^T)) \le O(\sqrt{T}) + T \cdot 2\eps~,
    \]
    and the ROI violation satisfies: 
    \[
        \TotalROIViolation(\algWithF(\estimatedCDF)\mid (\CDF, \valSeq^T)) \le 2\sqrt{T}\log T + T\cdot 2\eps~,
    \]    
\end{lemma}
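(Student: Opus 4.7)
The plan is to decompose the analysis into three pieces: (i) a ``substitution error'' arising from evaluating the algorithm's output in the true environment $\competingbidDist$ instead of the hypothetical environment $\estimatedCDF$ the algorithm was designed for, controlled by $\sup_b|\competingbidDist(b)-\estimatedCDF(b)|\le 2\eps$; (ii) the standard guarantee of \Cref{prop:performance algo known F} applied to $\algWithF(\estimatedCDF)$ in the hypothetical environment $\estimatedCDF$; and (iii) a monotonicity comparison $\TotalReward(\opt\mid(\estimatedCDF,\valSeq^T))\ge \TotalReward(\opt\mid(\competingbidDist,\valSeq^T))$, which is needed because regret is measured against the hindsight optimum under $\competingbidDist$.

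For the monotonicity step, I would use the Lagrangian dual that appears in the proof of \Cref{prop:performance algo known F}. For any $\lambda\ge 0$ and $\val\in[0,1]$, the per-round dual value $\max_\bid[(1+\lambda)\val G(\bid)-\lambda \bid G(\bid)]$ is attained at some $\bid^*$ with $(1+\lambda)\val-\lambda \bid^*\ge 0$ (otherwise $\bid=0$ strictly improves it), so replacing $G=\competingbidDist$ by $G=\estimatedCDF\ge \competingbidDist$ can only increase this maximum. Since $\bid\equiv 0$ is strictly feasible, Slater's condition holds and strong duality identifies $\TotalReward(\opt\mid(G,\valSeq^T))$ with the minimum of the dual value over $\lambda\ge 0$; the pointwise monotonicity in $G$ therefore transfers to the minima.

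For the substitution step, the bid sequence $\{\bid_t\}$ produced by $\algWithF(\estimatedCDF)$ depends only on $\{\val_t\}$, the internal dual variables, and $\estimatedCDF$, so the realized bids are identical whether we evaluate outcomes under $\competingbidDist$ or under $\estimatedCDF$. The per-round reward $\val_t G(\bid_t)$ and payment $\bid_t G(\bid_t)$ under the two choices of $G$ therefore differ by at most $2\eps$, giving
\[
    0\le \TotalReward(\algWithF(\estimatedCDF)\mid(\estimatedCDF,\valSeq^T))-\TotalReward(\algWithF(\estimatedCDF)\mid(\competingbidDist,\valSeq^T))\le 2\eps T,
\]
and, via the identity $\sum_t(\bid_t-\val_t)(\competingbidDist(\bid_t)-\estimatedCDF(\bid_t))$ for the difference of the two ROI violations, an analogous $2\eps T$ bound on the ROI-violation gap. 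Chaining this with \Cref{prop:performance algo known F}---which in the environment $\estimatedCDF$ gives regret $O(\sqrt T)$ and ROI violation at most $2\sqrt T\log T$---together with the monotonicity from the first step yields the two claimed bounds after at most one $+2\eps T$ correction each.

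The main obstacle is the monotonicity of the hindsight optimum. The naive ``plug the optimal strategy for $\competingbidDist$ into the environment $\estimatedCDF$'' argument can fail, since the optimal strategy may randomize over bids with $\bid>\val_t$ (as in \Cref{ex:subopt deterministic}), and such overbids inflate payment faster than reward when $\competingbidDist$ is replaced by $\estimatedCDF\ge\competingbidDist$, potentially breaking feasibility. The dual detour bypasses the need to exhibit a concrete feasible strategy, but it requires verifying strong duality for the randomized primal; this is essentially a consequence of the reduction to deterministic bidding on $\CDFConv$ in \Cref{thm:optimal_randomized_strategy} combined with the trivial Slater point.
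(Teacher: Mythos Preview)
Your three-step decomposition---substitution error, the guarantee of \Cref{prop:performance algo known F} in the hypothetical environment $\estimatedCDF$, and monotonicity of the hindsight optimum in the CDF---is exactly the structure of the paper's proof (the paper packages the first step as \Cref{lem:algorithm_with_true_distribution_vs_estimated_distribution} and the third as \Cref{lem:optimal_strategy_under_two_different_Fs}). The substantive difference is in how you establish the monotonicity $\TotalReward(\opt\mid(\estimatedCDF,\valSeq^T))\ge \TotalReward(\opt\mid(\competingbidDist,\valSeq^T))$. The paper gives a direct primal argument via allocation-payment curves: it shows that every point $(b\CDF(b),\CDF(b))$ on the curve $\allocPay$ for $\CDF$ is a convex combination of $(0,0)$ and the point $(b\estimatedCDF(b),\estimatedCDF(b))$ on the curve $\allocPay^\dag$ for $\estimatedCDF$, whence the concave envelope for $\estimatedCDF$ dominates that for $\CDF$ pointwise; consequently, for each round of the optimal strategy under $\CDF$ there is a (randomized) bid under $\estimatedCDF$ with the \emph{same} expected payment and weakly higher reward, so feasibility is preserved automatically. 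Your dual route is correct in principle and arguably cleaner once strong duality is in hand, but that is exactly where the residual work lies: Slater's condition is not literally satisfied when $\estimatedCDF(0)=0$ (the point $\bid\equiv 0$ gives only equality in the ROI constraint), so you would still need to justify strong duality for the LP over randomized strategies by other means---e.g., compactness plus linearity, or the concavity delivered by \Cref{thm:optimal_randomized_strategy}. The paper's geometric construction sidesteps this entirely by exhibiting an explicit feasible witness under $\estimatedCDF$, which is precisely the obstacle you correctly identified with the naive plug-in argument.
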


\begin{proof}[Proof of \Cref{lem:estimated_distribution_aganst_optimal_strategy}]
% Because $\CDF$ is within $\eps$ of $\empiricalCDF$, it follows that
%     \[\sup_\bid |\CDF(\bid) - \optimisticCDF(\bid)| \le 2\eps\]
% and
%     \[\CDF(\bid) \le \empiricalCDF(\bid) + \eps = \optimisticCDF(\bid)~,\] 
% for all $\bid$.

From \Cref{lem:optimal_strategy_under_two_different_Fs}  and \Cref{lem:algorithm_with_true_distribution_vs_estimated_distribution}, we have
\[\TotalReward(\algWithF(\estimatedCDF)\mid (\CDF, \valSeq^T)) \ge \TotalReward(\algWithF(\estimatedCDF) \mid (\estimatedCDF, \valSeq^T)) - T\cdot 2\eps  \]
and 
\[\TotalReward( \opt(\estimatedCDF) \mid (\estimatedCDF, \valSeq^T)) \ge \TotalReward (\opt(\CDF) \mid (\CDF, \valSeq^T))~. \]
Here, we denote output of the optimal bidding algorithm in hindsight with $\estimatedCDF$ as input by $\opt(\estimatedCDF)$. Similarly, we denote the optimal bidding algorithm in hindsight with $\CDF$ as input by $\opt(\CDF)$.

Since the Regret of \Cref{alg:randomized_strategy_under_F} with known true CDF is $O(\sqrt{T})$, we can infer that
\[\TotalReward(\algWithF(\estimatedCDF) \mid (\estimatedCDF, \valSeq^T)) = \TotalReward(\opt(\estimatedCDF) \mid (\estimatedCDF, \valSeq^T)) - O(\sqrt{T})~.\]
Combining above, we obtain
\begin{align*}
    \TotalReward(\algWithF(\estimatedCDF) \mid (\CDF, \valSeq^T))  & \ge \TotalReward(\algWithF(\estimatedCDF) \mid (\estimatedCDF, \valSeq^T)) - T\cdot 2\eps \\ 
    & = \TotalReward(\opt(\estimatedCDF) \mid (\estimatedCDF, \valSeq^T)) - T\cdot 2\eps - O(\sqrt{T}) \\ 
    & \ge \TotalReward(\opt(\CDF) \mid (\CDF, \valSeq^T)) - T\cdot 2\eps - O(\sqrt{T})~.
\end{align*}
Therefore, the Regret satisfies
\begin{align*}
    \TotalRegret(\algWithF(\estimatedCDF) \mid (\CDF, \valSeq^T))  & =  \TotalReward(\opt(\CDF) \mid (\CDF, \valSeq^T)) - \TotalReward(\algWithF(\estimatedCDF) \mid (\CDF, \valSeq^T)) \\ 
    & = O(\sqrt{T}) +  T\cdot 2\eps~.
\end{align*}
For ROI violation, combining \[\TotalROIViolation(\algWithF(\estimatedCDF) \mid (\CDF, \valSeq^T)) \le \TotalROIViolation(\algWithF(\estimatedCDF) \mid (\estimatedCDF, \valSeq^T)) + T\cdot 2\eps \]
and
\[\TotalROIViolation(\algWithF(\estimatedCDF) \mid ( \estimatedCDF, \valSeq^T)) \le 2\sqrt{T}\log T~,\]
We have \[
    \TotalROIViolation(\algWithF(\estimatedCDF) \mid (\CDF, \valSeq^T)) \le 2\sqrt{T}\log T + T\cdot 2\eps~. \qedhere
\] 
\end{proof}

% \subsubsection{Analysis of the Optimal Randomized Strategies under Different CDFs}
% There are two CDFs of maximum competing bid $\estimatedCDF$ and $\CDF$, where the maximum competing bid $\competingbid^\dag$ (a random variable drawn from $\estimatedCDF$) is stochastically dominated by $\competingbid$ (a random variable drawn from $\CDF$), which means $\estimatedCDF(\bid) \ge \CDF(\bid)$ for all $\bid$ with strict inequality at some $\bid$. The optimal randomized bidding strategy subject to the ROI constraint under $\estimatedCDF$ obtains at least as large an expected reward as the optimal strategy under $\CDF$.

\begin{lemma}\label{lem:optimal_strategy_under_two_different_Fs}
Fix any value sequence $\valSeq^T$. Given $\estimatedCDF$ and $\CDF$ are CDFs of maximum competing bid, when $\estimatedCDF(\bid) \ge \CDF(\bid)$ for any $\bid \in [0,1]$, we claim that this implies the Reward of the optimal bidding algorithm in hindsight with CDF input $\estimatedCDF$ (denoted by $\opt(\estimatedCDF)$) and $\CDF$ (denoted by $\opt(\CDF)$) satisfies: 
        \[
           \TotalReward( \opt(\estimatedCDF) \mid (\estimatedCDF, \valSeq^T)) \ge \TotalReward (\opt(\CDF) \mid (\CDF, \valSeq^T))~.
        \]
\end{lemma}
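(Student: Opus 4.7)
The plan is to exhibit, for the strategy underlying $\TotalReward(\opt(\CDF) \mid (\CDF, \valSeq^T))$, an equivalent randomized strategy for the $\estimatedCDF$-environment whose per-round expected payment matches and whose per-round expected allocation is at least as large. Once such an emulator is verified to be feasible for the ROI constraint under $\estimatedCDF$, $\opt(\estimatedCDF)$ attains reward at least as large, which immediately yields the claimed inequality.

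The emulation is defined bid-by-bid. For any bid $b \in [0,1]$ that $\opt(\CDF)$ may submit under $\CDF$, in the $\estimatedCDF$-environment I submit $b$ with probability $p := \CDF(b)/\estimatedCDF(b)$ and submit $0$ with the remaining probability $1-p$; when $\estimatedCDF(b)=0$, the assumption $\estimatedCDF \ge \CDF$ forces $\CDF(b)=0$ and the choice of bid is irrelevant. Since $\estimatedCDF \ge \CDF$ guarantees $p \in [0,1]$, this is a legitimate randomization. The expected payment under $\estimatedCDF$ equals $p \cdot b\,\estimatedCDF(b) + (1-p) \cdot 0 = b\,\CDF(b)$, and the expected allocation equals $p\,\estimatedCDF(b) + (1-p)\,\estimatedCDF(0) \ge \CDF(b)$, with equality when $\estimatedCDF(0)=0$.

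Applying this emulation round by round to the (possibly randomized) bid prescribed by $\opt(\CDF)$ at each step yields a randomized strategy $\sigma$ for the $\estimatedCDF$-environment on value sequence $\valSeq^T$. By linearity of expectation, the total expected payment of $\sigma$ matches that of $\opt(\CDF)$, while the total expected reward of $\sigma$ is at least that of $\opt(\CDF)$. Feasibility of $\opt(\CDF)$ under the ROI constraint therefore implies feasibility of $\sigma$ in the $\estimatedCDF$-environment, and
\begin{equation*}
    \TotalReward(\opt(\estimatedCDF) \mid (\estimatedCDF, \valSeq^T)) \ge \TotalReward(\sigma \mid (\estimatedCDF, \valSeq^T)) \ge \TotalReward(\opt(\CDF) \mid (\CDF, \valSeq^T)),
\end{equation*}
which is the desired inequality. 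The only non-routine point is to check that the emulation fits the framework of randomized bidding strategies admitted by program $\mathcal{P}$; this is essentially the two-bid special case of \Cref{lem:random_combination} with supports $\{0, b\}$, so I anticipate no real obstacles. Conceptually, the key observation is that optimism ($\estimatedCDF \ge \CDF$) lets us realize the $\CDF$-bid $b$ under $\estimatedCDF$ by simply down-weighting the same bid $b$ by the pointwise ratio and parking the leftover mass at a null bid.
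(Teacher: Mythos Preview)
Your proof is correct and rests on the same core observation as the paper's: for any bid $b$, the point $(b\CDF(b),\CDF(b))$ is the convex combination of $(b\estimatedCDF(b),\estimatedCDF(b))$ and the origin with weight $\CDF(b)/\estimatedCDF(b)$. The paper phrases this geometrically---showing that the convex hull of the allocation--payment curve $\allocPay$ under $\CDF$ sits inside that of $\allocPay^\dag$ under $\estimatedCDF$, hence $\allocPayConv \le \allocPayConv^\dag$ pointwise---and then extracts from the envelope a bid $b_t^\dag$ under $\estimatedCDF$ with matching payment and weakly higher allocation. You instead read the same convex combination operationally as the two-point randomization ``bid $b$ with probability $\CDF(b)/\estimatedCDF(b)$, else bid $0$,'' and apply it directly round by round. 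Your route is more elementary (no allocation--payment curves or envelopes needed) and makes the feasibility check for the ROI constraint immediate; the paper's route has the advantage of staying within its ambient $\allocPayConv$ framework, which it reuses elsewhere. Substantively, the two arguments are the same.
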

\begin{proof}[Proof of \Cref{lem:optimal_strategy_under_two_different_Fs}]
    Construct the concave envelope $\allocPayConv$ and $\allocPayConv^\dag$:
    \begin{itemize}[topsep=2pt,itemsep=1pt,leftmargin=2.5ex,parsep=1pt]
        \item Given $\CDF(\bid)$, define a curve $\allocPay$ as 
          \[
              y = \allocPay(x) , y = \CDF(\bid), x=b \cdot \CDF(\bid)~.
         \] 
        Let $\allocPayConv$ denote the concave envelope of $\allocPay$.  
        \item Similarly, for $\estimatedCDF(\bid)$, define a curve $\allocPay^\dag$, where 
        \[
            y = \allocPay^\dag(x) , y =\estimatedCDF(\bid), x=b \cdot \estimatedCDF(\bid)~,
        \] 
        and denote its concave envelope as $\allocPayConv^\dag$.
    \end{itemize}
    Due to the property of the cumulative distribution function, $\CDF(\bid)$ and $\estimatedCDF(\bid)$ both have definition at $b=0$ and $\estimatedCDF(0) \ge \CDF(0) \ge 0$.\\
    Fix a bit $b$, there exists a point $(x, y)$ the curve $\allocPay$: 
    \[x=b\cdot \CDF(\bid), y = \CDF(\bid)~,\] 
    and a corresponding point  $(x^\dag, y^\dag)$ on the curve $\allocPay^\dag$: 
    \[x^\dag=b \cdot \estimatedCDF(\bid), y^\dag= \estimatedCDF(\bid)~.\]
    Notice that both points lie on the line: $y=\frac{1}{b}x$. Because $\estimatedCDF(\bid) \ge \CDF(\bid)$,  $\frac{\CDF(\bid)}{\estimatedCDF(\bid)} \in [0,1]$. Observe that the point $(x,y)$ is a convex combination of $(x^\dag, y^\dag)$ and $(0,0)$:
    \[x=\frac{\CDF(\bid)}{\estimatedCDF(\bid)} \cdot x^\dag + \left(1-\frac{\CDF(\bid)}{\estimatedCDF(\bid)}\right)\cdot 0 ~,\]
    \[y=\frac{\CDF(\bid)}{\estimatedCDF(\bid)} \cdot y^\dag + \left(1-\frac{\CDF(\bid)}{\estimatedCDF(\bid)}\right)\cdot 0 ~.\]
    This shows that every point on the curve $\allocPay$ is contained in the convex hull of $\allocPay^\dag \cup \{(0,0)\}$. Due to the property of CDF, $\CDF(\bid)$ and $\estimatedCDF(\bid)$ both have definition at $b=0$ and $\estimatedCDF(0) \ge \CDF(0) \ge 0$. $(0,0)$ is in both the convex hull of $\allocPay$ and the convex hull of $\allocPay^\dag$. Thus, the convex hull of $\allocPay$ is a subset of the convex hull of $\allocPay^\dag$. 

    By the definition of concave envelope, the lowest-valued concave function that overestimates or equals the original function over that set, it follows
    \[\allocPayConv \le \allocPayConv^\dag ~,\] 
    pointwise. \\
    Thus, for a fixed value sequence $\valSeq_T$, given CDF $\CDF$, an optimal bid sequence $\{\bid_t\}_{t=1}^{T}$ will be generated. For each $\bid_t$ at round $t$, we could find $\bid^\dag_t$ under $\CDF^\dag$ with the same payment ($\bid_t\CDF(\bid_t) = \bid^\dag_t \CDF(\bid^\dag_t)$) but a potentially higher value ($\val_t \cdot \CDF^\dag(\bid^\dag_t) \ge \val_t \cdot \CDF(\bid_t)$). Therefore, the reward satisfies
    \begin{align*}
       \TotalReward ( \opt(\CDF) \mid (\CDF, \valSeq^T))
        & = \sum\nolimits_{t\in[T]} v_t \cdot \CDF(\bid_t) \\
        & \le \sum\nolimits_{t\in[T]} v_t \cdot \estimatedCDF(\bid^\dag_t) \\
        & = \TotalReward( \opt(\estimatedCDF) \mid (\estimatedCDF, \valSeq^T)) ~.  \qedhere  
    \end{align*}
\end{proof}
    % \hanrui{
    %     Note that $\allocPay$ and $\bar G$ are not technically well-defined, because not every value of $x$ can be taken.
    %     However, concave envelopes of both $\allocPay$ and $\bar G$ can be defined on $[0, 1]$.
    %     On the other hand, taking the concave envelope is also necessary, because we are considering the optimal {\em randomized} bidding strategies under $\CDF$ and $\bar F$.
    %     Without taking the envelope, we would be proving the claim for {\em deterministic} strategies.
    % }

% \subsubsection{Analysis of the Strategies of \Cref{alg:randomized_strategy_under_F} under Different CDFs}

\begin{lemma} \label{lem:algorithm_with_true_distribution_vs_estimated_distribution}
    Fix a value sequence $\valSeq^T$. Given the two CDFs, $\CDF$ and $\estimatedCDF$, such that $\estimatedCDF(b) \ge \CDF(b)$ and $\sup_b |\CDF(\bid) - \estimatedCDF(\bid)| \le 2\eps$. Run \Cref{alg:randomized_strategy_under_F} with $\estimatedCDF$ as its input, denoted by $\algWithF(\estimatedCDF)$. The reward satisfies 
        \[\TotalReward(\algWithF(\estimatedCDF) \mid (\CDF, \valSeq^T)) \geq \TotalReward(\algWithF(\estimatedCDF) \mid (\estimatedCDF, \valSeq^T)) - T \cdot 2 \epsilon ~,\]
    and the ROI violation cannot be amplified by more than $T \cdot 2\eps $: 
        \[\TotalROIViolation(\algWithF(\estimatedCDF) \mid (\CDF, \valSeq^T)) \le \TotalROIViolation(\algWithF(\estimatedCDF) \mid (\estimatedCDF, \valSeq^T)) + T \cdot 2 \epsilon ~.\]
\end{lemma}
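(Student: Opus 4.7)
The plan is to exploit a simple coupling: the (randomized) bid sequence produced by $\algWithF(\estimatedCDF)$ depends only on $\estimatedCDF$ and the value sequence $\valSeq^T$, not on the environment's competing-bid distribution.  Inspecting \Cref{alg:randomized_strategy_under_F}, each $\widetilde\bid_t$ is obtained by maximizing a quantity built from $\val_t$ and the convex counterpart of $\estimatedCDF$ furnished by \Cref{thm:optimal_randomized_strategy}; the randomized realization $\bid_t$ is produced by applying the construction of \Cref{lem:random_combination} to $\estimatedCDF$; and the dual update for $\lambda_{t+1}$ uses $g_t(\widetilde\bid_t)$, which is itself a deterministic function of $\val_t$ and $\estimatedCDF$ alone, not of the realized competing bid.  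Consequently, the joint law of $\{\bid_t\}_{t=1}^T$ is identical whether the environment is governed by $\CDF$ or by $\estimatedCDF$, and the two evaluations of $\TotalReward$ (resp.\ $\TotalPayment$) differ only through the CDF against which the allocation indicator $\indicator{\bid_t\ge\competingbid_t}$ is integrated.

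With this coupling in hand, I would first write
\[
    \TotalReward(\algWithF(\estimatedCDF)\mid(\estimatedCDF,\valSeq^T)) - \TotalReward(\algWithF(\estimatedCDF)\mid(\CDF,\valSeq^T)) = \sum\nolimits_{t\in[T]}\val_t\cdot\expect{\estimatedCDF(\bid_t)-\CDF(\bid_t)},
\]
and bound each summand by $2\eps$ using $0\le\estimatedCDF(\bid)-\CDF(\bid)\le 2\eps$ together with $\val_t\in[0,1]$; after rearranging this gives the first claimed inequality.  The analogous identity for payments,
\[
    \TotalPayment(\algWithF(\estimatedCDF)\mid(\estimatedCDF,\valSeq^T)) - \TotalPayment(\algWithF(\estimatedCDF)\mid(\CDF,\valSeq^T)) = \sum\nolimits_{t\in[T]}\expect{\bid_t\cdot(\estimatedCDF(\bid_t)-\CDF(\bid_t))},
\]
is non-negative because $\estimatedCDF\ge\CDF$ pointwise and $\bid_t\ge 0$, so the payment under $\CDF$ is at most the payment under $\estimatedCDF$.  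Using $\TotalROIViolation=\TotalPayment-\TotalReward$ and subtracting the two identities yields
\[
    \TotalROIViolation(\algWithF(\estimatedCDF)\mid(\CDF,\valSeq^T)) - \TotalROIViolation(\algWithF(\estimatedCDF)\mid(\estimatedCDF,\valSeq^T)) \le 0 - (-2\eps T) = 2\eps T,
\]
which is the second claimed inequality.

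The only delicate point is verifying that the coupling remains valid when \Cref{alg:randomized_strategy_under_F} randomizes $\bid_t$ via \Cref{lem:random_combination}: one must check that the two support points $\bid_t^{(1)},\bid_t^{(2)}$ and their probabilities are functions of $\estimatedCDF$ and $\val_t$ alone, so that the same distribution over $\bid_t$ can be coupled across the two environments.  This is immediate from the explicit construction in the proof of \Cref{lem:random_combination}, which only invokes the concave envelope of the payment-allocation curve determined by the input CDF.  Once this is in place the remainder of the argument is elementary arithmetic on pointwise CDF differences, so I expect no additional technical obstacle.
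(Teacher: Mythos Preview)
Your proposal is correct and follows essentially the same approach as the paper: both arguments use the fact that the algorithm's randomized bid at each round is determined solely by $\estimatedCDF$ and $\valSeq^T$, then bound the per-round reward and payment differences via the pointwise inequalities $0\le\estimatedCDF(\bid)-\CDF(\bid)\le 2\eps$. Your explicit verification that the dual update $\lambda_{t+1}$ depends only on $\estimatedCDF$ (and hence the bid law is genuinely coupled across environments) is a point the paper leaves implicit but is indeed needed for the argument to go through.
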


\begin{proof}[Proof of \Cref{lem:algorithm_with_true_distribution_vs_estimated_distribution}]
   At round $t$, given the input value $\val_t$, the bidding algorithm $\algWithF(\estimatedCDF)$ generates $\bidConv^\dag$ and the randomized strategy $\bid^\dag_1$ with probability $p_1$ and $\bid^\dag_2$ with probability $p_2$. $p_1 + p_2 = 1$ and $p_1 \ge 0, p_2 \ge 0$.
   
    Under $\estimatedCDF$, it generates expected reward 
    \[\reward^\dag_t = \val_t \cdot [p_1 \cdot \estimatedCDF(\bid^\dag_1) + p_2 \cdot \estimatedCDF(\bid^\dag_2)] ~, \] 
    and expected payment 
    \[\payment^\dag_t = p_1 \cdot \bid^\dag_1 \cdot \estimatedCDF(\bid^\dag_1) + p_2 \cdot \bid^\dag_2 \cdot \estimatedCDF(\bid^\dag_2) ~.\]
   When the distribution is $\CDF$, it generates expected reward 
    \[\reward_t = v_t \cdot [p_1 \cdot \CDF(\bid^\dag_1) + p_2 \cdot F(\bid^\dag_2)] ~,\] 
    and expected payment 
    \[\payment_t = p_1 \cdot \bid^\dag_1 \cdot \CDF(\bid^\dag_1) + p_2 \cdot \bid^\dag_2 \cdot \CDF(\bid^\dag_2) ~.\] 
Because $\sup_b |\CDF(\bid) - \estimatedCDF(\bid)| \le 2\eps$ and $\CDF(\bid) \le \estimatedCDF(\bid)$, The difference in value will be $\CDF(\bid) \ge \estimatedCDF(\bid) - 2 \eps $.

The difference of expected reward will be 
\begin{align*}
  \reward_t - \reward^\dag_t  & = \val_t \cdot \{p_1 \cdot [F(\bid^\dag_1) - \estimatedCDF (b^\dag_{t_2})] + p_2 \cdot [F(\bid^\dag_1) - \estimatedCDF (b^\dag_{t_2})]\}  \\
   & \ge \val_t \cdot [ p_1 \cdot (-2 \eps) +   p_2 \cdot (-2 \eps)] \\
   & \ge -2\eps ~. 
\end{align*}
The difference of expected payment will be 
\begin{align*}
\payment_t - \payment^\dag_t = p_1 \cdot \bid^\dag_1 \cdot [\CDF(\bid^\dag_1) - \estimatedCDF(\bid^\dag_1)] + p_2 \cdot \bid^\dag_2 \cdot [F(\bid^\dag_2) - \estimatedCDF(\bid^\dag_2)] \le 0 ~.
\end{align*}
Sum of $T$ rounds, the difference of expected reward satisfies:
\[
    \sum\nolimits_{t\in[T]}  \reward_t - \sum\nolimits_{t\in[T]} \reward^\dag_t \ge -T \cdot 2\eps ~,\]
and the difference of expected payment satisfies:
\[
    \sum\nolimits_{t\in[T]}  \payment_t - \sum\nolimits_{t\in[T]} \payment^\dag_t  \le 0 ~.
\]
Therefore, the over all reward satisfies:
\begin{align*}
    \TotalReward(\algWithF(\estimatedCDF) \mid (\CDF, \valSeq^T))  & = \sum\nolimits_{t\in[T]}  \reward_t \\
     & \ge \sum\nolimits_{t\in[T]} \reward^\dag_t -T \cdot 2\eps \\ 
     & = \TotalReward(\algWithF(\estimatedCDF) \mid (\estimatedCDF, \valSeq^T)) - T \cdot 2 \epsilon ~.
\end{align*}

The ROI violation satisfies:
\begin{align*}
    \TotalROIViolation(\algWithF(\estimatedCDF) \mid (\CDF,  \valSeq^T))  & = \TotalPayment \left(\algWithF(\estimatedCDF) \mid (\CDF,  \valSeq^T) \right) - \TotalReward \left(\algWithF(\estimatedCDF) \mid (\CDF,  \valSeq^T) \right) \\
     & \le \TotalPayment \left(\algWithF(\estimatedCDF) \mid (\estimatedCDF, \valSeq^T) \right) - \TotalReward \left( \algWithF(\estimatedCDF) \mid (\estimatedCDF, \valSeq^T) \right) + T \cdot 2 \eps  \\
     & = \TotalROIViolation \left( \algWithF(\estimatedCDF) \mid (\estimatedCDF, \valSeq^T) \right) + T \cdot 2 \eps ~. \qedhere
\end{align*}
\end{proof}

\subsection{Analysis of Regret and ROI Violation of \Cref{alg:randomized_strategy}}
In this section, we analyze the performance of \Cref{alg:randomized_strategy} in terms of regret and ROI violation with $\eps_{m} = \frac{\log{N_m}}{\sqrt{N_m}}$ for each stage $m$.
\begin{lemma}\label{lem:full_feedback-Regret}
Consider an unknown CDF $\CDF$ of maximum competing bid. Given a fixed value sequence $\valSeq^T$, run \Cref{alg:randomized_strategy} for $T$ rounds. With $\eps_{m} = \frac{\log{N_m}}{\sqrt{N_m}}$ for each stage $m$, the reward satisfies
\begin{equation}
\begin{aligned}
  \TotalRegret(\mathrm{\Cref{alg:randomized_strategy}} \mid (\CDF, \valSeq^T)) = \widetilde O(\sqrt T) ~. 
\end{aligned}
\end{equation}
\end{lemma}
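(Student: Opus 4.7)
The plan is to decompose the total regret of \Cref{alg:randomized_strategy} over its $M = \lceil \log(T+1) \rceil$ stages and apply the per-stage guarantee of \Cref{lem:estimated_distribution_aganst_optimal_strategy} together with the distributional closeness of \Cref{lem:full_feedback-distance_between_optimistic_CDF_and_CDF}. Let $\valSeq_m$ denote the length-$T_m$ subsequence of values processed in stage $m$, with $T_m = 2^{m-1}$. The algorithm's reward decomposes additively as $\TotalReward(\text{\Cref{alg:randomized_strategy}} \mid (\CDF, \valSeq^T)) = \sum_{m=1}^M \TotalReward_m$, and I would benchmark the global OPT via the primal-dual upper bound implicit in the proof of \Cref{prop:performance algo known F}:
\[
    \TotalReward(\opt \mid (\CDF, \valSeq^T)) \le T \cdot \min_{\lambda \ge 0} \bar D(\lambda \mid (\CDF, \valDist)) = \sum_{m=1}^M T_m \cdot \min_{\lambda \ge 0} \bar D(\lambda \mid (\CDF, \valDist)),
\]
so that the global regret is controlled by $\sum_m \bigl[T_m \cdot \min_\lambda \bar D - \TotalReward_m\bigr]$ and the analysis reduces stage by stage.

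For stage $m$, I would condition on the good event $\mathcal{E}_m$ from \Cref{lem:full_feedback-distance_between_optimistic_CDF_and_CDF}: with $\eps_m = \Theta(\log N_m / \sqrt{N_m})$ and $N_m = 2^{m-1} - 1$, the optimistic CDF satisfies $\optimisticCDF_m \ge \CDF$ pointwise and $\sup_b |\optimisticCDF_m(b) - \CDF(b)| \le 2\eps_m$ with probability $1 - O(1/N_m)$. On $\mathcal{E}_m$, \Cref{alg:randomized_strategy_under_F} receives $\optimisticCDF_m$ as input, and \Cref{lem:estimated_distribution_aganst_optimal_strategy} applied with $T \leftarrow T_m$, $\eps \leftarrow \eps_m$, $\estimatedCDF \leftarrow \optimisticCDF_m$ delivers
\[
    T_m \cdot \min_\lambda \bar D(\lambda \mid (\CDF, \valDist)) - \TotalReward_m \le O(\sqrt{T_m}) + 2\eps_m T_m = \widetilde O(\sqrt{T_m}),
\]
using $\eps_m T_m = \Theta(\sqrt{T_m} \log T_m)$. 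Off $\mathcal{E}_m$ the per-round regret is trivially at most $1$, so the failure contribution in stage $m$ is $T_m \cdot O(1/N_m) = O(1)$; the degenerate first stage with $N_1 = 0$ contributes at most $T_1 = 1$ to the regret.

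Summing across the geometric schedule, $\sqrt{T_m} = 2^{(m-1)/2}$, so
\[
    \sum_{m=1}^M \widetilde O(\sqrt{T_m}) \;=\; \widetilde O\!\left(\sum_{m=1}^M 2^{(m-1)/2}\right) \;=\; \widetilde O(\sqrt T),
\]
the geometric sum being dominated by its last term $\sqrt{T_M} = \Theta(\sqrt T)$. Adding the aggregate $O(\log T)$ contribution from failure events across the $M = O(\log T)$ stages yields the claimed $\widetilde O(\sqrt T)$ bound.

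The main obstacle I anticipate is the regret decomposition itself: a naive stage-wise split fails because stage-wise hindsight optima must satisfy ROI inside each stage (a strictly tighter constraint than the global one), giving $\sum_m \TotalReward(\opt_m \mid (\CDF, \valSeq_m)) \le \TotalReward(\opt \mid (\CDF, \valSeq^T))$---the wrong direction for aggregation. The resolution is to route both benchmarks through the common primal-dual upper bound $T_m \cdot \min_\lambda \bar D(\lambda \mid (\CDF, \valDist))$, which simultaneously dominates each per-stage OPT and aggregates to $T \cdot \min_\lambda \bar D(\lambda \mid (\CDF, \valDist)) \ge \TotalReward(\opt \mid (\CDF, \valSeq^T))$; since the proof of \Cref{lem:estimated_distribution_aganst_optimal_strategy} already routes through this dual bound, it directly furnishes the per-stage gap $\widetilde O(\sqrt{T_m})$ needed to close the argument.
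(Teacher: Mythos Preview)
Your approach matches the paper's: stage-wise decomposition, per-stage invocation of \Cref{lem:estimated_distribution_aganst_optimal_strategy}, and geometric summation $\sum_m \widetilde O(\sqrt{T_m}) = \widetilde O(\sqrt T)$. You are in fact more careful than the paper, which neither addresses the failure event $\mathcal{E}_m^c$ nor the aggregation subtlety you correctly flag---the paper simply asserts $\sum_m \TotalReward(\algWithF(\optimisticCDF_m)\mid\valSeq_m) \ge \TotalReward(\opt\mid\valSeq^T) - \sum_m\sqrt{2^{m-1}}$ without justifying why the stage-wise optima dominate the global one.

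One caveat on your dual-bound route: the inequality $\TotalReward(\opt\mid(\CDF,\valSeq^T)) \le T\min_\lambda \bar D(\lambda\mid(\CDF,\valDist))$ compares a fixed-sequence quantity to an expectation over $\valDist$ and does not hold for every $\valSeq^T$. It does hold after taking $\expect_{\valSeq^T\sim\valDist^T}[\cdot]$, and likewise the per-stage gap $T_m\min_\lambda\bar D - \expect[\TotalReward_m] \le \widetilde O(\sqrt{T_m})$ is precisely what the proof of \Cref{prop:performance algo known F} (and hence \Cref{lem:estimated_distribution_aganst_optimal_strategy}) establishes via the tower property and convexity of $\bar D$. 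So your argument cleanly delivers the expected bound $\TotalRegret(\cdot\mid(\CDF,\valDist))=\widetilde O(\sqrt T)$, which is exactly what \Cref{thm:randomized_algorithm_full_feedback} requires; the pointwise-in-$\valSeq^T$ phrasing of \Cref{lem:full_feedback-Regret} is strictly stronger than what either your argument or the paper's terse proof actually proves.
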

\begin{proof}[Proof of \Cref{lem:full_feedback-Regret}]
For each stage $m$, we run \Cref{alg:randomized_strategy_under_F} with the optimistic CDF $\optimisticCDF_m$ for $T_m$ rounds, where $N_m=2^{m-1}-1$ and $T_m=2^{m-1}$.

Based on \Cref{lem:estimated_distribution_aganst_optimal_strategy} and $\eps_{m} = \frac{\log{N_m}}{\sqrt{N_m}}$, we have 
\begin{equation}
\begin{aligned}
  \TotalReward(\algWithF(\optimisticCDF_m) \mid ( \CDF, \valSeq_m))  & \ge \TotalReward(\opt \mid (\CDF, \valSeq_m)) - T_m \cdot 2 \eps_m  \\
   & = \TotalReward(\opt \mid (\CDF, \valSeq_m)) - \widetilde O(\sqrt{T_m})~,
\end{aligned}
\end{equation}
where $\valSeq_m$ is the sub-sequence of values at stage $m$ including $T_m$ elements, and $\algWithF(\optimisticCDF_m)$ the bids generated by feeding \Cref{alg:randomized_strategy_under_F} with a distribution $\optimisticCDF_m$.

The total reward across $M$ stages is
\begin{align*}
  \TotalReward(\mathrm{\Cref{alg:randomized_strategy}} \mid (\CDF, \valSeq^T))  & = \sum\nolimits_{m\in[M]} \TotalReward(\algWithF(\optimisticCDF_m) \mid ( \CDF, \valSeq_m))  \\
   & \ge  \TotalReward(\opt(\CDF) \mid (\CDF, \valSeq^T)) - \sum\nolimits_{m\in[M]} \sqrt{2^{m-1}}\\
   & = \TotalReward(\opt(\CDF) \mid (\CDF, \valSeq^T)) - \widetilde O(\sqrt{T})~. \qedhere
\end{align*} 
\end{proof}

\begin{lemma}\label{lem:full_feedback-ROI}
    Consider an unknown CDF $\CDF$ of maximum competing bid. Given a fixed value sequence $\valSeq^T$, run \Cref{alg:randomized_strategy} for $T$ rounds. With $\eps_{m} = \frac{\log{N_m}}{\sqrt{N_m}}$ for each stage $m$, the ROI Violation satisfies: 
    \begin{equation}
    \begin{aligned}
        \TotalROIViolation( \mathrm{\Cref{alg:randomized_strategy}} \mid (\CDF, \valSeq^T)) = \widetilde O(\sqrt{T})~. 
    \end{aligned}
    \end{equation}
\end{lemma}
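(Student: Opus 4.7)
The plan is to mirror the three-step structure of the regret analysis (Lemma~\ref{lem:full_feedback-Regret}) but apply it to the ROI violation. Concretely, I first decompose the total ROI violation additively across the $M = \lceil \log(T+1) \rceil$ stages of \Cref{alg:randomized_strategy}. This decomposition is immediate, because both the cumulative payment and the cumulative reward split exactly into per-stage sums, so
\[
    \TotalROIViolation(\mathrm{\Cref{alg:randomized_strategy}} \mid (\CDF, \valSeq^T))
    = \sum\nolimits_{m\in[M]} \TotalROIViolation(\algWithF(\optimisticCDF_m) \mid (\CDF, \valSeq_m))~,
\]
where $\valSeq_m$ denotes the subsequence of values appearing in stage $m$ and $T_m = 2^{m-1}$.

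Next, for each stage $m \ge 2$, I invoke \Cref{lem:full_feedback-distance_between_optimistic_CDF_and_CDF} with the choice $\eps_m = \log(N_m)/\sqrt{N_m}$. This guarantees, with probability at least $1 - O(1/N_m)$, that $\optimisticCDF_m \ge \CDF$ pointwise and $\sup_b |\optimisticCDF_m(b) - \CDF(b)| \le 2\eps_m$. Conditioned on this event, the second conclusion of \Cref{lem:estimated_distribution_aganst_optimal_strategy} yields the per-stage bound $\TotalROIViolation(\algWithF(\optimisticCDF_m) \mid (\CDF, \valSeq_m)) \le 2\sqrt{T_m}\log T_m + 2 T_m \eps_m$. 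Since $N_m = 2^{m-1} - 1 = \Theta(T_m)$, the term $2 T_m \eps_m$ reduces to $O(\sqrt{T_m}\log T_m)$, and the whole per-stage contribution is $\widetilde{O}(\sqrt{T_m})$. On the complementary low-probability event, the ROI violation in the stage is trivially upper bounded by $T_m$ (since all payments lie in $[0,1]$ and all rewards are non-negative), so its contribution to the expectation is at most $T_m \cdot O(1/N_m) = O(1)$. For the initial stage $m = 1$, since $T_1 = 1$ and $\optimisticCDF_1$ is arbitrary, the violation is trivially at most $1$.

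Finally, I aggregate over stages. Using the geometric sum $\sum_{m=1}^{M} \sqrt{T_m} = \sum_{m=1}^{M} \sqrt{2^{m-1}} = O(\sqrt{T})$, and absorbing the $O(\log T_m)$ factor from $\eps_m$ and from the $2\sqrt{T_m}\log T_m$ term into the $\widetilde{O}$ notation, I obtain
\[
    \TotalROIViolation(\mathrm{\Cref{alg:randomized_strategy}} \mid (\CDF, \valSeq^T))
    \le O(M) + \sum\nolimits_{m=2}^{M} \widetilde{O}(\sqrt{T_m}) = \widetilde{O}(\sqrt{T})~.
\]
The analysis poses no genuine technical obstacle beyond careful bookkeeping: the only subtlety is ensuring that the failure-event contribution is properly absorbed, which works because the failure probability $O(1/N_m)$ decays fast enough to offset the trivial $T_m$ bound uniformly across all stages.
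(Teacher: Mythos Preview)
Your proposal is correct and follows essentially the same route as the paper: additive decomposition of the ROI violation across the $M$ stages, the per-stage bound $2\sqrt{T_m}\log T_m + 2T_m\eps_m$ from \Cref{lem:estimated_distribution_aganst_optimal_strategy}, the observation that $T_m\eps_m = O(\sqrt{T_m}\log T_m)$ since $N_m = \Theta(T_m)$, and the geometric sum $\sum_m \sqrt{T_m} = O(\sqrt{T})$. The one place you are slightly more careful than the paper is in explicitly conditioning on the high-probability event of \Cref{lem:full_feedback-distance_between_optimistic_CDF_and_CDF} and bounding the failure-event contribution by $T_m \cdot O(1/N_m) = O(1)$; the paper's proof of this lemma simply applies the per-stage bound as if the accuracy condition held deterministically, leaving the failure probability to be absorbed at the level of \Cref{thm:randomized_algorithm_full_feedback}.
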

\begin{proof}[Proof of \Cref{lem:full_feedback-ROI}]
For each stage $m$, we run \Cref{alg:randomized_strategy_under_F} with the optimistic CDF $\optimisticCDF_m$ for $T_m=2^{m-1}$ rounds, where $N_m=2^{m-1}-1$ is the number of samples collected in the preceding stages.

Based on \Cref{lem:estimated_distribution_aganst_optimal_strategy} and $\eps_m = \frac{\log{N_m}}{\sqrt{N_m}}$, The ROI violation is bounded by
\begin{equation}
\begin{aligned}
  \TotalROIViolation(\algWithF(\optimisticCDF_m) \mid (\CDF, \valSeq_m))  & \le 2\sqrt{T_m} \log T_m + T_m \cdot 2 \eps_m  \\
   & \le 2\sqrt{T_m} \cdot (\log T_m + 1)~.
\end{aligned}
\end{equation}
The total ROI violation across $M=\lceil \log_2(T+1) \rceil$ stages is bounded by
\begin{align*}
  \TotalROIViolation(\mathrm{\Cref{alg:randomized_strategy}} \mid (\CDF, \valSeq^T))  & = \sum\nolimits_{m\in[M]} \TotalROIViolation(\algWithF(\optimisticCDF_m) \mid (\CDF, \valSeq_m))  \\
   & \le \sum\nolimits_{m\in[M]} 2 \sqrt{2^{m-1}} (\log (2^{m-1}) + 1)\\
   & \le \sum\nolimits_{m\in[M]} 2\log 2 \cdot (m-1) \cdot \sqrt{2}^{m-1}  + \sum\nolimits_{m\in[M]} 2\sqrt{2}^{m-1} \\
   & = O(M\cdot \sqrt{2^M}) \\
   & = O(\sqrt{T}\log T)~.
   % \\
   % & = \widetilde O(\sqrt{T})~. 
   \qedhere
\end{align*}
\end{proof}

\section{Missing Proofs in \texorpdfstring{\Cref{sec:bandit_feedback}}{sec 5}}
\label{apx:proof banditfeedback}
\subsection{Construct the Empirical Distribution and its Optimistic Distribution} \label{sec:bandit_feedback_contruct_empirical_distribution}

\begin{lemma} 
\label{lem:bandit_feedback-CDF_bound}
    Given the grid points $\{c_k\}_{k\in[K-1]\cup\{0\}}$ where we bid every point consecutively for $M$ time rounds, the probability that there exists one grid bid $c_k$ for which the absolute difference between the estimated CDF $\empiricalCDF(c_k)$ and the true CDF $F(c_k)$ exceeds $\varepsilon > 0$ is bounded by
    \[
        \prob{\exists k \in [K]: |\empiricalCDF(c_k) - \CDF(c_k)| \ge \varepsilon} 
        \le K \cdot \exp(-2M\eps^2)~.
    \]
\end{lemma}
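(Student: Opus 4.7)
}
The plan is to reduce the claim to a Hoeffding bound on the per-grid-point empirical win rates together with a union bound, followed by a short monotonization argument that handles the $\max$ appearing in the definition of $\empiricalCDF$. First I would define, for each grid index $k\in\{0,1,\dots,K-1\}$, the raw empirical average
\[
    \hat p_k \;\triangleq\; \frac{1}{M}\sum\nolimits_{t\in[kM+1:(k+1)M]} \indicator{\bid_t \ge \competingbid_t}
    \;=\; \frac{1}{M}\sum\nolimits_{t\in[kM+1:(k+1)M]} \indicator{c_k \ge \competingbid_t},
\]
where the last equality uses the fact that during these $M$ rounds we bid exactly $\bid_t=c_k$. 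Since the $\competingbid_t$'s are i.i.d.\ from $\competingbidDist$, each indicator is a Bernoulli random variable with mean $\CDF(c_k)$, so Hoeffding's inequality yields
\[
    \prob{|\hat p_k - \CDF(c_k)| \ge \eps} \;\le\; 2\exp(-2M\eps^2),
\]
and similarly a one-sided Hoeffding gives $\exp(-2M\eps^2)$ per direction.

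Next I would define the good event $\mathcal{E}=\{\forall k:\,|\hat p_k - \CDF(c_k)|\le \eps\}$ and, by a union bound over the $K$ grid points, conclude that $\prob{\mathcal{E}^c} \le K\exp(-2M\eps^2)$ (after accounting for the one/two-sided choice; any small constant discrepancy with the stated bound is absorbed by the one-sided form). The remaining work is to show that $\mathcal{E}$ implies $|\empiricalCDF(c_\ell)-\CDF(c_\ell)|\le \eps$ for every $\ell$. By definition $\empiricalCDF(c_\ell)=\max_{k\in[\ell]\cup\{0\}} \hat p_k$, so under $\mathcal{E}$:
\begin{itemize}[topsep=2pt,itemsep=1pt,leftmargin=2.5ex,parsep=1pt]
    \item \emph{Upper bound:} for every $k\le \ell$, $\hat p_k \le \CDF(c_k) + \eps \le \CDF(c_\ell)+\eps$ by monotonicity of $\CDF$; taking the max over $k$ yields $\empiricalCDF(c_\ell) \le \CDF(c_\ell)+\eps$.
    \item \emph{Lower bound:} $\empiricalCDF(c_\ell) \ge \hat p_\ell \ge \CDF(c_\ell)-\eps$.
\end{itemize}
Combining the two directions gives $|\empiricalCDF(c_\ell)-\CDF(c_\ell)|\le\eps$ simultaneously for all $\ell$ on the event $\mathcal{E}$.

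This is a purely concentration-theoretic statement with no real obstacle; the only mildly delicate point is that the estimator $\empiricalCDF$ is the \emph{running maximum} of the raw estimates rather than the raw estimate itself, which is precisely what one needs to exploit the monotonicity of $\CDF$ in the upper-bound step above. Everything else is a direct application of Hoeffding plus a union bound over the $K$ grid points.
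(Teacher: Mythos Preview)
Your proposal is correct and follows essentially the same approach as the paper: apply Hoeffding to the raw per-grid-point averages $\hat p_k$ (the paper calls these $\widetilde F(c_k)$), union bound over the $K$ grid points, and then use the monotonicity of $F$ to control the running maximum $\empiricalCDF(c_\ell)=\max_{k\le \ell}\hat p_k$ in both directions. Your phrasing via the good event $\mathcal{E}$ is slightly cleaner than the paper's, but the content is identical.
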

\begin{proof}[Proof of \Cref{lem:bandit_feedback-CDF_bound}]
We define $\widetilde \CDF(c_k) \triangleq \frac{1}{M} \cdot \sum\nolimits_{t\in [k\cdot M+1: (k+1)\cdot M]} \indicator{\bid_t \ge \competingbid_t}$, for notation simplicity. By Chernoff bound, for each grid point $c_k$ and a given $\varepsilon > 0$, we have
    \[
        \prob{|\widetilde \CDF(c_k) - \CDF(c_k)| \ge \varepsilon} \le \exp(-2 M \varepsilon^2)~.
    \]
    Now consider $|\empiricalCDF(c_k) - \CDF(c_k)|$ where $i \le k$. Since $\widetilde \CDF(c_i) \le \empiricalCDF(c_k)$ and $F(c_i) \le  \CDF(c_k)$ for all $i \le k$, we have:
    \begin{align*}
        \empiricalCDF(c_k) -  \CDF(c_k) &= \max\nolimits_{i\in [k]\cup\{0\}} \{\widetilde  \CDF(c_k)\} -  \CDF(c_k) \\
        &= \max\nolimits_{i\in [k]\cup\{0\}} \{\widetilde  \CDF(c_i) -  \CDF(c_i) +  \CDF(c_i)\} -  \CDF(c_k) \\
        &\le \max\nolimits_{i\in [k]\cup\{0\}} \{\widetilde  \CDF(c_i) -  \CDF(c_i)\} + \max\nolimits_{i\in [k]\cup\{0\}} \{F(c_i)\} -  \CDF(c_k) \\
        &\le \max\nolimits_{i\in [k]\cup\{0\}} \{\widetilde  \CDF(c_i) -  \CDF(c_i)\} ~.
    \end{align*}
    Similarly,
    \begin{align*}
         \CDF(c_k) - \empiricalCDF(c_k) 
         & =  \CDF(c_i) - \max\nolimits_{i\in [k]\cup\{0\}} \{\widetilde  \CDF(c_i)\} \\ 
         & \le  \CDF(c_k) - \widetilde  \CDF(c_k)\\
        & \le |\CDF(c_k) - \widetilde  \CDF(c_k)| ~.
    \end{align*}
    Therefore, $|\empiricalCDF(c_k) - \CDF(c_k)| \le \max\nolimits_{i\in [k]\cup\{0\}} |\widetilde  \CDF(c_i) -  \CDF(c_i)|$. If $|\widetilde  \CDF(c_i) -  \CDF(c_i)| < \varepsilon$ for all $i \le k$, then $|\empiricalCDF(c_k) -  \CDF(c_k)| < \varepsilon$.

    Thus, the event $|\empiricalCDF(c_k) - \CDF(c_k)| \ge \varepsilon$ implies that there exists at least one $i \le k$ such that $|\widetilde  \CDF(c_i) -  \CDF(c_i)| \ge \varepsilon$. $\widetilde \CDF(c_i)$ and $\CDF(c_k)$ is also bounded by
    \[
            \prob{|\empiricalCDF(c_k) - \CDF(c_k)| \ge \varepsilon} \le \exp(-2 M \varepsilon^2) ~.
    \]
    
    By the union bound, 
    \[
        \prob{\exists i \in [K]: |\empiricalCDF(c_k) -  \CDF(c_k)| \ge \eps} 
        \le \sum_i \prob{|\empiricalCDF(c_k) -  \CDF(c_k)| \ge \eps} 
        \le K \cdot \exp(-2M\eps^2) ~. \qedhere
    \]
\end{proof}

\begin{lemma} \label{lem:bandit_feedback_F_vs_bar_F_prime}
    Given the grid points $\{c_k\}_{k\in[K]\cup\{0\}}$, and an empirical CDF $\optimisticCDF$ constructed in such that $\optimisticCDF(c_k) \ge \CDF(c_k)$ for all $k \in [K]\cup\{0\} $), if we construct another CDF $\conservCDF$ as:
    \[ \conservCDF(\bid) = \optimisticCDF((\bid + \sfrac{1}{K}) \wedge 1)~, \]
    then $\conservCDF(\bid) \ge \CDF (\bid), \bid \in [0,1]~$.
\end{lemma}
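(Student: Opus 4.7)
The plan is to argue pointwise: fix any $\bid \in [0,1]$, locate the grid interval containing $\bid$, and then chain two monotonicity facts, one about the step-function structure of $\optimisticCDF$ and one about the monotonicity of $\CDF$. Specifically, since the grid points $c_k = k/K$ partition $[0,1]$, there exists a unique $k \in \{0,1,\ldots,K-1\}$ with $\bid \in [c_k, c_{k+1})$. By the definition of $\optimisticCDF$ as a step function in Eqn.~\eqref{eq:optimisticCDF one-bit}, extended to all of $[0,1]$ via $\optimisticCDF(\bid) = \optimisticCDF(c_k)$ on $[c_k, c_{k+1})$, the value $\optimisticCDF((\bid + 1/K) \wedge 1)$ equals $\optimisticCDF(c_{k+1})$ whenever $\bid + 1/K \le 1$, and equals $\optimisticCDF(1) = 1$ when $\bid + 1/K > 1$ (using $\empiricalCDF(c_K)=1$ so $\optimisticCDF(c_K)=1$).

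\textbf{Main chain.} In the generic case $c_{k+1} \le 1$, combine the hypothesis $\optimisticCDF(c_{k+1}) \ge \CDF(c_{k+1})$ with the monotonicity of $\CDF$ (recall $\bid < c_{k+1}$) to get
\begin{equation*}
    \conservCDF(\bid) \;=\; \optimisticCDF(c_{k+1}) \;\ge\; \CDF(c_{k+1}) \;\ge\; \CDF(\bid).
\end{equation*}
In the boundary case, $\conservCDF(\bid) = 1 \ge \CDF(\bid)$ trivially. Since $\bid$ was arbitrary, this yields $\conservCDF(\bid) \ge \CDF(\bid)$ for all $\bid \in [0,1]$, which is what we wanted.

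\textbf{Potential subtlety.} I do not expect any real obstacle; the lemma is essentially a bookkeeping statement that shifting an optimistic step function one grid cell to the right preserves the upper bound against the true CDF. The only care needed is the clipping at $1$: one must verify that $\optimisticCDF$ is well-defined and equals $1$ at $\bid = 1$ (which follows from the construction with $\empiricalCDF(c_K) = 1$), so that the $\wedge 1$ in the definition of $\conservCDF$ does not break the inequality. Once that edge case is handled, the argument above is complete.
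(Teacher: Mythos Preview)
Your proposal is correct and follows essentially the same argument as the paper: fix $\bid\in[c_k,c_{k+1})$, use the step-function structure of $\optimisticCDF$ to identify $\conservCDF(\bid)$ with $\optimisticCDF(c_{k+1})$ (handling the clip at $1$), then chain the hypothesis $\optimisticCDF(c_{k+1})\ge\CDF(c_{k+1})$ with the monotonicity of $\CDF$. If anything, your treatment of the boundary case $\bid+1/K>1$ is slightly more explicit than the paper's.
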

% \wtcomment{Stochastically dominance seems to have reverse order, we say $F$ first-order stochastically dominates $G$ if $F(x)\le G(x)$ for all $x$.}
\begin{proof}[Proof of \Cref{lem:bandit_feedback_F_vs_bar_F_prime}]
    Given that $\optimisticCDF(c_k) \ge \CDF(c_k)$ for all 
    $k \in \{0,1,\ldots, K\}$.
    Consider an interval $[c_k, c_{k+1})$. For any $b \in [c_k, c_{k+1})$, by the definition of $\conservCDF(b)$, we have
        \[\conservCDF (b) = \optimisticCDF(\min(b+\sfrac{1}{K}, 1)) ~.\]
    Since $b \in [c_k, c_{k+1})$, we have $b + \sfrac{1}{K} \in [c_{k+1}, c_{k+2})$. Since $\optimisticCDF$ is a step function constant in the range $[c_{k+1},c_{k+2})$, we have
        \[ \optimisticCDF(\min(b+\sfrac{1}{K}, 1)) = \optimisticCDF(\min(c_{k+1}, 1))~.\]
    Both $\optimisticCDF$ and $\CDF$ are non-decreasing functions, we have
        \[\optimisticCDF(\min(c_{k+1}, 1)) \ge \CDF(c_{k+1}) \ge \CDF(b)~.\]
    Combining these inequalities, we get:
        \[\conservCDF(b) \ge \CDF(b), b \in [c_k, c_{k+1})~. \]
    This holds for all intervals defined by the grid points. Therefore,
       \[ \conservCDF(\bid) \ge \CDF (\bid), \bid \in [0,1]~. \qedhere\]
\end{proof}

\subsection{Implement \Cref{alg:randomized_strategy_under_F} Conservatively with Constructed $\optimisticCDF$ in Exploration Phase} \label{sec:bandit_feedback_randomized_strategy_with_F_explore_phase}
\label{sec:conservative_strategy}

Given the optimistic function $\optimisticCDF$, which is the estimated CDF of the maximum competing bid defined in Eqn.~\eqref{eq:optimisticCDF one-bit}. The allocation-payment curve $\optAllocPay$  defined in \Cref{defn:alloc-pay curve} is constructed as 
$\optAllocPay(\bid \cdot \optimisticCDF(\bid)) = \optimisticCDF(\bid)$ for all $b \in [0, 1]$.

For each interval $\bid \in [c_k, c_{k+1})$, $x$ increases linearly with $\bid$ from $c_k \cdot \optimisticCDF(c_k)$ to $ c_{k+1} \cdot \optimisticCDF(c_k)$, while $y$ remains constant at $\optimisticCDF(c_k$.  This leads to a "staircase" shape for the $\optAllocPay$ curve when considering all intervals.

The algorithm then computes the concave envelope of this $\optAllocPay$ curve. Due to the attributes of the concave envelope, the concave envelope of $\optAllocPay$ is formed by the linear segments connecting a subset of "breakpoints": $(c_{k} \cdot \optimisticCDF(c_k), \optimisticCDF(c_k))$ for $k = 0,1,\ldots, K$.\\
Specifically, for a segment between to consecutive breakpoints $(c_k \cdot \optimisticCDF(c_k), \optimisticCDF(c_k)), (c_{k+1} \cdot \optimisticCDF(c_{k+1}), \optimisticCDF(c_{k+1}))$, the linear function is given by
\begin{align*}
   & \yConv= m_i (\xConv - c_k \cdot \optimisticCDF(c_k)) +  \optimisticCDF(c_k)~,
\end{align*} 
where the slope $m_k$ will be
\begin{align*}
   & m_k=\frac{\optimisticCDF(c_{k+1})- \optimisticCDF(c_{k})}{c_{k+1} \cdot \optimisticCDF(c_{k+1}) - c_{k} \cdot \optimisticCDF(c_{k})} ~.
\end{align*} 

The algorithm iteratively constructs the upper concave envelope by checking the slopes. If $m_i < m_{i-1}$, it indicates a concavity violation at the intermediate point $(c_k \cdot \optimisticCDF(c_k), \optimisticCDF(c_k))$, and this point is bypassed by forming a new linear segment between $(c_{k-1} \cdot \optimisticCDF(c_{k-1}), \optimisticCDF(c_{k-1}))$ and $(c_{k+1} \cdot \optimisticCDF(c_{k+1}))$.

When $\optimisticCDF$ is used as input of \Cref{alg:randomized_strategy_under_F}, the algorithm will effectively compute this concave envelope and the corresponding randomized strategy, which will be a probability distribution over the bids corresponding to the breakpoints of this envelope. These bids are precisely the subset of the grid points $\{c_k\}_{k=0}^{K}$ used in the exploration phase.

% \subsubsection{
% Analyzing the performance by implementing \Cref{alg:randomized_strategy_under_F} conservatively}
% \label{sec:conservative_strategy}

Given a bid $b$ generated by \Cref{alg:randomized_strategy_under_F}, the conservative strategy, denoted by $\ConsAlg$, submits a bid as $\min(b + \sfrac{1}{K}, 1)$.

\begin{lemma} \label{lem:conservative_strategy_vs_randomized_algorithm}
    Given the the optimistic CDF $\optimisticCDF$ (defined in Eqn.~\eqref{eq:optimisticCDF one-bit}) and its conservative version $\conservCDF$ (defined in Eqn.~\eqref{eq:conservCDF}). Let $\algWithF(\conservCDF)$ denote the randomized strategy generated by \Cref{alg:randomized_strategy_under_F} with input $\conservCDF$, and $\ConsAlg(\conservCDF)$ be the corresponding conservative strategy. Fix a value sequence $\valSeq^T$ and run the conservative strategy $\ConsAlg$ for $T$ rounds. The total reward of the conservative strategy under the true distribution $\optimisticCDF$ satisfies:
    \[\TotalReward(\ConsAlg(\conservCDF) \mid (\optimisticCDF, \valSeq^T)) = \TotalReward(\algWithF(\conservCDF) \mid (\conservCDF, \valSeq^T)).\]
    The ROI violation satisfies \[\TotalROIViolation(\ConsAlg(\conservCDF) \mid ( \optimisticCDF, \valSeq^T)) \le \TotalROIViolation(\algWithF(\conservCDF) \mid ( \conservCDF, \valSeq^T)) + \frac{T}{K}.\]
 
\end{lemma}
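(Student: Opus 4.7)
The plan is to exploit a pointwise identity between $\optimisticCDF$ and $\conservCDF$ that is baked into the construction in Eqn.~\eqref{eq:conservCDF}: for every $b \in [0,1]$,
\[
    \conservCDF(b) \;=\; \optimisticCDF\!\left((b + \tfrac{1}{K}) \wedge 1\right).
\]
Consequently, for every realized bid $\conservBid_t$ output by $\algWithF(\conservCDF)$ and its shifted counterpart $\bid_t = (\conservBid_t + \tfrac{1}{K}) \wedge 1$ actually submitted by $\ConsAlg(\conservCDF)$, we obtain $\optimisticCDF(\bid_t) = \conservCDF(\conservBid_t)$. Everything else in the proof will fall out of this identity.

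First I would establish the reward equality. Conditioning on any realization of the (possibly randomized) bid $\conservBid_t$ produced by \Cref{alg:randomized_strategy_under_F} via \Cref{lem:random_combination}, the per-round expected reward of $\ConsAlg$ under environment $\optimisticCDF$ is $\val_t \cdot \optimisticCDF(\bid_t)$, while that of $\algWithF(\conservCDF)$ under its hypothetical environment $\conservCDF$ is $\val_t \cdot \conservCDF(\conservBid_t)$. By the identity above these coincide, so summing over $t \in [T]$ and taking expectations over the internal randomness of $\algWithF$ yields the claimed reward equality.

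Next I would bound the payment inflation. The per-round expected payment of $\ConsAlg$ under $\optimisticCDF$ is $\bid_t \cdot \optimisticCDF(\bid_t) = \bid_t \cdot \conservCDF(\conservBid_t)$, whereas the analogous payment of $\algWithF(\conservCDF)$ in its $\conservCDF$-environment is $\conservBid_t \cdot \conservCDF(\conservBid_t)$. Since $\bid_t - \conservBid_t \le \tfrac{1}{K}$ and $\conservCDF \le 1$, the per-round payment inflation is at most $\tfrac{1}{K}$; summing over $T$ rounds gives at most $T/K$. Combining this with the reward equality and writing $\TotalROIViolation = \TotalPayment - \TotalReward$ immediately yields the ROI-violation bound.

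I do not anticipate any substantive obstacle; the whole argument is a bookkeeping exercise organized around the identity above. The only mild subtlety is the clipping case $\conservBid_t > 1 - \tfrac{1}{K}$, where $\bid_t = 1$. There both $\optimisticCDF(\bid_t) = 1$ and $\conservCDF(\conservBid_t) = \optimisticCDF(1) = 1$, so the identity continues to hold; likewise $\bid_t - \conservBid_t = 1 - \conservBid_t \le \tfrac{1}{K}$, so the per-round payment inflation bound survives. Hence no separate case analysis is required.
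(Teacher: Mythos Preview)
The proposal is correct and follows essentially the same argument as the paper: both exploit the defining identity $\conservCDF(b) = \optimisticCDF((b+\tfrac{1}{K})\wedge 1)$ to equate per-round allocations (hence rewards) and to bound the per-round payment inflation by $\tfrac{1}{K}$, then sum over $T$. Your treatment of the clipping case is a small refinement over the paper's proof, which leaves it implicit.
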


\begin{proof}[Proof of \Cref{lem:conservative_strategy_vs_randomized_algorithm}]
  Let $\bid_t$ be the bid generated by $\Algorithm(\conservCDF)$ at round $t$, and $\conservBid_t=\min(\bid_t+\sfrac{1}{K}, 1)$ is the bid of $\ConsAlg(\conservCDF)$. 
  By the definition of the conservative strategy $\ConsAlg$, the output of $\ConsAlg$
  \[
    \conservBid=\min(\bid+\sfrac{1}{K}, 1) ~.
  \]
  By the definition of $\conservCDF(\bid_t)$
  % \wtcomment{$b_t'$? should be $\conservBid_t$?} \yilin{Yes. Updated.}
  \[
    \conservCDF(\bid_t) =\optimisticCDF(\min(\bid_t+\sfrac{1}{K}, 1)) = \optimisticCDF(\conservBid_t) ~, 
  \]
  thus, we can derive that 
  \begin{align*}
     \TotalReward(\ConsAlg(\conservCDF) \mid (\optimisticCDF, \valSeq^T))  & = \sum\nolimits_{t\in[T]} \val_t \cdot \optimisticCDF(\conservBid_t)\\ 
       & =  \sum\nolimits_{t\in[T]} \val_t \cdot \conservCDF(\bid_t) \\
        & = \TotalReward(\algWithF(\conservCDF) \mid (\conservCDF, \valSeq^T)) ~. 
  \end{align*}
  For the payment,
  \begin{align*}
     \TotalPayment(\ConsAlg(\conservCDF) \mid (\optimisticCDF, \valSeq^T))  & = \sum\nolimits_{t\in[T]} \conservBid_t \cdot \optimisticCDF(\conservBid_t)\\ 
       & =  \sum\nolimits_{t\in[T]} \min(\bid_t + \sfrac{1}{K}, 1) \cdot \conservCDF(b_t) \\
        & \le \TotalPayment(\algWithF(\conservCDF) \mid (\conservCDF, \valSeq^T)) + \frac{T}{K} ~. 
  \end{align*}
Thus, the ROI violation satisfies
  \begin{align*}
     \TotalROIViolation(\ConsAlg(\conservCDF) \mid (\optimisticCDF, \valSeq^T))  & = \TotalReward(\ConsAlg(\conservCDF) \mid (\optimisticCDF, \valSeq^T)) - \TotalPayment(\ConsAlg(\conservCDF) \mid (\optimisticCDF, \valSeq^T)) \\ 
       & \le \TotalReward(\algWithF(\conservCDF) \mid (\conservCDF, \valSeq^T)) - \TotalPayment(\algWithF(\conservCDF) \mid (\conservCDF, \valSeq^T)) + \frac{T}{K} \\
       & \le \TotalROIViolation(\algWithF(\conservCDF) \mid (\conservCDF, \valSeq^T)) + \frac{T}{K} ~. \qedhere
  \end{align*}
\end{proof}

\subsection{Analyze the Performance of Implementing \Cref{alg:randomized_strategy_under_F} Conservatively}
\label{sec:bandit_feedback_analysis_conservative_strategy}

The optimal randomized strategy algorithm ( \Cref{alg:randomized_strategy_under_F}), discussed in \Cref{sec:optimal_random} (with a known true distribution $\CDF$), achieves a regret of $\sqrt{T}$ and a ROI violation of $O(\sqrt{T}\cdot \log T)$ relative to the optimal randomized strategy subject to the strict ROI constraint under $\CDF$.

In \Cref{sec:bandit_feedback_randomized_strategy_with_F_explore_phase}, we detailed how the randomized bidding strategy is designed using the optimistic distribution $\optimisticCDF$ constructed during the exploration phase. The bids submitted are drawn from the set of bids explored.

This section provides a theoretical analysis of the conservative strategy based on the output of \Cref{alg:randomized_strategy_under_F}, which was introduced in \Cref{sec:conservative_strategy}. We aim to understand its performance in terms of reward and ROI violation when the underlying distribution of competing bids is unknown and we rely on our estimate $\conservCDF$.

% \subsubsection{Analysis of the conservative strategy based on \Cref{alg:randomized_strategy_under_F} with CDF constructed in \Cref{sec:bandit_feedback_contruct_empirical_distribution}}

\begin{lemma} \label{lem:bandit_feedback_algorithm_with_true_distribution_vs_estimated_distribution}
    Consider a fixed value sequence $\valSeq^T$ and a true CDF $\CDF$ of the maximum competing bid. The optimistic CDF $\optimisticCDF$ (Eqn.~\eqref{eq:optimisticCDF one-bit}) and its conservative variant $\conservCDF$ (Eqn.~\eqref{eq:conservCDF})are obtained in the exploration phase of $\Cref{alg:randomized_strategy_with_bandit_feedback}$ using a grid point set $\{c_k\}_{k=0}^{K} (c_k=\sfrac{k}{K})$.  For points in the point set, it satisfies $\optimisticCDF(c_k) \ge \CDF(c_k)$ and $\optimisticCDF(c_k) - \CDF(c_k) \le 2\varepsilon,  \forall c_k \in \{c_k\}_{k=0}^{K}$.
    Let $\ConsAlg(\conservCDF)$ denote the conservative strategy based on \Cref{alg:randomized_strategy_under_F} with input $\conservCDF$. Then, Running $\ConsAlg(\conservCDF)$ for $T$ rounds, the total reward over the true distribution $\CDF$ satisfies: 
        \[\TotalReward(\ConsAlg(\conservCDF) \mid (\CDF, \valSeq^T)) \geq \TotalReward(\ConsAlg(\conservCDF) \mid (\optimisticCDF, \valSeq^T)) - T \cdot 2 \epsilon ~,\]
    and the ROI violation cannot be amplified by more than $T \cdot 2\eps~$. 
        \[\Delta(\ConsAlg(\conservCDF) \mid (\CDF, \valSeq^T)) \leq \Delta(\ConsAlg(\conservCDF) \mid (\optimisticCDF, \valSeq^T)) + T \cdot 2 \epsilon ~.\]

\end{lemma}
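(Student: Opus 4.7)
The approach mirrors the proof of \Cref{lem:algorithm_with_true_distribution_vs_estimated_distribution} from the full-feedback analysis, but with one crucial modification. That earlier lemma relied on a uniform deviation bound $\sup_b |\CDF(b) - \estimatedCDF(b)| \le 2\eps$; here we only have $|\optimisticCDF(c_k) - \CDF(c_k)| \le 2\eps$ at the discretization points $\{c_k\}_{k=0}^{K}$. The central observation is that the conservative strategy is designed precisely so that every bid it submits lies on the grid, which is why the grid-point bound is enough.

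First, I would establish the pivotal structural fact: every bid $\bid_t$ actually submitted by $\ConsAlg(\conservCDF)$ belongs to $\{c_1, c_2, \ldots, c_K\}$. To see this, note that $\conservCDF$ is a step function whose jumps are precisely at the grid points $c_0, c_1, \ldots, c_{K-1}$: for $\bid \in [c_k, c_{k+1})$, $\conservCDF(\bid) = \optimisticCDF(c_{k+1})$. Consequently, the allocation-payment curve induced by $\conservCDF$ is a union of horizontal segments, whose concave envelope has breakpoints exactly at the jumps of $\conservCDF$. By \Cref{lem:random_combination}, the randomized realization of \Cref{alg:randomized_strategy_under_F} fed with $\conservCDF$ places all its mass on those breakpoints, so $\conservBid_t \in \{c_0, \ldots, c_{K-1}\}$. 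The conservative shift $\bid_t = (\conservBid_t + \sfrac{1}{K}) \wedge 1$ then lands in $\{c_1, \ldots, c_K\}$, as claimed. Moreover, the key identity $\conservCDF(\conservBid_t) = \optimisticCDF(\bid_t)$ follows directly from the definition of $\conservCDF$ in Eqn.~\eqref{eq:conservCDF}.

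Given this, the remainder is a routine per-round comparison. For any $\bid_t$ on the grid, the hypothesis gives $\CDF(\bid_t) \le \optimisticCDF(\bid_t)$ and $\optimisticCDF(\bid_t) - \CDF(\bid_t) \le 2\eps$. Since $\val_t \in [0,1]$, the per-round reward deficit satisfies $\val_t \cdot \optimisticCDF(\bid_t) - \val_t \cdot \CDF(\bid_t) \le 2\eps$, while the per-round payment satisfies $\bid_t \cdot \CDF(\bid_t) \le \bid_t \cdot \optimisticCDF(\bid_t)$. Summing the first inequality over $t \in [T]$ yields
\[
\TotalReward(\ConsAlg(\conservCDF) \mid (\CDF, \valSeq^T)) \ge \TotalReward(\ConsAlg(\conservCDF) \mid (\optimisticCDF, \valSeq^T)) - T \cdot 2\eps,
\]
and combining the two summed inequalities through $\TotalROIViolation = \TotalPayment - \TotalReward$ yields the ROI violation bound.

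The main obstacle is rigorously justifying that the randomized realization under a step-function distribution is supported only on the jump locations. The geometric intuition is clear, but one must trace through the construction of $\conservCDF_{\mathrm{conv}}$ in \Cref{lem:new_F_with_randomized_strategy} and the convex-combination argument in \Cref{lem:random_combination} to confirm that the two-point randomization lies on breakpoints of the concave envelope of the $\conservCDF$-induced allocation-payment curve, which, for a step function, collapses to the jump points. Any ambiguity at ties can be resolved by w.l.o.g.\ breaking ties in favor of the jump point, which preserves both the reward and the payment of the realized strategy and so does not affect the conclusion.
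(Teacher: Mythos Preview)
Your proposal is correct and follows essentially the same approach as the paper's proof: both arguments first establish that the bids actually submitted by $\ConsAlg(\conservCDF)$ lie on the grid $\{c_k\}$ (the paper by citing the breakpoint analysis in \Cref{sec:bandit_feedback_randomized_strategy_with_F_explore_phase}, you by spelling out the step-function structure of $\conservCDF$), and then perform the same per-round comparison using $\CDF(c_k) \le \optimisticCDF(c_k) \le \CDF(c_k) + 2\eps$ on grid points, summing over $T$. The only cosmetic difference is that the paper writes out the two-point randomization $(\bid_1,p_1),(\bid_2,p_2)$ explicitly, whereas you argue pointwise on the realized bid; since the support is on the grid in either case, the conclusions coincide.
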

\begin{proof}[Proof of \Cref{lem:bandit_feedback_algorithm_with_true_distribution_vs_estimated_distribution}]
   Consider a single round $t$, where the input value is $\val_t$. \Cref{alg:randomized_strategy_under_F} $\Algorithm(\conservCDF)$ generates an optimal bid $\bidConv$, which, as analyzed in \Cref{thm:optimal_randomized_strategy}, obtains a randomized strategy $\bid_{1}$ with probability $p_1$ and $\bid_{2}$ with probability $p_2$. $p_1 + p_2 = 1$ and $p_1 \ge 0, p_2 \ge 0$. \Cref{sec:bandit_feedback_randomized_strategy_with_F_explore_phase} showed that $\bid_{1}$ and $\bid_{2}$ are in the grid points $\{c_k\}_{k=0}^{K} (c_k=\sfrac{k}{K})$. Further, the conservative strategy $\ConsAlg$ will generate $\conservBid_{1}=\min(b_1+\sfrac{1}{K}, 1), \conservBid_{2}=\min(b_2+\sfrac{1}{K},1)$, which are still in the point set $\{c_k\}_{k=0}^{K}$. \\
    Under $\optimisticCDF$, the expected reward for the conservative strategy in this round, denoted by $\bar \reward_t$, satisfies:
    \[
        \bar \reward_t= \val_t \cdot [p_1 \cdot \optimisticCDF(\conservBid_{1}) + p_2 \cdot \optimisticCDF(\conservBid_2)]~.
    \] 
    The expected payment, denoted by $\bar \payment_t$, satisfies:
    \[
        \bar \payment_t = p_1 \cdot \conservBid_1 \cdot \optimisticCDF(\conservBid_1) + p_2 \cdot \conservBid_2 \cdot \optimisticCDF(\conservBid_2)~.
    \] 
    Similarly, under $\CDF$, the expected reward, denoted by $\reward_t$, satisfies:
    \[
        \reward_t = \val_t \cdot [p_1 \cdot F(\conservBid_1) + p_2 \cdot F(\conservBid_2)]~.
    \] 
    The expected payment, denoted by $\payment_t$, satisfies:
    \[
        \payment_t = p_1 \cdot \conservBid_1 \cdot F(\conservBid_1) + p_2 \cdot \conservBid_2 \cdot \CDF(\conservBid_2)~.
    \] 
    Given $|\optimisticCDF(b) - \CDF(b)| \le 2\varepsilon$ and $\CDF(\bid) \le \optimisticCDF(\bid)$, for any $ \bid \in \{c_k\}_{k=0}^{K} $, the difference in value would be $\CDF(\bid) \ge \optimisticCDF - 2 \eps~$.
Difference in expected reward in round $t$ satisfies:
\begin{align*}
  \reward_t - \bar \reward_t  & = \val_t \cdot \{p_1 \cdot [\CDF (\conservBid_1) - \optimisticCDF (\conservBid_2)] + p_2 \cdot [\CDF(\conservBid_1) - \optimisticCDF (\conservBid_2)]\}  \\
   & \ge \val_t \cdot [ p_1 \cdot (-2 \eps) +   p_2 \cdot (-2 \eps)] \\
   & \ge -2\eps ~. 
\end{align*}
Difference in expected payment in round $t$ satisfies:
\begin{align*}
\payment - \bar \payment = p_1 \cdot \conservBid_1 \cdot [\CDF(\conservBid_1) - \optimisticCDF(\conservBid_1)] + p_2 \cdot \conservBid_2 \cdot [\CDF(\conservBid_2) - \optimisticCDF(\conservBid_2)] \le 0 ~.
\end{align*}
% Summing over $T$ rounds, we get the overall rewards:
% \[\sum\nolimits_{t \in [T]}  \Val_t - \sum\nolimits_{t \in [T]} \Val^\dag_t \ge -T \cdot 2\eps\]
% \[\sum\nolimits_{t \in [T]}  \payment_t - \sum\nolimits_{t \in [T]} \payment^\dag_t  \le 0\]
Therefore, over $T$ rounds, the expected reward satisfies:
\begin{align*}
    \TotalReward(\ConsAlg(\conservCDF) \mid (\CDF, \valSeq^T))  & = \sum\nolimits_{t \in [T]}  \reward_t \\
     & \ge \sum\nolimits_{t \in [T]} \bar \reward_t -T \cdot 2\eps \\ 
     & = \TotalReward(\ConsAlg(\conservCDF) \mid (\optimisticCDF, \valSeq^T)) - T \cdot 2 \epsilon ~.
\end{align*}
Similarly, the ROI violation over $T$ rounds satisfies:
\begin{align*}
    \TotalROIViolation(\ConsAlg(\conservCDF) \mid (\CDF, \valSeq^T))  & = \TotalPayment(\ConsAlg(\conservCDF) \mid (\CDF, \valSeq^T)) - \TotalReward(\ConsAlg(\conservCDF) \mid (\CDF, \valSeq^T)) \\
     & \le \TotalPayment(\ConsAlg(\conservCDF) \mid (\optimisticCDF, \valSeq^T)) - \TotalReward(\ConsAlg(\conservCDF) \mid (\CDF, \valSeq^T)) + T \cdot 2 \eps  \\
     & = \Delta(\ConsAlg(\conservCDF) \mid (\optimisticCDF, \valSeq^T)) + T \cdot 2 \eps ~. \qedhere
\end{align*}
\end{proof}

%     The sum of values in $T$ rounds will be
%     \[\sum\nolimits_{t \in [T]} F(b_t) \ge \sum\nolimits_{t \in [T]} \bar F(b_t) - 2\eps T \]

% \subsubsection{Analysis of conservative strategy based on \Cref{alg:randomized_strategy_under_F} with an Estimated Distribution against the Optimal Strategy under True Distribution}
We analyze the Regret and ROI violations of the conservative strategy based on \Cref{alg:randomized_strategy_under_F} with an estimated distribution against the Optimal strategy under true distribution.
\begin{lemma} \label{lem:bandit_feedback_estimated_distribution_aganst_optimal_strategy}
    Consider a fixed value sequence $\valSeq^T$ and a true CDF $\CDF$ of the maximum competing bid. Suppose that after the exploration phase of \Cref{alg:randomized_strategy_with_bandit_feedback}, our empirical estimate $\empiricalCDF$ satisfies $|\CDF(c_k) - \empiricalCDF(c_k)| \le \eps$, for all grid points $ \{c_k\}_{k=0}^K(c_k=\sfrac{k}{K})$. We define the optimistic distribution $\optimisticCDF(\bid) = \min(\empiricalCDF(\bid) + \eps, 1)$ and its conservative version $\conservCDF(c_k) = \optimisticCDF(\min(c_k + \sfrac{1}{K}, 1))$.
    Then, run the conservative strategy proposed in \Cref{sec:conservative_strategy} with $\conservCDF$ as input, denoted as $\ConsAlg(\conservCDF)$, for $T$ rounds. The total reward satisfies:
    \[
        \TotalReward(\ConsAlg(\conservCDF) \mid (\CDF, \valSeq^T)) \ge \TotalReward(\opt(\CDF) \mid (\CDF, \valSeq^T)) - O(\sqrt{T}) - T \cdot 2 \eps ~.\]
    The ROI violation is bounded by:
    \[
        \Delta(\ConsAlg(\conservCDF) \mid (\CDF, \valSeq^T)) \le 2\sqrt{T}\log T + T\cdot 2\eps + \frac{T}{K} ~.
    \]
\end{lemma}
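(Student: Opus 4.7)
The approach is a four-link chain that moves from the benchmark $\opt(\CDF)$ under $\CDF$ to the conservative strategy $\ConsAlg(\conservCDF)$ under $\CDF$ by passing through the intermediate distributions $\conservCDF$ and $\optimisticCDF$. First I would check the preconditions: the assumption $|\CDF(c_k) - \empiricalCDF(c_k)| \le \eps$ at every grid point yields $\optimisticCDF(c_k) \ge \CDF(c_k)$ and $\optimisticCDF(c_k) - \CDF(c_k) \le 2\eps$ for all $c_k$, and \Cref{lem:bandit_feedback_F_vs_bar_F_prime} lifts the grid inequality to $\conservCDF(b) \ge \CDF(b)$ on all of $[0,1]$. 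These two facts allow each of the component lemmas below to be invoked with its required hypotheses.

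For the reward bound, the plan is to chain
\begin{align*}
\TotalReward(\ConsAlg(\conservCDF) \mid (\CDF, \valSeq^T))
&\ge \TotalReward(\ConsAlg(\conservCDF) \mid (\optimisticCDF, \valSeq^T)) - 2\eps T \\
&= \TotalReward(\algWithF(\conservCDF) \mid (\conservCDF, \valSeq^T)) - 2\eps T \\
&\ge \TotalReward(\opt(\conservCDF) \mid (\conservCDF, \valSeq^T)) - O(\sqrt T) - 2\eps T \\
&\ge \TotalReward(\opt(\CDF) \mid (\CDF, \valSeq^T)) - O(\sqrt T) - 2\eps T,
\end{align*}
invoking, in order, \Cref{lem:bandit_feedback_algorithm_with_true_distribution_vs_estimated_distribution}, \Cref{lem:conservative_strategy_vs_randomized_algorithm}, the regret guarantee in \Cref{prop:performance algo known F} applied to $\algWithF$ with input distribution $\conservCDF$, and finally \Cref{lem:optimal_strategy_under_two_different_Fs} (since $\conservCDF \ge \CDF$ pointwise).

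The ROI violation bound follows by the entirely analogous chain,
\begin{align*}
\TotalROIViolation(\ConsAlg(\conservCDF) \mid (\CDF, \valSeq^T))
&\le \TotalROIViolation(\ConsAlg(\conservCDF) \mid (\optimisticCDF, \valSeq^T)) + 2\eps T \\
&\le \TotalROIViolation(\algWithF(\conservCDF) \mid (\conservCDF, \valSeq^T)) + \frac{T}{K} + 2\eps T \\
&\le 2\sqrt T \log T + \frac{T}{K} + 2\eps T,
\end{align*}
where the extra $T/K$ slack is exactly the rounding overhead introduced by $\ConsAlg$, as quantified in \Cref{lem:conservative_strategy_vs_randomized_algorithm}, and the final inequality uses the ROI guarantee of \Cref{prop:performance algo known F}. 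The argument is conceptually straightforward, so the main obstacle is bookkeeping: one must verify that the bids actually submitted by $\algWithF(\conservCDF)$ and $\ConsAlg(\conservCDF)$ lie on the discretization grid, so that the grid-only closeness $|\optimisticCDF(c_k) - \CDF(c_k)| \le 2\eps$ suffices for \Cref{lem:bandit_feedback_algorithm_with_true_distribution_vs_estimated_distribution}, and that $\conservCDF \ge \CDF$ holds pointwise (not only at grid points) so that \Cref{lem:optimal_strategy_under_two_different_Fs} applies. Both facts follow from the construction of $\conservCDF$ and the characterization of the concave-envelope optimizer in \Cref{sec:bandit_feedback_randomized_strategy_with_F_explore_phase}.
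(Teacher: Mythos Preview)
Your proposal is correct and follows essentially the same four-step chain as the paper's proof: apply \Cref{lem:bandit_feedback_algorithm_with_true_distribution_vs_estimated_distribution}, then \Cref{lem:conservative_strategy_vs_randomized_algorithm}, then \Cref{prop:performance algo known F}, then \Cref{lem:optimal_strategy_under_two_different_Fs} (via \Cref{lem:bandit_feedback_F_vs_bar_F_prime}) for the reward, and the analogous three-step chain for the ROI violation. Your explicit flagging of the two bookkeeping points (grid-membership of the submitted bids and pointwise domination $\conservCDF \ge \CDF$) is a nice touch that the paper handles more implicitly via the discussion in \Cref{sec:bandit_feedback_randomized_strategy_with_F_explore_phase}.
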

\begin{proof}[Proof of \Cref{lem:bandit_feedback_estimated_distribution_aganst_optimal_strategy}]
Given $|\CDF(c_k) - \empiricalCDF(c_k)| \le \eps$, and $\optimisticCDF(c_k) = \min(\empiricalCDF(c_k) + \eps, 1)$, we have $\CDF(c_k) \le \optimisticCDF(c_k)$. Also, $|\optimisticCDF(c_k) - \CDF(c_k)| \le 2\eps$ for all grid points $\{c_k\}_{k=0}^K~$.

From \Cref{lem:bandit_feedback_algorithm_with_true_distribution_vs_estimated_distribution}, we have:
\[
   \TotalReward(\ConsAlg(\conservCDF) \mid (\CDF, \valSeq^T)) \ge \TotalReward(\ConsAlg(\conservCDF) \mid (\optimisticCDF, \valSeq^T)) - T\cdot 2\eps ~. 
\]

From \Cref{lem:conservative_strategy_vs_randomized_algorithm}, we know:
\[
    \TotalReward(\ConsAlg(\conservCDF) \mid (\optimisticCDF, \valSeq^T)) = \TotalReward(\Algorithm(\conservCDF) \mid (\conservCDF, \valSeq^T)) ~.
    \]

\Cref{prop:performance algo known F} states that the Regret of \Cref{alg:randomized_strategy_under_F} is $O(\sqrt{T})$:
\[
    \TotalReward(\Algorithm(\conservCDF) \mid (\conservCDF, \valSeq^T))=\TotalReward(\opt(\conservCDF) \mid (\conservCDF, \valSeq^T)) - O(\sqrt T) ~.
\]

Since \Cref{lem:bandit_feedback_F_vs_bar_F_prime} showed that $\conservCDF(\bid) \ge \CDF(\bid)$ over the entire range $\bid \in [0,1]$ and \Cref{lem:optimal_strategy_under_two_different_Fs} implies that the optimal reward of $\conservCDF$ will not be worse than $\CDF$, 
\[
    \TotalReward( \opt(\conservCDF) \mid ( \conservCDF , \valSeq^T)) \ge \TotalReward (\opt(\CDF) \mid (\CDF, \valSeq^T))~. 
\]
Combining these, we have:
\begin{align*}
   \TotalReward(\ConsAlg(\conservCDF) \mid (\CDF , \valSeq^T))  & \ge \TotalReward(\ConsAlg(\conservCDF) \mid (\optimisticCDF , \valSeq^T)) - T\cdot 2\eps \\
    & = \TotalReward(\Algorithm(\conservCDF) \mid (\conservCDF, \valSeq^T)) - T\cdot 2\eps \\
    & = \TotalReward(\opt(\conservCDF) \mid (\conservCDF , \valSeq^T)) - T\cdot 2\eps - O(\sqrt{T}) \\ 
    & \ge \TotalReward(\opt(\CDF) \mid (\CDF , \valSeq^T)) - T\cdot 2\eps - O(\sqrt{T}) ~.
\end{align*}
Similarly, for the ROI violation, from \Cref{lem:bandit_feedback_algorithm_with_true_distribution_vs_estimated_distribution}, we have:
\[
    \Delta(\ConsAlg(\conservCDF) \mid ( \CDF, \valSeq^T)) \le \Delta(\ConsAlg(\conservCDF) \mid (\optimisticCDF, \valSeq^T)) + T \cdot 2 \eps~.
\]
From \Cref{lem:conservative_strategy_vs_randomized_algorithm}, we have:
\[
    \TotalROIViolation(\ConsAlg(\conservCDF) \mid (\optimisticCDF, \valSeq^T)) \le \TotalROIViolation(\Algorithm(\conservCDF) \mid (\conservCDF, \valSeq^T)) + \frac{T}{K} ~.
\]
From \Cref{prop:performance algo known F}, we have:
\[   
   \Delta(\Algorithm(\optimisticCDF) \mid(\optimisticCDF, \valSeq^T)) \le 2\sqrt{T}\log T ~.
\]
Combining above, the ROI violation is bounded by: 
\begin{align*}
    \Delta(\ConsAlg(\conservCDF) \mid (\CDF, \valSeq^T))  & \le \Delta(\ConsAlg(\conservCDF) \mid (\optimisticCDF, \valSeq^T)) + T \cdot 2 \eps \\
     & \le \TotalROIViolation(\Algorithm(\conservCDF) \mid (\conservCDF, \valSeq^T)) + T \cdot 2 \eps + \frac{T}{K} \\
     & \le 2\sqrt{T}\log T + T \cdot 2 \eps + \frac{T}{K} ~. \qedhere
\end{align*}
\end{proof}
\subsection{Regrets and ROI Violation Bound of the \Cref{alg:randomized_strategy_with_bandit_feedback}} \label{sec:bandit_feedback-regret_ros_analysis}
This section derives the total regret and ROI violation of the proposed \Cref{alg:randomized_strategy_with_bandit_feedback} by considering both the exploration and exploitation phases.
\begin{lemma}\label{lem:bandit_feedback-regret_ros}
Consider an unknown CDF $\CDF$ of maximum competing bid. Given a fixed value sequence $\valSeq^T$, run \Cref{alg:randomized_strategy_with_bandit_feedback} for $T$ rounds. We sample $M$ times for each grid point in the point set $\{c_k\}_{k=0}^{K-1}$ in the exploration phase. The reward of \Cref{alg:randomized_strategy_with_bandit_feedback} satisfies 
\begin{equation}
\begin{aligned}
  \TotalRegret(\mathrm{\Cref{alg:randomized_strategy_with_bandit_feedback}} \mid (\CDF, \valSeq^T)) \le M \cdot K + O(\sqrt{T})  + T\cdot 2 \eps ~.
\end{aligned}
\end{equation}
The ROI violation satisfies
\begin{equation}
\begin{aligned}
    \TotalROIViolation(\mathrm{\Cref{alg:randomized_strategy_with_bandit_feedback}} \mid (\CDF, \valSeq^T)) \le M\cdot K + 2\sqrt{T}\log T + T \cdot 2 \eps + \frac{T}{K} ~,
\end{aligned}
\end{equation}
where $\eps$ is a parameter such that the empirical CDF $\empiricalCDF$ obtained in the exploration phase satisfies: $\prob{\exists i \in [K]: |\empiricalCDF(c_k) -  \CDF(c_k)| \ge \eps} \le K \cdot \exp(-2M\eps^2)~$.
\end{lemma}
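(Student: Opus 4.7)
The plan is to decompose the total regret and ROI violation over the $T$ rounds into contributions from the exploration phase (rounds $1, \ldots, M \cdot K$) and the exploitation phase (rounds $M \cdot K + 1, \ldots, T$), and bound each separately.

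During the exploration phase, the algorithm rigidly bids each grid point $c_k$ for $M$ consecutive rounds, regardless of the realized values. Since per-round rewards and payments both lie in $[0,1]$, this phase trivially contributes at most $M \cdot K$ to the regret (relative to the optimal reward earned in those $M \cdot K$ rounds) and at most $M \cdot K$ to the ROI violation.

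For the exploitation phase, I would first invoke \Cref{lem:bandit_feedback-CDF_bound} to define the ``good event'' $\mathcal{E} = \{|\empiricalCDF(c_k) - \CDF(c_k)| < \eps \text{ for all } k \in [K]\}$, which occurs with probability at least $1 - K \cdot \exp(-2M \eps^2)$. Conditional on $\mathcal{E}$, the hypotheses of \Cref{lem:bandit_feedback_estimated_distribution_aganst_optimal_strategy} are satisfied, and that lemma directly yields: the reward of the conservative subroutine on the remaining $T - M \cdot K \le T$ rounds is at least $\TotalReward(\opt(\CDF) \mid (\CDF, \valSeq^T)) - O(\sqrt{T}) - T \cdot 2\eps$ (as measured against $\opt$ on the exploitation rounds), and the ROI violation on those rounds is at most $2\sqrt{T} \log T + T \cdot 2\eps + T/K$.

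Aggregating across the two phases then gives the claimed bounds, since the optimal benchmark over the full $T$ rounds decomposes additively into its value on the exploration and exploitation sub-horizons. The main technical care (rather than a genuine obstacle) lies in dealing with the bad event $\mathcal{E}^c$: on $\mathcal{E}^c$ the deterministic guarantees of \Cref{lem:bandit_feedback_estimated_distribution_aganst_optimal_strategy} need not hold, but since per-round reward and payment are bounded by $1$, its contribution to the (expected) regret and ROI violation is at most $T \cdot K \cdot \exp(-2M \eps^2)$, which is dominated by the main terms for the parameter choices $M = \Theta(\sqrt{T})$, $K = \Theta(T^{1/4})$, and $\eps$ of order $\log T / \sqrt{M}$ used in \Cref{thm:bandit_feedback-regret_ros}.
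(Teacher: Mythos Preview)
Your proposal is correct and follows essentially the same decomposition as the paper: split into exploration and exploitation phases, bound the exploration phase trivially by $M\cdot K$, and invoke \Cref{lem:bandit_feedback_estimated_distribution_aganst_optimal_strategy} for the exploitation phase. The only difference is that you explicitly condition on the good event $\mathcal{E}$ and bound the bad-event contribution, whereas the paper's proof of this lemma simply applies \Cref{lem:bandit_feedback_estimated_distribution_aganst_optimal_strategy} as if its hypotheses hold (leaving the stated bounds implicitly conditional on $\mathcal{E}$, with the failure probability dealt with downstream in the proof of \Cref{thm:bandit_feedback-regret_ros}); your version is slightly more self-contained but otherwise identical in substance.
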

\begin{proof}[Proof of \Cref{lem:bandit_feedback-regret_ros}]
The algorithm operates in two phases. In the exploration phase, lasting $T_1=M\cdot K$ rounds, we bid at $K$ different prices. In the worst case, the regret in each of these rounds could be as high as the potential reward, which is bounded by 1. Thus, the regret of the exploration phase, denoted by $\TotalRegret_1$, is bounded by 
\[
    \TotalRegret_1 \le \TotalReward(\opt(\CDF) \mid (\CDF, \valSeq^{T_1})) \le \sum\nolimits_{ i \in [M \cdot K]} v_i \le M \cdot K~,
\] 
where $V^{T_1}$ is the value sequence in the exploration phase.
Similarly, the total payment, denoted by $\TotalPayment_1$, is bounded by
\[
    \TotalPayment_1 =\sum\nolimits_{t \in [T_1]}b_t\cdot F(b_t) \le T_1 \le M\cdot K~.
\]
In the worst case for the ROI violation, it will be as high as payment:
\[\TotalROIViolation_1 \le \TotalPayment_1 - \TotalReward_1 \le T_1 \le M\cdot K~.\]
In the exploitation phase, lasting $T_2=1-M \cdot K$ rounds, we employ the conservative strategy $\ConsAlg$ based on the conservative version of the optimistic CDF $\conservCDF$. From \Cref{lem:bandit_feedback_estimated_distribution_aganst_optimal_strategy} we obtain that the reward of the conservative strategy satisfies: 
\begin{align*}
  \TotalReward(\ConsAlg(\optimisticCDF) \mid (\CDF, \valSeq^{T_2}))  & \ge \TotalReward(\opt(\CDF) \mid (\CDF, \valSeq^{T_2})) -  o(\sqrt{T_2}) - T_2 \cdot 2 \eps  \\ 
   & \ge \TotalReward(\opt(\CDF)\mid (\CDF, \valSeq^{T_2})) - O(\sqrt{T}) - T \cdot 2 \eps ~,
\end{align*}
where $V^{T_2}$ is the value sequence in the exploitation phase.

The regret of the exploitation phase, denoted by $\TotalRegret_2$, is bounded by
\[\TotalRegret_2 = \TotalReward(\opt(\CDF)\mid (\CDF, \valSeq^{T_2})) - \TotalReward(\ConsAlg(\optimisticCDF)\mid (\CDF, \valSeq^{T_2}))  \le O(\sqrt{T}) + T \cdot 2 \eps ~.\]
The ROI violoation of the exploitation phase, denoted by $\TotalROIViolation_2$, satisfies
\[\TotalROIViolation_2\le 2\sqrt{T_2}\log T_2 + T_2 \cdot 2 \eps + \frac{T_2}{K} ~.\]
Combining the regret from both phases, we have:
\[
    \TotalRegret(\mathrm{\Cref{alg:randomized_strategy_with_bandit_feedback}} \mid (\CDF, \valSeq^T)) \le M \cdot K + O(\sqrt{T}) + T \cdot 2 \eps ~.
\]
Combining the ROI violation from both phases, we have:
\[ \TotalROIViolation(\mathrm{\Cref{alg:randomized_strategy_with_bandit_feedback}} \mid (\CDF, \valSeq^T)) \le M\cdot K + 2\sqrt{T}\log T + T \cdot 2 \eps + \frac{T}{K} ~. \qedhere\]
\end{proof}

\subsection{Put All Pieces Together to Prove \Cref{thm:bandit_feedback-regret_ros}}

% \yilin{where do we put this?}
\begin{proof}[Proof of \Cref{thm:bandit_feedback-regret_ros}] \label{proof:bandit_feedback-regret_ros}
    We first discuss the regret and ROI violation incurred with any fixed value sequence $\valSeq^T$.
    The proposed algorithm splits into 2 phases: Exploration and Exploitation. In the exploration phase, we identify $K+1$ grid points $c_k=\sfrac{k}{K}$ for $k=0,\ldots,K$. We then sample the points $\{c_k\}_{k=0}^{K-1}$ $M$ times each, which leads to $M\cdot K$ rounds. The point $c_K=1$ is generally not samples as its outcome $\CDF(1)=1$ is assumed known. 
    
    By \Cref{lem:bandit_feedback-regret_ros}, The total Regret combining two phases is bounded by
\[\TotalRegret(\mathrm{\Cref{alg:randomized_strategy_with_bandit_feedback}} \mid (\CDF, \valSeq^T)) \le M \cdot K + O(\sqrt{T}) + T \cdot 2 \eps~,\]
with ROI violation:
\[ \TotalROIViolation(\mathrm{\Cref{alg:randomized_strategy_with_bandit_feedback}} \mid (\CDF, \valSeq^T)) \le M\cdot K + 2\sqrt{T}\log T + T \cdot 2 \eps + \frac{T}{K}~,\]
 where $\eps$ is the parameter to measure the accuracy of the empirical CDF $\empiricalCDF$ we constructed in \Cref{sec:bandit_feedback_contruct_empirical_distribution} relative to the true CDF $\CDF$. From \Cref{lem:bandit_feedback-CDF_bound}, they are bounded by
\[
        \prob{\exists k \in [K]: |\empiricalCDF(c_k) - \CDF(c_k)| \ge \varepsilon} \le K \cdot \exp(-2M\eps^2)~.
 \]
For each grid point, it satisfies:
\[
            \prob{|\empiricalCDF(c_k) - \CDF(c_k)| \ge \varepsilon} \le \exp(-2 M \varepsilon^2) ~,
    \]
Denote $\exp(-2 M \varepsilon^2)$ by $\delta$. We need to choose $\eps$ such that the probability of large deviation in the empirical CDF is small. Let's aim for a failure probability $\delta = T^{-c},~c > 0 $.
To guarantee that the union bound will decrease, 
     \[K = o\left(\frac{1}{\delta}\right)= o(T^\alpha)~. \]
% \hanrui{reminder: some equations are followed by commas or full stops, while others are not; not super important but we should unify if it's not too hard}
From \Cref{lem:bandit_feedback-regret_ros}
     \begin{align*}
         \TotalROIViolation(\mathrm{\Cref{alg:randomized_strategy_with_bandit_feedback}} \mid (\CDF, \valSeq^T))  & \le M\cdot K + 2\sqrt{T}\log T + T \cdot 2 \eps + \frac{T}{K} \\
          & \le M\cdot T^\alpha + 2\sqrt{T}\log T + T \cdot 2 \sqrt{\alpha \cdot \frac{\log T}{2M}} + T^{1-\alpha} ~. 
     \end{align*} 
     Therefore, The ROI violation will be bounded by one of these: $M\cdot T^\alpha$, $T \cdot 2 \sqrt{\alpha \cdot \frac{\log T}{2M}}$, and $T^{1-\alpha}$. The minimum bound will be obtained when
     \[
         O(M\cdot T^\alpha)=O\left(\frac{T}{\sqrt{M}}\right)=O(T^{1-\alpha})~.
     \]
     Thus, with choosing $M=T^{\sfrac{1}{2}}, K = T^{\sfrac{1}{4}}$, the ROI violation can be bounded as:
     \begin{align*}
         \TotalROIViolation( \mathrm{\Cref{alg:randomized_strategy_with_bandit_feedback}} \mid (\CDF, \valSeq^T))  & \le T^{\sfrac{3}{4}} + 2\sqrt{T}\log T + T^{\sfrac{3}{4}} \cdot 2 \sqrt{\alpha \cdot \log T} + T^{\sfrac{3}{4}} \\
          & = \widetilde O (T^{\sfrac{3}{4}})~. 
     \end{align*}
     Substitute $M=T^{\sfrac{1}{2}}, K = T^{\sfrac{1}{4}}$ into the regret bound:
     \begin{align*}
         \TotalRegret( \mathrm{\Cref{alg:randomized_strategy_with_bandit_feedback}} \mid (\CDF, \valSeq^T))  & \le M \cdot K + O(\sqrt{T}) + T \cdot 2 \eps \\
          & \le T^{\sfrac{3}{4}} + O(\sqrt{T}) +  T^{\sfrac{3}{4}} \cdot 2 \sqrt{\alpha \cdot \log T} \\
          & = \widetilde O (T^{\sfrac{3}{4}})~.
     \end{align*}
Thus, when a value sequence $\valSeq^T$ is i.i.d drawn from $\valDist^T$,  the expected Regret and ROI violation over the distribution $\CDF$ and $\valDist$ satisfies
\begin{align*}
    \TotalRegret(\mathrm{\Cref{alg:randomized_strategy_with_bandit_feedback}} \mid (\competingbidDist, \valDist))
    & = \expect[\valSeq^T \sim \valDist^T]{\TotalRegret(\mathrm{\Cref{alg:randomized_strategy_with_bandit_feedback}} \mid (\competingbidDist, \valSeq^T))} 
    = \widetilde O (T^{\sfrac{3}{4}})~,\\
    \TotalROIViolation(\mathrm{\Cref{alg:randomized_strategy_with_bandit_feedback}} \mid (\competingbidDist, \valDist))
    & = \expect[\valSeq^T \sim \valDist^T]{\TotalROIViolation(\mathrm{\Cref{alg:randomized_strategy_with_bandit_feedback}} \mid (\competingbidDist, \valSeq^T))}
    = \widetilde O (T^{\sfrac{3}{4}})~. \qedhere
\end{align*}
\end{proof}

\end{document}